\newtheorem{theorem}{Theorem}
\newtheorem{proposition}[theorem]{Proposition}
\newtheorem{remark}[theorem]{Remark}
\newtheorem{lemma}[theorem]{Lemma}
\newlength{\figurewidth}
\newlength{\smallfigurewidth}
\begin{document}

\title
{\large
\textbf{Succinct Data Structure for Path Graphs}
}

\author{%
Girish Balakrishnan$^{\ast}$, Sankardeep Chakraborty$^{\dag}$, N S Narayanaswamy$^{\ast}$,\\ and Kunihiko Sadakane$^{\dag}$\\[0.5em]
{\small\begin{minipage}{\linewidth}\begin{center}
\begin{tabular}{ccc}
$^{\ast}$Indian Institute of Technology Madras, & \hspace*{0.5in} & $^{\dag}$University of Tokyo, \\
Chennai, India && Tokyo, Japan\\
\url{girishb@cse.iitm.ac.in} && \url{sankardeep.chakraborty@gmail.com}\\
\url{swamy@cse.iitm.ac.in} && \url{sada@mist.i.u-tokyo.ac.jp}

\end{tabular}
\end{center}\end{minipage}}
}

\maketitle
\thispagestyle{empty}

\begin{abstract}
We consider the problem of designing a succinct data structure for {\it path graphs} (which are a proper subclass of chordal graphs and a proper superclass of interval graphs) on $n$ vertices while supporting degree, adjacency, and neighborhood queries efficiently. We provide the following two solutions for this problem: 
\begin{enumerate}
    \itemsep0em 
    \item an $n \log n+o(n \log n)$-bit succinct data structure that supports adjacency query in $O(\log n)$ time, neighborhood query in $O(d \log n)$ time   and finally, degree query in $\min\{O(\log^2 n), O(d \log n)\}$ where $d$ is the degree of the queried vertex.
    \item an $O(n \log^2 n)$-bit space-efficient data structure that supports adjacency and degree queries in $O(1)$ time, and the neighborhood query in $O(d)$ time where $d$ is the degree of the queried vertex.
\end{enumerate}
Central to our data structures is the usage of the classical heavy path decomposition by Sleator and Tarjan~\cite{ST}, followed by a careful bookkeeping using an orthogonal range search data structure using wavelet trees~\cite{Makinen2007} among others, which maybe of independent interest for designing succinct data structures for other graph classes.
\end{abstract}

\section{Introduction}
An \textit{intersection graph} $G=(V,E)$ is an undirected graph whose vertices are mapped by $f$ to a family of sets $F$ such that vertex $a_1$ is adjacent to $a_2$ in $G$ if and only if $f(a_1) \cap f(a_2) \neq \phi$. Based on the family of sets $F$ we get different graph classes. For instance, if $F$ is the set of intervals on the real number line, then we get \textit{interval graphs}. Yet another example is \textit{chordal graphs}, defined as the intersection graph of sub-trees of a tree. \textit{Path graphs} is the class of graphs obtained when $F$ is the set of paths, $P_1,\ldots,P_n$ in a tree $T$ such that two paths intersect if and only if the corresponding vertices are adjacent. 
It is well-known that the class of path graphs is a proper subclass of chordal graphs and a proper superclass of interval graphs~\cite{agtpg}.

In this work, we address the problem of designing a succinct data structure for the class of path graphs so that basic navigational queries such as degree, adjacency, and neighborhood can be answered efficiently. Formally, given a set $T$ consisting of combinatorial objects with a certain property, our goal is to store any arbitrary member $x \in T$ using the information-theoretic minimum of $\log(|T|)+o(\log(|T|))$ bits (throughout this paper, $\log$ denotes the logarithm to the base $2$) while still being able to support the queries efficiently on $x$. Recently, Acan et al.~\cite{HSSS} showed that the information-theoretic lower bound for representing unlabeled interval graphs with $n$ vertices is at least $n \log n$ bits, and as path graphs are a proper superclass of interval graphs, this lower bound also holds true for path graphs. Interestingly, we manage to construct an $n \log n + o(n\log n)$-bit data structure for representing path graphs matching this information-theoretic lower bound, thus, obtaining succinct data structure for path graphs for the first time in literature. This is the main contribution of this work. We leave the question  of whether path graphs are only a constant factor larger in size than the class of interval graphs as an open problem.

\noindent
{\bf Previous Related Work.} There already exists a huge body of work on representing various classes of graphs succinctly. A partial list of such special graph classes would be trees~\cite{ChakrabortyS19,MunroR01}, planar graphs~\cite{AleardiDS08}, partial $k$-tree~\cite{FarzanK14}, and arbitrary graphs~\cite{FarzanM13}. Recent results have appeared in literature for intersection graphs like interval graphs due to Acan et al~\cite{HSSS} and chordal graphs due to Munro and Wu\cite{Munro_Wu}. For interval graphs,~\cite{HSSS} gives an $n \log n + O(n)$ bit succinct data structure that supports degree and adjacency queries in $O(1)$ time while neighborhood query in constant time per neighbour. In the case of chordal graphs, \cite{Munro_Wu} gives an $n^2/4+o(n^2)$ bit succinct  data structure that supports adjacency query in $f(n)$ time where $f(n) \in \omega(1)$, degree of a vertex in $O(1)$ time and neighborhood in $(f(n))^2$ time per neighbour. The main motivation behind our work stems from these two above-mentioned works. Since path graphs is a strict subclass of chordal graphs and a strict superclass of interval graphs it would be interesting to see whether one can design such an efficient data structure for path graphs as well.

\noindent
{\bf Our Results.} Before we get to our results, note the following terminology for graph $G=(V,E)$:
\begin{itemize}
    \itemsep0em 
    \item for $u, v \in V$, adjacency query checks if $\{u,v\} \in E$,
    \item for $u \in V$, the neighborhood query returns all the vertices that are adjacent to $u$ in $G$, and
    \item for $u \in V$, the degree query returns the number of vertices adjacent to $u$ in $G$.
\end{itemize}

Our primary result in this work is an $n \log n+  o(n\log n)$-bit succinct representation for unlabelled connected path graphs. It is obtained from the  clique tree representation $(T,P_1,\ldots,P_n)$~\cite{Gavril_path}~\cite{MonmaW86} on the input path graph.   Here $T$ is the clique tree~\cite{MonmaW86} and $P_i,1 \le i \le n$, are the paths in it. We then store $T$ succinctly along with the end-points of the paths $P_i$ in it. 
 Formally we have the following result.
\begin{theorem}
\label{thm:succthm}
Path graphs have an $n \log n + o(n\log n)$-bit succinct representation. The succinct representation constructed from the clique tree representation supports for a vertex $u$ the following  queries:
\begin{enumerate}
    \itemsep0em 
    \item adjacency query in $O(\log n)$ time,
    \item the neighborhood query in 
$O(d_u \log n)$ time, and 
    \item the degree query in  $\min\{O(\log^2 n), O(d_u \log n)\}$ time
\end{enumerate}
where $d_u$ is the degree of vertex $u$.
\end{theorem}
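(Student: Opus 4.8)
The plan is to start from the clique tree representation $(T,P_1,\dots,P_n)$ guaranteed by~\cite{Gavril_path,MonmaW86}, root $T$ at an arbitrary node, and store $T$ with a standard pointer-free succinct tree encoding in $2k+o(k)=O(n)$ bits, where $k\le n$ is the number of maximal cliques, supporting $\mathrm{depth}$, ancestor, level-ancestor and $\mathrm{lca}$ in $O(1)$ time. Each vertex $u$ then becomes a path $P_u$ in $T$, determined by its two endpoints, so I would reduce the entire graph to $T$ together with $n$ endpoint-pairs (vertices being indexed internally by their rank in the point order below, so that a query vertex $u$ maps to the $u$-th pair). The first goal is to store these pairs in $n\log n+o(n\log n)$ bits, i.e.\ to beat the naive $2n\log n$. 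I would view each $P_u$ as a point $(x_u,y_u)$ with $x_u\le y_u$ in a $[k]\times[k]$ grid and store the multiset of points as (i) a monotone histogram bit-vector over the $x$-coordinates in $O(k+n)=O(n)$ bits, and (ii) the $y$-coordinates listed in nondecreasing-$x$ order inside a wavelet tree over alphabet $[k]$~\cite{Makinen2007}, costing $n\log k+o(n\log k)\le n\log n+o(n\log n)$ bits. The factor of two is saved precisely because the $x$-coordinates are monotone and therefore essentially free. Since $T$ plus the endpoint-pairs reconstructs every path (the unique tree path between two nodes) and hence every adjacency, this is a valid encoding of total size $n\log n+o(n\log n)$.

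For adjacency of $u,v$ I would first recover the two endpoint-pairs: the $x$-coordinate via rank/select on the histogram in $O(1)$, and the $y$-coordinate via wavelet-tree access in $O(\log n)$. Given $P_u=(a,b)$ and $P_v=(c,d)$ as node pairs in $T$, two paths in a tree intersect iff a constant number of $\mathrm{lca}$/ancestor relations hold (each apex lying on the other path, tested through $\mathrm{lca}$ and ancestor checks), which the succinct tree answers in $O(1)$. Hence adjacency is dominated by the two wavelet-tree accesses and costs $O(\log n)$.

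For neighborhood and degree I would bring in the Sleator--Tarjan heavy path decomposition~\cite{ST}, choosing the grid coordinates so that position along each heavy path is an interval, and decomposing the query path $P_u$ into $O(\log n)$ maximal segments, each contained in a single heavy path. A path $Q$ meets $P_u$ iff $Q$ overlaps one of these $O(\log n)$ segments; and because $Q\cap h$ is itself a contiguous sub-path of any heavy path $h$, each ``overlaps this segment'' condition becomes an orthogonal (interval-overlap) range predicate on the stored point. Thus neighborhood reduces to $O(\log n)$ orthogonal range-reporting queries and degree to $O(\log n)$ range-counting queries on the wavelet tree: reporting costs $O(\log n)$ per output point and counting $O(\log n)$ per query, giving $O(d_u\log n)$ for neighborhood and $O(\log^2 n)$ for degree, with the $\min$ obtained by also allowing degree to run the enumeration.

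The hard part will be carrying out the previous paragraph on the \emph{single} global $n$-point set without replicating any path: storing each path once per heavy path it crosses would cost $O(n\log^2 n)$ bits and would only yield the second, space-efficient structure. So the crux is a careful choice of coordinates from the heavy path decomposition, together with a case analysis on the relative positions of the two paths' apexes, that expresses ``$Q$ intersects segment $s_j$ of $P_u$'' as an orthogonal range in exactly the coordinates in which the points were stored, and that makes the $O(\log n)$ ranges report each neighbour \emph{once} (for instance by charging each intersecting path to the unique topmost segment of $P_u$ it meets) so that the degree count is exact rather than overcounting paths that meet several segments. Obtaining this overlap-as-range characterization and the accompanying de-duplication, while keeping everything inside one wavelet tree, is where I expect the real work to lie.
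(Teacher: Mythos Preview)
Your proposal is correct and matches the paper's approach almost point for point: the clique tree stored in $O(n)$ bits, one endpoint coordinate encoded monotonically in $O(n)$ bits and the other in a wavelet tree for $n\log n+o(n\log n)$ bits total, HPD with pre-order labeling so that heavy paths become intervals, neighborhood and degree via $O(\log n)$ orthogonal range report/count queries, and de-duplication by charging each neighbour to a unique heavy sub-path (the paper's function $\alpha$). The only notable deviation is adjacency: you retrieve both endpoint pairs and settle intersection with $O(1)$ lca/ancestor tests, whereas the paper reuses the HPD machinery and checks $Q$ against each of the $O(\log n)$ heavy sub-paths of $P_u$; your route is simpler and equally valid, the paper's is just a by-product of sharing code with the neighborhood query.
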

The central tool that we use in obtaining the above succinct data structure result and the space-efficient data structure is heavy path decomposition (HPD)~\cite{ST} performed on the clique tree $T$. The HPD when performed on the clique tree $T$ gives  the heavy path tree $\mathcal{T}$; explained in Section \ref{sec:hpd}. Each node of the heavy path tree $\mathcal{T}$ corresponds to a heavy path of the clique tree $T$. The property that heavy path tree $\mathcal{T}$ has at most $\lceil \log n \rceil$ levels helps us achieve the query times of Theorem \ref{thm:succthm}. Additionally, we observe that the intersection of the paths $P_1, \ldots, P_n$ with each heavy path  defines a natural interval graph giving us the space-efficient data structure for path graphs. Further we observe that the union of these interval graphs corresponding to nodes in the same level of the heavy path tree is also an interval graph.
To obtain the space-efficient data structure, we store the interval graphs at each level of the heavy path tree using the results from \cite{HSSS} and organize them into at most $\log n$ levels. Even though we use additional $\log n$ factor storage in the space-efficient data structure over the succinct representation, we can respond to all the queries more efficiently. This is our second result. 


\begin{theorem}
\label{thm:nlog2n}
There exists a space-efficient representation for path graphs using $O(n \log^2 n)$ bits. The representation supports the following queries for a vertex $u$:
\begin{enumerate}
    \itemsep0em 
    \item the adjacency and degree queries in $O(1)$ time, 
    \item the neighborhood query in $O(d_u)$ time where $d_u$ is the degree of the vertex $u$.
\end{enumerate}
\end{theorem}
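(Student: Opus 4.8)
The plan is to attach to each level of the heavy path tree $\mathcal{T}$ an interval graph that records how the paths $P_1,\dots,P_n$ cut across the heavy paths of $T$, and then to answer every query by reducing it to a constant number of interval-graph queries handled by the structure of~\cite{HSSS}. For the construction I would first run the heavy path decomposition on the clique tree $T$ to obtain $\mathcal{T}$, recording for every path $P_i$ its peak $c_i$ (the node of $P_i$ closest to the root of $T$) together with $\mathrm{depth}(c_i)$ and the level of $c_i$ in $\mathcal{T}$. For a single heavy path $h$, every $P_i$ meets $h$ in a contiguous subpath, i.e.\ in an interval of $h$, so the collection $\{P_i\cap h : P_i\cap h\neq\varnothing\}$ is a simple interval graph $IG_h$. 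Since the heavy paths occupying one level of $\mathcal{T}$ are vertex-disjoint in $T$, laying them end to end on the line shows that the level-$\ell$ graph --- the disjoint union of the $IG_h$ over heavy paths $h$ at level $\ell$ --- is again an interval graph, as observed above. I would store each of the at most $\lceil\log n\rceil$ level graphs with the data structure of~\cite{HSSS}, and additionally keep, for every vertex, a table of its interval identifier on each level it participates in. Because a path crosses at most $2\lceil\log n\rceil$ light edges it touches $O(\log n)$ heavy paths; hence the total number of intervals (incidences) is $O(n\log n)$, the per-level structures occupy $\sum_\ell (m_\ell\log m_\ell+O(m_\ell)) = O(n\log^2 n)$ bits, and the identifier tables fit in the same bound.

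The crucial structural fact I would establish is a canonical-witness lemma: if $P_i$ and $P_j$ intersect, their intersection is a subpath of $T$ whose topmost node equals the deeper of the two peaks $c_i,c_j$; consequently $P_i$ and $P_j$ overlap on the single heavy path $\mathrm{hp}(c)$, where $c$ is that deeper peak, and this heavy path is determined in $O(1)$ from the stored peak depths. Adjacency of $u=P_i$ and $v=P_j$ then amounts to locating $c$, looking up the two interval identifiers on the level of $c$, and asking the~\cite{HSSS} structure of that level whether the two intervals overlap, all in $O(1)$; crucially, if they do not overlap on $\mathrm{hp}(c)$ they cannot intersect anywhere, so one test is both sound and complete. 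For the degree query I would simply store the degree sequence explicitly, which costs $O(n\log n)$ bits, well within the budget, and answers in $O(1)$. For the neighborhood query I would, for each heavy path on which $P_i$ actually witnesses a neighbor, enumerate the intervals overlapping $P_i$'s interval there using the constant-per-neighbor primitive of~\cite{HSSS}, reporting each $P_j$ only at the heavy path carrying the top of $P_i\cap P_j$ so that no neighbor is emitted twice.

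The main obstacle is exactly this de-duplication together with matching the promised $O(d_u)$ bound. A neighbor $P_j$ may overlap $P_i$ on several heavy paths, and $P_i$ may touch $\Theta(\log n)$ heavy paths while having very few neighbors, so a naive scan over all heavy paths of $P_i$ would give $O(d_u+\log n)$ rather than $O(d_u)$. Making the canonical-witness assignment do double duty --- both to collapse adjacency to one heavy path and to charge each reported neighbor to a unique heavy path --- is what lets me visit only the heavy paths that genuinely contribute a neighbor (which I would record per vertex, their number being at most $d_u$), yielding exactly $O(d_u)$ time. I expect proving the canonical-witness lemma and verifying that it simultaneously underlies correctness of adjacency, rules out double counting, and bounds the neighborhood enumeration to be the technical heart of the argument.
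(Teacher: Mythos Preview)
Your overall framework---heavy path decomposition of the clique tree, one interval graph $U_\ell$ per level of $\mathcal{T}$ stored via~\cite{HSSS}, and a per-vertex table of interval identifiers---coincides with the paper's construction, and your space accounting is correct. Your adjacency test via the deeper of the two peaks $c_i,c_j$ is a clean variant of the paper's ``minimum common level'' test in Algorithm~\ref{alg:sedadjacency}; both are sound, though note that the shallower-peak path may contribute \emph{two} intervals at the target level (Lemma~\ref{lem:2nvertices}), so up to two overlap checks are needed rather than one. Storing degrees explicitly is also what the paper does.

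The genuine gap is the neighbourhood query. Your canonical-witness assignment does give every neighbour $P_j$ a unique heavy path at which it \emph{should} be reported, but the primitive you invoke---the~\cite{HSSS} neighbourhood enumeration on a level---returns \emph{all} intervals overlapping $P_i$'s interval there, not only the canonical ones, and you give no mechanism to skip the rest. Post-hoc filtering does not help: suppose $P_i$ and $d$ neighbours $P_j^{(1)},\dots,P_j^{(d)}$ all have their peak at the root and descend together through $\Theta(\log n)$ heavy paths, and in addition one path $Q_\ell$ is peaked on the heavy path at each level $\ell\ge 1$. Then every level $\ell\ge 1$ is ``contributing'' (because of $Q_\ell$), yet at each such level the interval query still enumerates all $d$ root-peaked $P_j^{(k)}$ before they are discarded as non-canonical. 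Total work is $\Theta(d\log n)$ while $d_u=d+\Theta(\log n)$. Recording per vertex which heavy paths contribute at least one canonical neighbour---your stated fix---prunes only the empty levels, not these duplicates.

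The paper sidesteps this by abandoning the heavy-path-by-heavy-path scan for neighbourhood. It walks $P_i$ \emph{node by node in $T$}: at each clique-tree node $a\in V(P_i)$ it reads a precomputed list $A[a]$ of all paths whose \texttt{lca} equals $a$ (automatically neighbours, and pairwise distinct across nodes since a path has a single \texttt{lca}); it issues a \emph{single} interval-graph neighbourhood query, at the level of $c_i$ only; and it handles neighbours that share a light edge with $P_i$ through the light-edge table $H$ and \texttt{getDistinctPaths}. The bound that makes the node walk itself cost $O(d_u)$ is Lemma~\ref{lem:onepathpermaxclique}: every clique-tree node is the \texttt{lca} of some path, so each node of $P_i$ yields at least one distinct neighbour and hence $|V(P_i)|\le d_u+1$. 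This node-granular enumeration, rather than a level-granular one, is the missing ingredient in your proposal.
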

\noindent
The increased efficiency of the space-efficient data structure comes at the expense of increased space which arises due to the duplication of edges of path graph among the $\log n$ interval graphs. Another difference is that the succinct data structure  performs orthogonal range search to implement the queries while the space-efficient data structure delegates the queries to those of the underlying interval graph as implemented in~\cite{HSSS}. 

All the preliminary terminology and concepts required for rest of the sections are  in Section~\ref{sec:prelims} and~\ref{sec:proprocessingT}. In Section~\ref{sec:succrep}, a succinct representation for path graphs is presented and Section \ref{sec:seds} describes the space-efficient data structure.

\section{Preliminaries}
\label{sec:prelims}
For a graph $G$, through out the paper we denote the set of vertices and edges by $V(G)$ and $E(G)$, respectively. Familiarity with basic graph theory as in~\cite{gtdr} and graph algorithms as given in~\cite{CLRS} is expected. 

\subsection{Path Graphs and Its Properties}

A graph $G$ is a \textit{path graph} if there exists a tree $T$ and family of paths $\mathcal{P}=\{P_1,\ldots,P_n\}$ in $T$ such that $G$ is the intersection graph of paths in $\mathcal{P}$. $G$ is said to have the representation $(T,\mathcal{P})$; see Figure~\ref{fig:pathgraph}. A vertex $a \in V(G)$ is \textit{simplicial} if the set of vertices adjacent to $a$, denoted $N(a)$, induces a complete sub-graph of $G$ \cite{agtpg}. The ordering $\rho=[a_1,\ldots,a_n]$ of $V(G)$ is called a \textit{perfect elimination scheme} if for all $i, X_i=\{ a_j \in N(a_i) \mid j > i,\}$ is complete. Every path graph has a simplicial vertex and a perfect elimination scheme. It is well known that any simplicial vertex can start a perfect elimination scheme; see Theorem 4.1 and Lemma 4.2 of~\cite{agtpg} for more details. Let $\mathcal{C}$ be the set of maximal cliques of $G$ and for every $a \in V(G)$ let $\mathcal{C}_a=\{ C| C \in \mathcal{C} \text{ and } a \in V(C)\}$. Consider a tree $T$ with $V(T)=\mathcal{C}$ such that for every $a \in V(G)$, $\mathcal{C}_a$ induces a sub-tree $T_a$ of $T$. $G$ is a choral graph if it is the intersection graph of set of such induced sub-trees. For chordal graphs such a tree $T$ is called the \textit{clique tree} of $G$~\cite{agtpg}. Clique tree can be computed in polynomial time~\cite{Gavril_path}. The following is a characterisation of path graphs as a sub-class of chordal graphs~\cite{Gavril_path}\cite{MonmaW86}.

\begin{figure}[ht]
\centering
\includegraphics[width=1\textwidth]{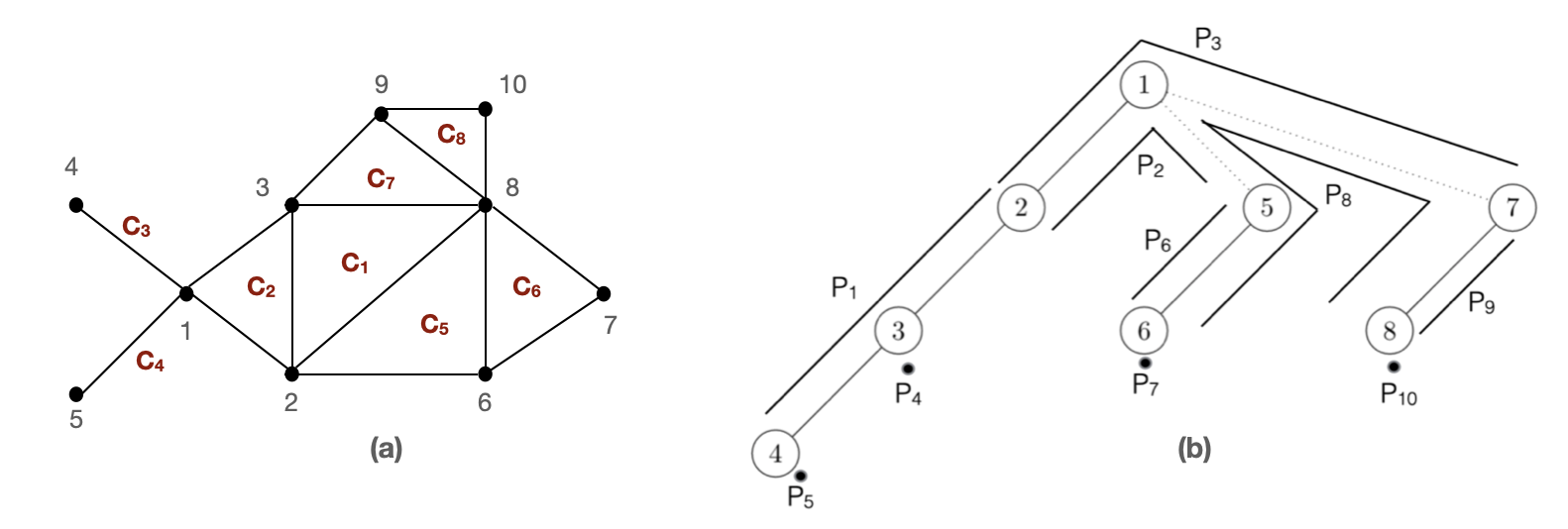}
\caption{(a) Path graph $G$ with maximal cliques $\mathcal{C}=\{C_1,\ldots,C_8\}$, (b) Clique tree representation of $G$ where each node $i$ corresponds to $C_i \in \mathcal{C}$. For vertices $u,v \in V(G)$ we have paths $P_u, P_v$ such that $V(P_u) \cap V(P_v) \ne \phi$ if and  only if $\{u,v\} \in E(G)$. The clique  tree shown  here  is pre-processed as explained in Section~\ref{sec:proprocessingT}.}
\label{fig:pathgraph}
\end{figure}

\begin{theorem}
\label{thm:pg}
The graph $G$ is a path graph if and only if there exists a clique tree $T$, such that for every $v \in V(G)$, the set of maximal cliques containing $v$ form a path in $T$.
\end{theorem}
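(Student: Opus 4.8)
This is the classical characterization of Gavril and of Monma--Wu, and the plan is to prove the two implications separately, with the forward (``only if'') direction carrying essentially all of the work.

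For the ``if'' direction, I would assume $T$ is a clique tree of $G$ in which, for every $v$, the set $\mathcal{C}_v$ induces a path $P_v$ of $T$. In \emph{any} graph two vertices are adjacent exactly when they lie in a common maximal clique, so $\{u,v\}\in E(G)$ iff $\mathcal{C}_u\cap\mathcal{C}_v\neq\phi$; and since $P_v$ is by definition the subtree $T_v$ induced by $\mathcal{C}_v$, this is the same as $V(P_u)\cap V(P_v)\neq\phi$. Hence $G$ is precisely the intersection graph of the family of paths $\{P_v\}_{v\in V(G)}$ in the tree $T$, i.e.\ $G$ is a path graph. This direction is immediate from the defining property of a clique tree and needs no further construction.

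For the ``only if'' direction I would start from an arbitrary path representation $(H,\{Q_v\})$ of $G$, where each $Q_v$ is a path in a tree $H$; I write $H$ and $Q_v$ to keep them distinct from the clique tree $T$ and paths $P_v$ to be constructed. For a node $t\in V(H)$ set $K_t:=\{\,v : t\in V(Q_v)\,\}$; since the paths through $t$ pairwise intersect, each $K_t$ is a clique of $G$. I would then record three facts. First, by the Helly property for subtrees (in particular paths) of a tree, the paths of any clique share a common node, so every maximal clique of $G$ equals $K_t$ for some $t$, and every maximal $K_t$ is a maximal clique. Second, for each vertex $v$ the support $\{\,t : v\in K_t\,\}=V(Q_v)$ is connected in $H$. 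Third, for a fixed maximal clique $C$ the set $\{\,t : K_t=C\,\}$ is connected: if $t,t'$ both carry $C$ and $s$ lies on the $t$--$t'$ path of $H$, then every $v\in C$ has $t,t'\in V(Q_v)$, hence $s\in V(Q_v)$ by connectivity of $Q_v$, so $C\subseteq K_s$ and, by maximality, $K_s=C$.

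The plan is then to obtain $T$ as a contraction of $H$ supported on the maximal cliques: contract each connected block $\{\,t:K_t=C\,\}$ to a single node labelled $C$, and absorb the remaining (non-maximal) nodes into neighbouring blocks, so that the node set of the resulting tree $T$ is in bijection with $\mathcal{C}$. Because $H$ is a tree the quotient $T$ is again a tree, and the image of each path $Q_v$ is again a path: in a tree, two nodes of $Q_v$ are identified only when the entire unique sub-path between them is contracted, so contraction can only shorten $Q_v$ to a path and never branch it. Taking $P_v$ to be this image, I would finally verify that $T$ is a genuine clique tree, i.e.\ that each $\mathcal{C}_v$ induces the connected subtree $P_v$ (the running-intersection property), which follows from the connectivity of $V(Q_v)$ together with the construction. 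The hard part will be exactly this contraction step: one must absorb the non-maximal host nodes into the maximal blocks \emph{without} creating a cycle or merging two distinct maximal cliques, while simultaneously preserving both the clique-tree property and the ``each $\mathcal{C}_v$ is a path'' property. The facts that $\{t:K_t=C\}$ is connected and that tree contraction sends paths to paths are what make this possible, so the heart of the argument is checking that a suitable suppression of the non-maximal nodes can always be carried out on $H$; the two easy directions then bracket this construction.
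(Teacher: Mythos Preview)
The paper does not prove this theorem. It is stated in the preliminaries section as a known characterization of path graphs, attributed to Gavril and to Monma--Wei via citations (\texttt{\textbackslash cite\{Gavril\_path\}\textbackslash cite\{MonmaW86\}}), and then used as background for the data-structure constructions that follow. There is therefore no ``paper's own proof'' to compare your proposal against.

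That said, your outline is the standard route to this result and is essentially correct. The ``if'' direction is immediate as you say. For the ``only if'' direction, your three recorded facts (Helly for paths in a tree gives that every maximal clique is some $K_t$; the support of each vertex is a path; and for each maximal clique $C$ the set $\{t:K_t=C\}$ is connected) are exactly what one needs. The one place to be careful is the step you flag yourself: when you absorb the non-maximal $K_t$ nodes into neighbouring maximal blocks, you must do so in a way that keeps each $\mathcal{C}_v$ a \emph{path} rather than merely a subtree. The clean way to handle this is not a free contraction but rather: pick one representative node $t_C$ for each maximal clique $C$, and observe that for any edge $\{s,t\}$ of $H$ one of $K_s,K_t$ contains the other (otherwise neither is maximal along that edge and one can walk to a strictly larger clique), so every non-maximal node can be suppressed by retracting it toward the unique neighbour whose clique contains it. This retraction is along the tree $H$ and hence sends each path $Q_v$ to a path. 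If you make that absorption rule explicit, the argument goes through without gaps.
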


\noindent
In this paper, we are concerned with the construction of a succinct representation for path graphs and the construction mechanism takes as input, $(T,\mathcal{P})$. Also, apart from clique tree $T$ we will introduce the heavy path tree $\mathcal{T}$ in the next section. Elements of $V(T)$ and $V(\mathcal{T})$ will be henceforth referred to as nodes of $T$ and $\mathcal{T}$, respectively whereas for any graph $G$, elements of $V(G)$ will be referred to as its vertices. The following is known from~\cite{Gavril_path}.

\begin{remark}
\label{rem:maxn}
The number of maximal cliques in a path graph $G$ with $n$ vertices is at most $n$.
\end{remark}

\subsection{Heavy Path Decomposition}
\label{sec:hpd}
Heavy path decomposition (HPD) was introduced in \cite{ST} and used in \cite{GO} and \cite{FGSVJ} for rooted trees. In the heavy path decomposition for a rooted tree $T$, each internal node $u$ selects an edge $(u,v)$ such that the child $v$ has the maximum number of descendants among the children of $u$. In the case of a tie among two children of $u$ pick any one arbitrarily. The edge $(u,v)$ is called a \textit{heavy edge}. Thus, each internal node of $T$ selects exactly one edge as heavy edge.  Further, it is known that each vertex has at most two heavy edges incident on it, one with its parent and second with one of its children.  An edge that is not chosen as a heavy edge by any internal node is called a \textit{light edge}. Consider the forest of paths obtained by removing light edges from $T$. We refer to each path in this forest as  a \textit{heavy path}. The heavy path decomposition of $T$ partitions the nodes of $T$ into the set of heavy paths denoted by $\mathcal{H}$. Also, it partitions the edges of $T$ into heavy and light edges; see Figure \ref{fig:hpd}. 

\begin{figure}[ht]
\centering
\includegraphics[width=1\textwidth]{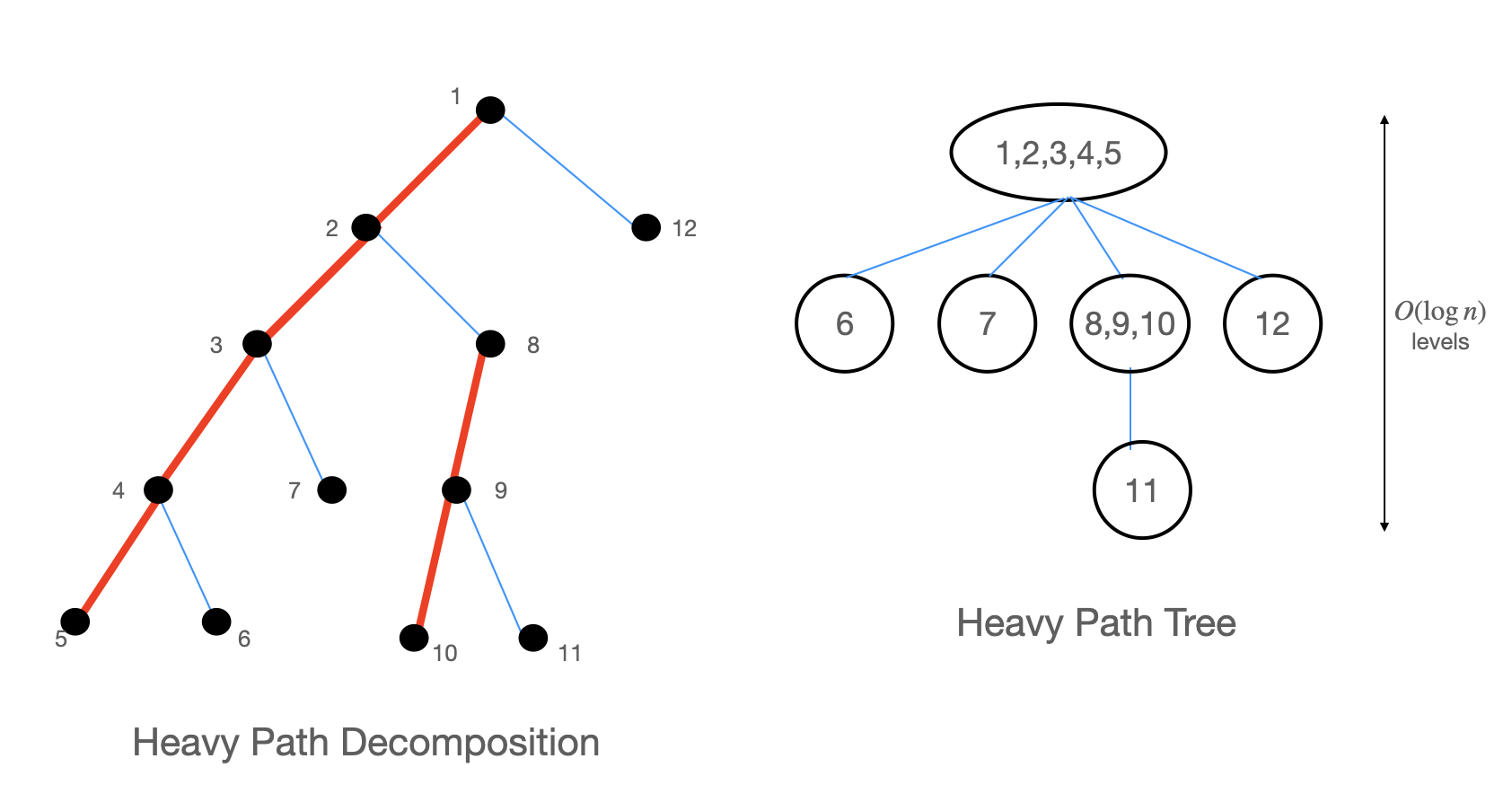}
\caption{Heavy path decomposition of a rooted tree with $n=12$ nodes. At node 1, $\{1,2\}$ is picked as the heavy  edge  as 2 is the child of 1 with maximum number of  descendants. Edge $\{1,12\}$ is a light edge. Path $\{1,2,3,4,5\}$ of the clique tree is compressed as a single node in the heavy path tree $\mathcal{T}$. As a light edge connects a node to a sub-tree that is at least halved in size, the heavy path tree will have at most $\log n$ levels. Thick red lines indicate heavy edges and thin blue lines indicate light edges respectively. }
\label{fig:hpd}
\end{figure}

\begin{remark}
\label{rem:2types}
The heavy paths in ${\cal H}$ are of two types: those that contain at least one heavy edge and those which do not contain a heavy edge.  A heavy path which does not contain a heavy edge is a leaf node whose incident edge in $T$ is a light edge. For instance, in Figure~\ref{fig:hpd} heavy path $\{1,2,3,4,5\}$ is of former type whereas heavy path $\{6\}$ is of later type.
\end{remark}

Using the heavy path decomposition  of $T$ we define a tree which we call the \textit{heavy path tree} denoted by $\mathcal{T}$.  Let $\Phi:\mathcal{H} \rightarrow V(\mathcal{T})$ be a bijection such that for $u \in V(H_1)$ and $v \in V(H_2), e=\{u,v\}$ is a light edge  if and  only if there exists an edge $\{\Phi(H_1), \Phi(H_2)\}$ in $\mathcal{T}$. Subsequently, whenever $H_1$ and $H_2$ satisfy this property we will call $H_1$ and $H_2$ \textit{light edge separable} heavy  paths. In  other words, $H_1,H_2 \in \mathcal{H}$ are light edge separable heavy  paths in $T$ if and  only if $\Phi(H_1)$ and $\Phi(H_2)$ are adjacent in $\mathcal{T}$. Further, we refer to the edge $\{\Phi(H_1),\Phi(H_2)\} \in E({\cal T})$ using the light edge between and $H_1$ and $H_2$, which in this case is $e$.  The heavy path that contains the root of $T$ is the root of $\mathcal{T}$. The level of the root node is 0, and each other node has a level which is its distance from the root. A sub-path  of  a heavy path will be referred to as \textit{heavy sub-path}. The following remark is important for the rest of the  paper. The following lemmata are well-known~\cite{GO}, and we extensively use them here.

\begin{lemma}
\label{lem:hpd_prop}
The number of levels in $\mathcal{T}$ is at most $\lceil \log n \rceil$. Further, a path in $\mathcal{T}$ has at most $2 \lceil \log n \rceil$ edges.
\end{lemma}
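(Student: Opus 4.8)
The plan is to exploit the defining ``halving'' property of light edges and translate it into a bound on the number of light edges encountered along any root-to-node path of the clique tree $T$, since this number is exactly the level of the corresponding node in $\mathcal{T}$.

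First I would establish the key structural fact: if $(u,v)$ is a light edge of $T$ with $v$ a child of $u$, then the subtree $T_v$ rooted at $v$ satisfies $|T_v| < |T_u|/2$, where $|T_x|$ denotes the number of nodes in the subtree rooted at $x$. To see this, let $w$ be the heavy child of $u$, so that by the choice of the heavy edge $|T_w| \ge |T_v|$. Since $T_u$ consists of $u$ itself together with the disjoint subtrees $T_w$ and $T_v$ (among possibly others), we get $|T_u| \ge 1 + |T_w| + |T_v| \ge 1 + 2|T_v| > 2|T_v|$, which is the claim.

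Next I would bound the depth. Consider the path in $T$ from the root $r$ to an arbitrary node $w$, and let $\ell$ be the number of light edges on it; by the definition of $\Phi$ and the adjacency rule for $\mathcal{T}$, this $\ell$ is precisely the level of $\Phi(H)$, where $H$ is the heavy path containing $w$. Let $v_1, \dots, v_\ell$ be the lower endpoints of these light edges, listed top-down. Each heavy edge leaves the subtree size weakly decreasing and each light edge strictly more than halves it by the fact above, so an easy induction gives $|T_{v_i}| < n/2^{i}$. Taking $i = \ell$ and using $|T_{v_\ell}| \ge 1$ yields $2^{\ell} < n$, i.e. $\ell < \log n$, hence $\ell \le \lceil \log n \rceil - 1$. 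Thus every node of $\mathcal{T}$ has level at most $\lceil \log n \rceil - 1$, so the levels range over $0, 1, \dots, \lceil \log n \rceil - 1$ and their number is at most $\lceil \log n \rceil$.

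Finally, for the second claim I would route an arbitrary path of $\mathcal{T}$ between nodes $a$ and $b$ through their lowest common ancestor $c$. The sub-path from $a$ up to $c$ has exactly $\mathrm{level}(a) - \mathrm{level}(c)$ edges and the sub-path from $c$ down to $b$ has $\mathrm{level}(b) - \mathrm{level}(c)$ edges; each is at most the maximum level, which the depth bound caps at $\lceil \log n \rceil$. Summing gives at most $2\lceil \log n \rceil$ edges. I expect the only mildly delicate points to be the bookkeeping in the halving induction and the off-by-one in converting ``$\ell < \log n$'' into the stated level and edge counts; everything else is the standard Sleator--Tarjan argument.
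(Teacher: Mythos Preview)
Your argument is correct and is precisely the standard Sleator--Tarjan halving argument. Note that the paper does not actually supply its own proof of this lemma: it is stated as a well-known fact with a citation to~\cite{GO}, so there is no in-paper proof to compare against. Your write-up would serve perfectly well as the omitted proof; the only cosmetic point is that your level bound $\ell \le \lceil \log n\rceil - 1$ is in fact slightly sharper than what the lemma requires, which makes the ``off-by-one'' worry you flag a non-issue.
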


\begin{lemma}
\label{lem:vparti}
For $u,v \in V(\mathcal{T}), V(\Phi^{-1}(v)) \cap V(\Phi^{-1}(u)) = \phi$ and $V(T)=\bigcup\limits_{v \in V(\mathcal{T})} V(\Phi^{-1}(v))$. In other words, the nodes of $T$ are partitioned among the nodes of $\mathcal{T}$.
\end{lemma}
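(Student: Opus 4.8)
The plan is to show that the lemma is essentially a reformulation of the defining property of the heavy path decomposition---namely that the heavy paths partition $V(T)$---transported across the bijection $\Phi$. First I would recall from Section~\ref{sec:hpd} that the heavy paths in $\mathcal{H}$ are exactly the connected components of the spanning forest $F = (V(T), E_h)$, where $E_h \subseteq E(T)$ is the set of heavy edges obtained by deleting all light edges from $T$. Since $F$ has vertex set $V(T)$, its connected components automatically form a partition of $V(T)$: every node of $T$ lies in exactly one component, possibly a singleton component consisting of a leaf whose incident edge is light (as with node $\{6\}$ in Figure~\ref{fig:hpd}). Hence any two distinct heavy paths $H_1, H_2 \in \mathcal{H}$ satisfy $V(H_1) \cap V(H_2) = \phi$, and $\bigcup_{H \in \mathcal{H}} V(H) = V(T)$.

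For completeness I would also verify that each component is genuinely a simple path, which is what makes the term \emph{heavy path} legitimate. This follows because in $F$ every node has at most two incident heavy edges---at most one joining it to its parent and at most one joining it to a child, as noted in Section~\ref{sec:hpd}---so every vertex of $F$ has degree at most two; combined with the fact that $F$ is acyclic (being a subgraph of the tree $T$), each component is a path. This observation is not strictly needed for the partition claim, but it confirms that $\mathcal{H}$ is well defined.

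Finally I would push the partition through $\Phi$. Since $\Phi : \mathcal{H} \to V(\mathcal{T})$ is a bijection, as $v$ ranges over $V(\mathcal{T})$ the preimage $\Phi^{-1}(v)$ ranges over all of $\mathcal{H}$, each heavy path occurring exactly once. Therefore the family $\{V(\Phi^{-1}(v)) : v \in V(\mathcal{T})\}$ coincides with $\{V(H) : H \in \mathcal{H}\}$. Translating the two established facts: for distinct $u, v \in V(\mathcal{T})$ the heavy paths $\Phi^{-1}(u)$ and $\Phi^{-1}(v)$ are distinct, so $V(\Phi^{-1}(u)) \cap V(\Phi^{-1}(v)) = \phi$; and the union over all $v \in V(\mathcal{T})$ equals the union over all $H \in \mathcal{H}$, which is $V(T)$. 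This gives both assertions of the lemma.

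The step I expect to require the most care is the first one---pinning down precisely that the heavy paths are the connected components of the heavy-edge forest---since the excerpt defines heavy paths operationally (by deleting light edges) rather than as components; once that identification is made, the remainder is bookkeeping with the bijection and the standard fact that connected components partition the vertex set, so I do not anticipate a genuine technical obstacle.
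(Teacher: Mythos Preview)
Your proof is correct. The paper itself does not supply a proof for Lemma~\ref{lem:vparti}; it is stated as well-known with a citation to~\cite{GO}, so there is no paper argument to compare against. Your write-up fills in exactly the details the paper elides: identifying the heavy paths as the connected components of the heavy-edge forest (hence a partition of $V(T)$) and transporting that partition through the bijection $\Phi$.
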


\begin{lemma}
\label{lem:pathpartition}
Let $P$ be a path in $T$. $P$ can be partitioned into heavy sub-paths $\Pi=\{\pi_1,\ldots,\pi_k\}, 1 \leq k \leq 2 \lceil \log n \rceil + 1$. 
\end{lemma}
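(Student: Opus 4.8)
The plan is to realize the partition geometrically: the path $P$ travels through the clique tree $T$, and as it does so it passes through a sequence of heavy paths, crossing a light edge each time it leaves one heavy path for the next. The heavy sub-paths $\pi_i$ will be exactly the maximal portions of $P$ that lie inside a single heavy path, and the bound on $k$ will follow by translating ``number of heavy paths traversed'' into ``length of a path in $\mathcal{T}$'' and invoking Lemma~\ref{lem:hpd_prop}.

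First I would establish the key structural fact that $P$ meets each heavy path in at most one contiguous stretch. By Lemma~\ref{lem:vparti} the nodes of $P$ are distributed among the heavy paths, so let $H$ be any heavy path and suppose $u_1, u_2 \in V(P) \cap V(H)$. Since $H$ is itself a path in $T$, it is a connected subgraph, and in the tree $T$ the unique $u_1$--$u_2$ path is precisely the sub-path of $H$ between $u_1$ and $u_2$. Because $P$ is a simple path in the tree $T$, the portion of $P$ between $u_1$ and $u_2$ must coincide with this unique path and therefore stays inside $H$. Hence $P$ cannot leave $H$ and return, so $V(P)\cap V(H)$ induces a single heavy sub-path. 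Listing the heavy paths in the order in which $P$ visits them gives a sequence $H_1,\dots,H_k$ of \emph{distinct} heavy paths, and setting $\pi_i = P \cap H_i$ yields the claimed partition of $P$ into heavy sub-paths; it only remains to bound $k$.

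Next I would argue that this sequence corresponds to a simple path in $\mathcal{T}$. Whenever $P$ passes from $\pi_i$ to $\pi_{i+1}$ it leaves $H_i$ and enters $H_{i+1}$, and the edge of $T$ it crosses at that moment cannot be a heavy edge (both endpoints of a heavy edge lie in the same heavy path), so it is a light edge. Thus $H_i$ and $H_{i+1}$ are light-edge separable, and by the definition of $\Phi$ the images $\Phi(H_i)$ and $\Phi(H_{i+1})$ are adjacent in $\mathcal{T}$. Since the $H_i$ are pairwise distinct, the vertices $\Phi(H_1),\dots,\Phi(H_k)$ are distinct, and being a walk with no repeated vertices they form a simple path in $\mathcal{T}$ with $k-1$ edges. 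Applying Lemma~\ref{lem:hpd_prop}, which bounds the number of edges on any path of $\mathcal{T}$ by $2\lceil\log n\rceil$, gives $k-1 \le 2\lceil\log n\rceil$, i.e.\ $k \le 2\lceil\log n\rceil + 1$; and $k\ge 1$ since $P$ is nonempty.

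The step I expect to require the most care is the contiguity argument of the first paragraph---making precise that $P$ visits each heavy path exactly once, so that the induced sequence in $\mathcal{T}$ is a genuine simple path rather than a walk that could backtrack. The bound $2\lceil\log n\rceil+1$ is tight only because we count vertices of this $\mathcal{T}$-path (one more than its edges), so I would be careful to keep the ``$+1$'' aligned with the vertex/edge bookkeeping when converting the edge bound of Lemma~\ref{lem:hpd_prop} into a bound on the number of pieces $\pi_i$.
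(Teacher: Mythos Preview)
Your proof is correct and follows essentially the same approach as the paper: decompose $P$ at its light edges into maximal heavy sub-paths, map the resulting sequence of heavy paths to a path in $\mathcal{T}$, and invoke Lemma~\ref{lem:hpd_prop} to bound its length. The paper's version is terser---it simply lists the light edges $f_1,\ldots,f_t$ along $P$ and writes $P=\pi_1 f_1 \pi_2 \cdots f_t \pi_{t+1}$---whereas you spell out the contiguity argument (that $P$ cannot leave a heavy path and return) to justify that the $H_i$ are distinct and the induced walk in $\mathcal{T}$ is simple; this extra care is welcome, since the paper leaves that point implicit.
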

\begin{proof}
Consider a path $P=v_1 e_2 v_2 e_2 \ldots v_{l-1} e_l v_l, 1 \le l \le n$ in $T$ where $v_1,\ldots,v_l \in  V(P)$ and $e_2,\ldots,e_l \in E(P)$.  Let $f_1, \ldots, f_{t}$ denote the $t \geq 0$  light edges in $P$ indexed in the order in which they occur in $P$ from $v_1$ to $v_l$.  Then, we consider $P=\pi_1 f_1 \pi_2 f_2 \ldots f_{t} \pi_{t+1}$ such that for each $1 \leq i \leq t+1$, $\pi_i$ is a maximal heavy sub-path in $P$. $f_i$ is the light edge between the last vertex of $\pi_i$ and the first vertex of $\pi_{i+1}$.  For each $1 \leq i \leq t+1$, let $H_i$ denote the heavy path in $\mathcal{H}$ such that $\pi_i$ is a heavy sub-path of $H_i$.  By our convention on the edge label in ${\cal T}$, the edge $\{\Phi(H_i),\Phi(H_{i+1})\}$ is considered to be $f_i$. Thus, for $P$ in $T$ we have a path $P'=\Phi(H_1) f_1 \Phi(H_2) f_2 \ldots f_{t} \Phi(H_{t+1})$ in $\mathcal{T}$. Let $|V(P')|=t+1$ and from Lemma~\ref{lem:hpd_prop} we know that $t \le 2 \lceil \log n \rceil$. Thus, the number of heavy sub-paths, $k \le 2 \lceil \log n \rceil + 1$. 
\end{proof}

\noindent
The following propositions are straightforward and merely stated explicitly in the context of heavy path trees. 
\begin{proposition}
\label{prop:2node}
Let $P$ be a path in $\mathcal{T}$ and $l \le \log n$ be an integer. $P$ consists of at most two nodes with level $l$.
\end{proposition}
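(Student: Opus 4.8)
The plan is to exploit the fact that $\mathcal{T}$ is a rooted tree and that any path in it bends at a single highest point, namely the lowest common ancestor of its two endpoints. First I would write $P$ as the concatenation of two monotone segments: letting $u$ and $v$ be the endpoints of $P$ and $w$ their lowest common ancestor in $\mathcal{T}$, the path $P$ ascends from $u$ to $w$ and then descends from $w$ to $v$. (One of the two segments may be empty, for instance when $u = w$, i.e.\ when $u$ is an ancestor of $v$.)

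The key observation is that along each of these two segments the level is \emph{strictly monotone}. Recall that the level of a node is its distance from the root of $\mathcal{T}$, so moving to a parent decreases the level by exactly $1$ and moving to a child increases it by exactly $1$. Hence on the ascending segment from $u$ to $w$ the levels strictly decrease from $\mathrm{level}(u)$ down to $\mathrm{level}(w)$, and on the descending segment from $w$ to $v$ they strictly increase from $\mathrm{level}(w)$ up to $\mathrm{level}(v)$. In particular, each segment contains at most one node of any fixed level $l$.

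I would then conclude by counting. For a fixed level $l$, the ascending segment contributes at most one node at level $l$ and the descending segment contributes at most one node at level $l$, so $P$ contains at most two nodes at level $l$. The only point that needs a moment's care is the apex $w$, which lies on both segments; but it is a single node and is counted only once, so it cannot inflate the bound beyond two (indeed, at level $l = \mathrm{level}(w)$ the path contains exactly one node, namely $w$). The hypothesis $l \le \log n$ merely restricts attention to levels that actually occur, since by Lemma~\ref{lem:hpd_prop} the tree $\mathcal{T}$ has at most $\lceil \log n \rceil$ levels, and it plays no role in the combinatorial argument. There is no substantial obstacle here---the entire content is the monotonicity of levels along the two halves of a tree path---so the main thing to get right is the bookkeeping at the apex $w$.
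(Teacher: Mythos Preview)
Your argument is correct and is exactly the standard one: split the tree path at the lowest common ancestor into two monotone-in-level segments, each of which hits any fixed level at most once. The paper does not actually prove this proposition---it simply remarks that it is ``straightforward and merely stated explicitly in the context of heavy path trees''---so your write-up is more detailed than the paper's treatment, but entirely in line with the intended reasoning.
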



\noindent
Through out this paper we will use the notation $[n]=\{1,2,\ldots,n\}$. 

\subsection{Useful Succinct Data Structures}
Table \ref{table:udstable} summarises the set of data structures we use in this work which we will explain in this section starting with ordinal trees. 

\noindent
{\bf Succinct Data Structure for Ordinal Trees.} Let the children of any $u \in V(T)$ be $\{u_1,\ldots,u_z\}$ for some $z>0$. Tree $T$ is called an \textit{ordinal tree} if for $i < j$, $u_i$ is to the left of  $u_j$~\cite{Nav_Sada}.  By considering ordinal trees as balanced  parenthesis Navarro and Sadakane~\cite{Nav_Sada} has given a $2n+o(n)$ bit succinct data structure. 

\begin{table}[h!]
\centering
\begin{tabular}{ |p{2cm}||p{3.7cm}|p{4.5cm}|p{2.5cm}|p{1.5cm}| }
  \hline
 {\bf Data Structure}& {\bf Query} & {\bf Functionality} & {\bf To store} & {\bf Ref} \\
 \hline
    \multirow{3}{4em}{ Ordinal trees} & $\texttt{lca}(u,v)$ & returns lowest common ancestor of nodes $u$ and $v$ & the clique tree in Sections \ref{sec:succrep} and \ref{sec:seds} & \cite{Nav_Sada} \\ &  & & & \\ 
    & $\texttt{parent}(u)$ & returns parent of node $u$ &  & \cite{Nav_Sada} \\ &  & & & \\ 
    & $\texttt{first\_child}(u)$ & returns first child of node $u$ &  & \cite{Nav_Sada} \\ &  & & & \\ 
    & $\texttt{rmost\_child}(u)$ & returns rightmost leaf of the sub-tree rooted at $u$ &  & \cite{Nav_Sada} \\ &  & & & \\
    & $\texttt{child\_rank}(u)$ & returns the number of siblings to the left of $u$ &  & \cite{Nav_Sada} \\ &  & & & \\
    \hline
    \multirow{3}{4em}{ Bit string} & $\texttt{rank}(B,b,i)$ & returns the number of bit $b$'s up to and including position $i$ on bit vector $B$ from left & for instance, the BP representation of clique tree & \cite{RamanRS07} \\ &  & & & \\
    & $\texttt{select}(B,b,i)$ & returns the position of the $i-$th bit $b$ in the bit vector $B$ from left &  & \cite{RamanRS07} \\ 
    \hline
    \multirow{1}{4em}{ Increasing number sequence} & $\texttt{accessNS}(B,i)$ & returns the $i-$th number in the sequence $B$ & the starting nodes of paths in clique tree in Section \ref{sec:succrep} & Section 2.8 of \cite{Navarro} \\ &  & & & \\
    \hline
    \multirow{1}{5em}{ Wavelet tree} & $\texttt{access}(S, c)$ & returns the $y-$coordinate of the point with $x-$coordinate value $c$ stored in wavelet  tree $S$ & the paths as points in a two dimensional grid in Section \ref{sec:succrep}  & \cite{Makinen2007} \\ &  & & & \\ 
    & $\texttt{select}(S, [i,i'],[j,j'])$ & returns the points in the input range $[i,i'] \times [j,j']$ &  & \cite{Makinen2007} \\ &  & & & \\ 
    & $\texttt{count}(S, [i,i'],[j,j'])$ & returns the number of points in the input range $[i,i'] \times [j,j']$ &  & \cite{Makinen2007} \\ 
    \hline
\end{tabular}
\caption{Summary of the data structures used.  Note that $b \in \{0,1\}$.}
\label{table:udstable}
\end{table}

\begin{lemma}
\label{lem:2ntree}
For any ordinal tree $T$ with $n$ nodes, there exists a $2n+o(n)$ bit Balanced Parentheses (BP) based data structure that supports the following four functions among others in constant time :
\begin{enumerate}
    \itemsep0em 
    \item ${\normalfont\texttt{lca}}(i,j)$, returns the lowest common ancestor of two nodes $i,j$ in $T$, 
    \item ${\normalfont\texttt{parent}}(i)$, returns the parent of node $i$ in $T$, and 
    \item ${\normalfont\texttt{first\_child}}(i)$, returns the first child of node $i$ in $T$.
    \item ${\normalfont\texttt{rmost\_leaf}}(i)$, returns the rightmost leaf of sub-tree rooted at node $i$ in $T$.
    \item ${\normalfont\texttt{child\_rank}}(i)$, returns the number of siblings to the  left of node $i$ in $T$.
\end{enumerate}
\end{lemma}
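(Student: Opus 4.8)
The plan is to reduce the lemma to the balanced-parentheses (BP) toolbox of Navarro and Sadakane~\cite{Nav_Sada}, so that the only real work is to express each of the five listed operations in terms of a constant number of their primitives. First I would fix the encoding: perform a depth-first traversal of the ordinal tree $T$, writing an opening parenthesis when a node is first visited and a closing parenthesis when its subtree has been fully explored. This yields a string of exactly $2n$ parentheses, stored in $2n$ bits, and it is natural to identify each node with the position of its opening parenthesis (equivalently, with its preorder rank). Over this string I would instantiate the range min--max tree of~\cite{Nav_Sada}, which occupies $o(n)$ additional bits and answers, in $O(1)$ time, the primitives $\texttt{rank}$, $\texttt{select}$, $\texttt{findclose}$ (the close matching a given open), $\texttt{findopen}$, $\texttt{enclose}$ (the opening parenthesis of the parent), and range minimum/maximum queries over the excess sequence. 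The $2n+o(n)$ space bound is then immediate, and it remains only to realise the five operations.

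The three standard operations are direct. For a node $u$ at opening position $p$, $\texttt{parent}(u)$ is exactly $\texttt{enclose}(p)$, and $\texttt{first\_child}(u)$ is the node at position $p+1$ whenever the symbol there is an opening parenthesis (and $u$ is a leaf otherwise). For $\texttt{lca}(u,v)$, writing $p$ and $q$ for the opening positions of $u$ and $v$ with $p \le q$, I would first test with $\texttt{findclose}$ whether one node is an ancestor of the other and, if so, return that node; otherwise the lowest common ancestor is obtained by locating the minimum-excess position in the interval $[p,q]$ with one range-minimum query and applying $\texttt{enclose}$ to it, a constant-time computation.

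For the two less textbook operations I would argue that they sit inside the same $O(1)$ repertoire. The rightmost leaf of the subtree rooted at $u$ lies between the opening parenthesis of $u$ and its matching closing parenthesis, located by one $\texttt{findclose}$ call; reaching it amounts to finding the last occurrence of the pattern ``()'' (an opening parenthesis immediately followed by a closing one) inside that window, which I would support with a single rank/select over an $o(n)$-bit leaf-indicator vector marking such positions. For $\texttt{child\_rank}(u)$ I would count how many children of $\texttt{parent}(u)$ precede $u$, namely the number of returns of the excess sequence to the level of $u$ strictly between the parent's opening parenthesis and $u$'s own opening parenthesis; the range min--max tree evaluates this count in constant time. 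Collecting these reductions gives the claimed $O(1)$ query times within $2n+o(n)$ bits.

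I expect the main obstacle to be exactly these two nonstandard queries, $\texttt{rmost\_leaf}$ and $\texttt{child\_rank}$: whereas $\texttt{lca}$, $\texttt{parent}$, and $\texttt{first\_child}$ are textbook consequences of the BP primitives, one must verify carefully that the rightmost-leaf search and the sibling count can genuinely be phrased as the pattern-search and excess-counting operations that the range min--max tree of~\cite{Nav_Sada} answers in $O(1)$ time and $o(n)$ extra space, paying attention to the boundary cases where $u$ is itself a leaf or is the first child of its parent.
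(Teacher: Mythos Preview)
Your proposal is correct, but note that the paper does not actually prove this lemma: it is stated as a known result and attributed directly to Navarro and Sadakane~\cite{Nav_Sada}, with no argument given. What you have written is therefore not a different route but rather an explicit unpacking of that citation, showing how each of the five operations reduces to the BP primitives of~\cite{Nav_Sada}. In that sense your write-up is more detailed than the paper, which simply takes the whole lemma as a black box from the literature; in particular, all five operations (including \texttt{rmost\_leaf} and \texttt{child\_rank}) are already in the repertoire of~\cite{Nav_Sada}, so even the two ``nonstandard'' cases you single out do not require the auxiliary leaf-indicator vector or the ad hoc excess-counting argument you sketch---though those would also work.
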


\noindent
{\bf Rank-Select Data Structure.} Bit-vectors are extensively used in the succinct representation given in Section~\ref{sec:succrep}. The following data structure due to Golynski et al. \cite{GMS2006} and the functions supported by  it are useful.

\begin{lemma}
\label{lem:bstr}
Let $B$ be an $n-$bit vector and  $b \in \{0,1\}$.  There exists an $n+o(n)$ bit data structure that supports the following functions in constant time:
\begin{enumerate}
    \itemsep0em 
    \item ${\normalfont\texttt{rank}}(B,b,i)$:  Returns the number of $b$'s up to and including position $i$ in the bit vector $B$ from the left. 
    \item ${\normalfont\texttt{select}}(B,b,i)$: Returns the position of the $i$-th $b$ in the bit vector $B$ from left. For $i \notin [n]$ it returns 0.
\end{enumerate}
\end{lemma}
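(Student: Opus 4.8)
The plan is to keep a verbatim copy of $B$ in $n$ bits and to layer on top of it $o(n)$ bits of auxiliary data that yield constant-time queries through a combination of hierarchical counting and the method of four Russians. I would treat \texttt{rank} first and \texttt{select} second, since the latter is the genuinely delicate case. Throughout it suffices to handle $b=1$: answers for $b=0$ follow from $\texttt{rank}(B,0,i)=i-\texttt{rank}(B,1,i)$, and an analogous reduction lets a structure built for ones answer \texttt{select} for zeros.

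For \texttt{rank}, I would partition $B$ into superblocks of $\lceil \log^2 n\rceil$ consecutive bits, and each superblock into blocks of $\lceil (\log n)/2\rceil$ bits. I store, for every superblock, the absolute number of ones preceding it, and for every block, the number of ones preceding it within its own superblock. A query $\texttt{rank}(B,1,i)$ then decomposes as the superblock prefix count, plus the in-superblock block prefix count, plus the number of ones inside the final partial block up to position $i$. The first two summands are read directly from the tables; the third is a single lookup into a precomputed table indexed by the (at most $\sqrt{n}$ distinct) block contents and the offset within the block. A routine accounting shows the superblock table needs $O(n/\log n)$ bits, the block table $O(n \log\log n/\log n)$ bits, and the four-Russians table $O(\sqrt{n}\,\log n \log\log n)$ bits, so the auxiliary structures together occupy $o(n)$ bits and each summand is obtained in $O(1)$ time.

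For \texttt{select} the obstacle is that a uniform block decomposition no longer localises the answer, since the ones may be distributed arbitrarily. I would instead group the ones of $B$ into consecutive runs of $\lceil \log^2 n\rceil$ ones and record the position of the first one of each group, so that $\texttt{select}(B,1,i)$ first identifies the relevant group in $O(1)$ time by arithmetic on $i$. For each group I then branch on the width of the span it covers: a \emph{sparse} group, whose span exceeds a fixed threshold, has its one-positions stored explicitly, which is affordable precisely because such groups are few; a \emph{dense} group is recursed upon one more level with the same scheme and finished with a four-Russians lookup over short bit patterns. The standard calibration of these thresholds makes the explicit lists and the recursion tables sum to $o(n)$ bits while keeping the number of levels constant, hence $O(1)$ query time; the case $i\notin[n]$ returning $0$ is handled by a guard on the query.

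Combining the two structures with the stored copy of $B$ yields the claimed $n+o(n)$-bit bound and constant query time. I expect the density case-analysis for \texttt{select}, rather than \texttt{rank}, to be where essentially all the care is required, which is exactly the component for which the more intricate machinery of \cite{GMS2006} is designed.
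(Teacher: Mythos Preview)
Your sketch is correct and reconstructs the classical two-level block decomposition for \texttt{rank} (Jacobson) together with the sparse/dense case analysis for \texttt{select} (Clark/Munro); the space accounting and the $b=0$ reductions are fine.

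Note, however, that the paper does not prove this lemma at all: it is stated as a known result from the literature, with a citation, and used as a black box. So there is no ``paper's own proof'' to compare against. What you have written is more than the paper itself supplies, and is exactly the standard textbook argument one would give if asked to justify the cited result. One minor quibble: the reference \cite{GMS2006} in the paper is to a result on sequences over large alphabets, whereas the bit-vector construction you describe is the earlier and more elementary Jacobson--Clark machinery; your attribution is actually the more accurate one for the content of the lemma as stated.
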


\noindent
{\bf Non-decreasing Integer Sequence Data Structure.} Given a set of positive integers in the non-decreasing order we can store them efficiently using the differential encoding scheme for increasing numbers; see Section 2.8 of \cite{Navarro}. Let $S$ be the data structure that supports differential encoding for increasing numbers then the function $\texttt{accessNS}(S,i)$ returns the $i-$th number in the sequence. 

\begin{lemma}
\label{lem:ssn}
Let $S$ be a sequence of $n$ non-decreasing positive integers $a_1,\ldots,a_n, 1 \leq a_i \leq n$. There exists a $2n+o(n)$ bit data structure that supports ${\normalfont\texttt{accessNS}}(S,i)$ in constant time.
\end{lemma}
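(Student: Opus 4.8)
The plan is to reduce the problem to storing a single bit vector equipped with the rank-select structure of Lemma~\ref{lem:bstr}, by encoding the sequence through its consecutive gaps in unary. I would set $a_0 = 0$ and define the gaps $g_i = a_i - a_{i-1}$ for $1 \le i \le n$; since the sequence is non-decreasing, each $g_i \ge 0$. I would then build the bit vector
\[
B = 0^{g_1}\,1\,0^{g_2}\,1\cdots 0^{g_n}\,1,
\]
that is, for each $i$ we emit $g_i$ zeros followed by a single one. This bit vector contains exactly $n$ ones, and the number of zeros equals $\sum_{i=1}^n g_i = a_n - a_0 = a_n \le n$. Hence $|B| \le 2n$, and storing $B$ with the structure of Lemma~\ref{lem:bstr} costs $|B| + o(|B|) \le 2n + o(n)$ bits, which matches the claimed space bound.

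For the query, the key observation is that up to and including the $i$-th one of $B$ there are exactly $i$ ones and $\sum_{j=1}^{i} g_j = a_i$ zeros, so the $i$-th one sits at position $a_i + i$. I would therefore implement
\[
\texttt{accessNS}(S,i) = \texttt{select}(B,1,i) - i,
\]
which returns $a_i$ and runs in constant time by Lemma~\ref{lem:bstr}. Correctness is immediate from the position identity, and the time and space bounds follow from the two displays above.

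There is no genuine obstacle here; the only point needing care is the gap-length accounting, namely verifying that the total number of zeros is bounded by $a_n \le n$ rather than by some unbounded quantity. This is precisely where the non-decreasing hypothesis is essential: it forces every gap $g_i$ to be non-negative, so the zero counts telescope cleanly to $a_n$. With $a_0$ fixed to $0$ and the query subtracting $i$ from the \texttt{select} position, the boundary cases (repeated values, which give $g_i = 0$, and an arbitrary starting value $a_1 \ge 1$) are handled uniformly with no special-casing.
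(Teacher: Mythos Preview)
Your proof is correct and takes essentially the same approach as the paper: both use unary gap (differential) encoding of the non-decreasing sequence and reduce $\texttt{accessNS}$ to a rank/select query via Lemma~\ref{lem:bstr}. The only cosmetic differences are that the paper swaps the roles of $0$ and $1$ (it writes $g_i$ ones followed by a zero) and implements the access as $\texttt{rank}(B,1,\texttt{select}(B,0,i))$, which is exactly your $\texttt{select}(B,1,i)-i$ after the bit-flip.
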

\begin{proof}
We will prove the lemma by giving a construction of such a data structure. $a_1$ will be represented by a sequence of $a_1$ 1's followed by a 0. Subsequently $a_i$'s are represented by storing $a_i-a_{i-1}$ many 1's followed by a 0. It will take $2n$ bits since there are $n$ 0's and $n$ 1's. Let this bit string be  stored using the data structure of Lemma~\ref{lem:bstr} and be denoted as $B$. $B$ takes $2n+o(n)$ bits. $\texttt{accessNS}(S,i)$ can be implemented using $\texttt{rank}(B,1,\texttt{select}(B,0,i))$ on the bit string obtained.
\end{proof}


\noindent
{\bf Wavelet Trees.} Central to the design of the succinct data structure of Section \ref{sec:succrep}  is the two-dimensional orthogonal range search data structure used to store points in the two-dimensional plane. Specifically, we use the $n \log n + o(n \log n)$ bit succinct \textit{wavelet trees} due to Makinen and Navarro~\cite{Makinen2007} that requires the $n$ points to have distinct integer-valued $x-$ and $y-$coordinates in the range $[n] \times [n]$. The wavelet  tree has the following properties.
\begin{enumerate}
    \itemsep0em 
    \item The wavelet tree is a balanced binary search tree. Each node of the tree is associated with an interval range of the $x$-coordinate.
    \item The range at the root of the wavelet tree has the interval $[1,n]$ and the interval at each leaf is of the form $[a,a], 1 \le a \le n$.
    \item The range  $[a,a'], 1 \le a < a' \le n,$ at an internal node is partitioned among the ranges $[a,a'']$ and $[a''+1,a']$ at its children, that is, $[a,a'] = [a,a''] \bigcup [a''+1,a']$. 
\end{enumerate}
    
\noindent
We use the following result regarding wavelet trees from~\cite{Makinen2007}.

\begin{lemma}
\label{lem:rsds}
Given a set of $n$  points $\{(a_1,b_1),\ldots,(a_n,b_n)\}$ where $a_i,b_i \in [n], 1 \le i \le n$ such that $a_i \ne a_j$ and $b_i \ne b_j$ for $i \ne j$, there exists an $n \log n (1 + o(1))$ bit orthogonal range search data structure $S$ that supports the following functions:
\begin{enumerate}
    \itemsep0em 
    \item ${\normalfont\texttt{search}}(S, [i,i'],[j,j'])$: Returns the points in the input range $[i,i']\times[j,j'], 1 \le i \le i' \le n, 1 \le j \le j' \le n,$ in the increasing order of the $x-$coordinate taking $O(\log n)$ time per point.
    \item ${\normalfont\texttt{count}}(S, [i,i'],[j,j'])$: Returns the number of points in the input range $[i,i']\times[j,j'], 1 \le i \le i' \le n, 1 \le j \le j' \le n,$ in $O(\log n)$ time.
    \item ${\normalfont\texttt{access}}(S, i):$ Returns the $y-$coordinate of the point stored in  $S$  with $x-$coordinate $i$ in $O(\log  n)$ time.
\end{enumerate}
\end{lemma}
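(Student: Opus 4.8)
The plan is to prove the lemma by an explicit construction, which is exactly the one of Makinen and Navarro~\cite{Makinen2007}; the work is to instantiate it for our input and check the stated bounds. Since the $a_i$ are distinct and all lie in $[n]$ they form a permutation of $[n]$, and likewise the $b_i$. I would therefore sort the points by $y$-coordinate to obtain a sequence $x_1,\ldots,x_n$, a permutation of $[n]$, where $x_b$ is the $x$-coordinate of the point whose $y$-coordinate is $b$; note that in this sequence the position $b$ already equals the $y$-coordinate. I then build the wavelet tree on this sequence in the value-partitioning manner described just before the statement, so that each node is associated with a range of $x$-coordinates: the root has range $[1,n]$ and stores a bit vector $B$ of length $n$ whose $b$-th bit is $0$ if $x_b \le \lfloor (1+n)/2 \rfloor$ and $1$ otherwise; the elements with bit $0$ are recursed on in the left child (range $[1,\lfloor(1+n)/2\rfloor]$) and those with bit $1$ in the right child, each time preserving the inherited left-to-right order. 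Recursing until every range is a singleton $[a,a]$ yields a balanced binary tree of depth $\lceil\log n\rceil$, and each node's bit vector is equipped with the constant-time rank/select structure of Lemma~\ref{lem:bstr}.

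For the space bound I would observe that every point contributes exactly one bit to the bit vector stored at each node on its root-to-leaf path, so the bit vectors lying at any single level have total length exactly $n$. Summing over the $\lceil\log n\rceil$ levels gives $n\lceil\log n\rceil$ bits for the raw bit vectors, while the rank/select indexes of Lemma~\ref{lem:bstr} add only $o(n)$ bits per level, hence $o(n\log n)$ overall. This gives the claimed $n\log n(1+o(1))$-bit total.

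The three operations are then realized by the standard top-down traversals that maintain, at each visited node, the sub-interval of that node's bit vector corresponding to the query, translating it into the children using $\texttt{rank}(B,0,\cdot)$ and $\texttt{rank}(B,1,\cdot)$. For $\texttt{access}(S,i)$ I would descend from the root to the leaf $[i,i]$ (choosing at each node the child whose $x$-range contains $i$), where the unique matching element sits; an upward walk that repeatedly applies $\texttt{select}$ to the appropriate bit then recovers its original position, which is its $y$-coordinate, in $O(\log n)$ time. For $\texttt{count}(S,[i,i'],[j,j'])$ I would route the position interval $[j,j']$ (the $y$-range) downward only along the $O(\log n)$ nodes whose $x$-range meets $[i,i']$, summing the mapped interval lengths at the maximal nodes whose range lies entirely inside $[i,i']$; this touches $O(\log n)$ nodes and costs $O(\log n)$. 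For $\texttt{search}$ I would extend the same descent to enumerate the matching leaves and, for each, recover its $y$-coordinate by the upward $\texttt{select}$ walk, charging $O(\log n)$ per reported point; visiting the left ($x$-smaller) child first yields the points in increasing order of $x$-coordinate.

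The main obstacle I anticipate is not a single clever step but the bookkeeping that makes these interval translations exact: one must verify that $\texttt{rank}$ carries the endpoints of a query interval at a node to the correct endpoints in each child, so that no point is double-counted or dropped, and that the upward $\texttt{select}$ walk used by $\texttt{access}$ and $\texttt{search}$ reconstructs the original coordinate while respecting both the $O(\log n)$ per-point budget and the required $x$-sorted output order. Since all of this is carried out in~\cite{Makinen2007}, the remaining task is simply to confirm that our input satisfies their hypotheses, namely distinct integer coordinates in $[n]\times[n]$, so that their analysis applies verbatim.
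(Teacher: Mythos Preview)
The paper does not prove this lemma at all: it is stated as a quotation of a known result, prefaced by ``We use the following result regarding wavelet trees from~\cite{Makinen2007},'' and no proof is given. Your proposal is therefore not comparable to a paper proof; rather, you have supplied a correct sketch of the standard wavelet-tree construction and query algorithms from the cited reference, which is more than the paper itself does. The construction, space analysis, and query routines you describe are the usual ones and are sound for the stated input assumptions.
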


\section{Clique Tree Pre-processing = HPD + Pre-order Traversal}
\label{sec:proprocessingT}
The pre-processing of the clique tree allows the paths in $\mathcal{P}$ to be stored efficiently so that adjacency and neighbourhood queries can be supported. This involves the following two steps:
\begin{enumerate}
    \itemsep0em 
    \item heavy path decomposition  of $T$, and
    \item transformation of $T$ into  an ordinal tree which is labeled based on the pre-order traversal
\end{enumerate}

\noindent
A clique tree is organized as an ordinal tree as explained below.

\noindent
{\bf HPD + Pre-order traversal of $T$.} Fix a root node for $T$ and perform heavy path decomposition on it. For $v \in V(T)$ order its children $(w_1,\ldots,w_c)$ such that $\{v,w_1\}$ is a heavy edge. Let the children adjacent to $v$ by light edges $(w_2,\ldots,w_c)$, be ordered arbitrarily. This ordering of children of a node of the clique tree makes it an ordinal tree. Label the nodes of this ordinal tree based on the  pre-order traversal; see Section 12.1 of~\cite{CLRS} for more details of pre-order traversal of trees. Labels assigned to nodes in this manner are called the \textit{pre-order label of the nodes}. Through out the rest of our paper this ordinal rooted clique tree labeled with pre-order will be referred to as the clique tree.

\noindent
{\bf Representing paths as tuples.} Path $P \in \mathcal{P}$ is represented as $P=(l,r), l,r \in  V(T)$, where $l$ and $r$ are the end points of $P$ such that $l \le r$. We say that $l$ and $r$ are the starting and ending nodes of the path, respectively. Let $T_a$ and $T_b$ be the sub-trees of  $T$ rooted at $a$ and  $b$, respectively. For $e=\texttt{lca}(l,r)$, the following propositions regarding the clique tree $T$ follow from pre-order traversal. 

\begin{proposition}
\label{prop:pathnsubtree}
Let $b \in V(T_a)$ for $a \in V(T)$. The following hold.
\begin{enumerate}
    \itemsep0em 
    \item The sub-tree rooted at $b$ is contained in $T_a$.
    \item Let $V(P) \nsubseteq V(T_a)$. $|V(P) \cap V(T_a)| \ge 1$ if and  only  if $a \in V(P)$.    
    \item $l \in V(T_a)$ and $e < a$ if and only  if  $r \notin V(T_a)$ and $a \in V(P)$.
    \item If $l \in V(T_a)$ and $e = a$ then  $r \in V(T_a)$.
    \item If there are $z$ descendants of $a$ in $T_a$ then the nodes of $V(T_a)$ is
the set $\{a, a+1, \ldots, a + z\}$.
\end{enumerate}
\end{proposition}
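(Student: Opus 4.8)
The plan is to prove each of the five items of Proposition~\ref{prop:pathnsubtree} by exploiting two basic facts about the pre-order labeling of an ordinal tree: first, the nodes of any subtree $T_a$ occupy a \emph{contiguous} block of pre-order labels, and second, within that block the root $a$ receives the smallest label. Indeed, pre-order visits $a$ before recursing into its children, and it finishes all of $a$'s descendants before moving to any node outside $T_a$; hence if $a$ has $z$ descendants, the labels of $V(T_a)$ are exactly $\{a, a+1, \ldots, a+z\}$. I would establish this contiguity statement first, since it is really item~5 and every other item reduces to it.

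With item~5 in hand, items~1 through~4 follow by elementary interval reasoning. For item~1, if $b \in V(T_a)$ then $b$ lies in the subtree rooted at $a$, so every descendant of $b$ is a descendant of $a$, giving $V(T_b) \subseteq V(T_a)$; equivalently, the contiguous label-block of $T_b$ sits inside that of $T_a$. For item~2, the key observation is that a path $P$ in $T$ is connected, so its vertex set cannot straddle the boundary of $T_a$ without passing through the unique node $a$ that separates $T_a$ from the rest of $T$ (every edge leaving $T_a$ is incident to $a$, since $a$ is the root of the subtree). Thus if $V(P)\not\subseteq V(T_a)$ but $V(P)\cap V(T_a)\neq\emptyset$, connectivity forces $a\in V(P)$; conversely if $a\in V(P)$ then trivially $|V(P)\cap V(T_a)|\ge 1$. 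I would phrase this carefully using the fact that $a$ is a cut vertex separating $V(T_a)\setminus\{a\}$ from $V(T)\setminus V(T_a)$.

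For items~3 and~4 I would use the relationship between $e=\texttt{lca}(l,r)$ and the endpoints together with the contiguity of labels. Recall $l\le r$ are the pre-order labels of the endpoints and $e$ is their lowest common ancestor, so $e$ is an ancestor of both and $e\le l$. For item~4, if $l\in V(T_a)$ and $e=a$, then since $r$ is a descendant of $e=a$ we get $r\in V(T_a)$ immediately from item~1 applied to the ancestor $a$ of $r$. For item~3, I would argue both directions: if $l\in V(T_a)$ and $e<a$, then $e$ is a proper ancestor of $a$ (as $e<a$ and $e$ is comparable to $a$ via the tree order), so $e\notin V(T_a)$; since the path from $l$ to $r$ goes up to $e$ and back down, and $l$ is inside $T_a$ while the apex $e$ is outside, the path must exit $T_a$ through the cut vertex $a$, forcing $a\in V(P)$ and $r\notin V(T_a)$. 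The converse uses item~2 together with the position of $e$ relative to $a$.

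The main obstacle I anticipate is being fully rigorous about the claim that the path $P$ can only leave the subtree $T_a$ through $a$, and about translating the ancestor/descendant relations into label inequalities in both directions without circularity. In particular, item~3 is an ``if and only if'' whose two directions each require combining connectivity of $P$, the cut-vertex property of $a$, and the characterization of $e$ as the apex of $P$; I would be careful to note that because $P$ is a path (a connected sub-path of the tree) it has the shape ``ascend from one endpoint to $e$, then descend to the other endpoint'', so $e\in V(P)$ always, and the sidedness of $l$ versus $r$ with respect to $T_a$ is controlled entirely by whether $e$ lies inside or outside $T_a$. Once this structural picture is pinned down, each equivalence is a short deduction from item~5 and the cut-vertex observation.
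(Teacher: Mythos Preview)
Your proposal is correct and matches the paper's approach: the paper states the proposition without proof, merely remarking that it ``follows from pre-order traversal,'' and your plan unpacks precisely the standard pre-order facts (contiguity of subtree labels, the root-as-cut-vertex property, and the ascend-to-$e$-then-descend shape of $P$) that make each item immediate. Your choice to prove item~5 first and reduce the others to it is sound and is exactly the kind of elementary argument the paper is implicitly invoking.
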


\begin{proposition}
\label{prop:consecu}
Let $\pi$ be a heavy path of length $k \ge 1$ in $T$ with $a$ and $b$ as its end points such that $a \le b$. Then $b=a+k$. 
\end{proposition}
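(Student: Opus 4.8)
The plan is to exploit the specific child-ordering convention introduced during pre-processing: at every node $v$ the heavy child (the one joined to $v$ by a heavy edge) is placed first among $v$'s children. Since the nodes of $T$ are labelled by a pre-order traversal, and pre-order visits a node and then immediately descends into its first child, the pre-order label of the first child of $v$ is exactly one more than the label of $v$. A heavy path is precisely a maximal chain of such heavy (first-child) edges, so walking down $\pi$ increments the pre-order label by one at each step, which forces the two endpoints to differ by exactly the number of edges.

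First I would fix notation: write $\pi=(v_0,v_1,\ldots,v_k)$ with $v_0$ the endpoint nearest the root and $v_{i+1}$ the heavy child of $v_i$ for each $0\le i\le k-1$. This is well defined because each internal node of $T$ selects exactly one heavy child, so a heavy path is a maximal downward chain of heavy edges; a path of length $k$ thus has $k+1$ nodes $v_0,\ldots,v_k$. Next I would record the key local fact: because $v_{i+1}$ is ordered first among the children of $v_i$, pre-order assigns it the label $\mathrm{label}(v_i)+1$. A one-line induction on $i$ then gives $\mathrm{label}(v_i)=\mathrm{label}(v_0)+i$ for all $0\le i\le k$, and in particular $\mathrm{label}(v_k)=\mathrm{label}(v_0)+k$.

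It then remains to match $\{\mathrm{label}(v_0),\mathrm{label}(v_k)\}$ with the endpoints $a\le b$. Since $v_0$ is an ancestor of every other node of $\pi$ and pre-order labels an ancestor before its descendants, $v_0$ carries the smallest label on $\pi$, so $a=\mathrm{label}(v_0)$ and $b=\mathrm{label}(v_k)$; this also follows directly from Proposition~\ref{prop:pathnsubtree}(5), since each $v_{i+1}$ lies in the subtree $T_{v_i}$. Combining this identification with the induction yields $b=a+k$.

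I expect no serious obstacle here; the only points requiring care are the clean justification that the heavy-child-first ordering forces consecutive pre-order labels along the entire heavy path, and that the top of the path is the endpoint with the smaller label. Both follow immediately from the pre-processing convention of Section~\ref{sec:proprocessingT} together with the defining behaviour of pre-order traversal, so the argument is essentially a short induction once the setup is in place.
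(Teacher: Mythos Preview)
Your argument is correct. The paper does not give a separate proof of this proposition; it treats it as an immediate consequence of the pre-processing convention (heavy child placed first among siblings, nodes labelled by pre-order), and your write-up spells out precisely that reasoning via the one-step observation $\mathrm{label}(v_{i+1})=\mathrm{label}(v_i)+1$ and a short induction.
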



\noindent
We will subsequently use the notation $[a,b]$  to refer to the ordered set of nodes $(a,a+1,\ldots,b-1,b)$.  $a$ and $b$ are referred to as the starting and ending points of $[a,b]$.  Figure~\ref{fig:pathgraph} shows  the  pre-processed clique tree for the example path graph $G$, also shown in the figure. The heavy path starting at 1 and ending at 4 have contiguous numbering and is denoted as $[1,4]$. We emphasize that a sub-path of a heavy path is also represented using the same notation.   A heavy path or a heavy sub-path $[a,a]$ contains only the vertex $a$.    

\begin{lemma}
\label{lem:hpd_prop1}
For any heavy path $H=[a',b']$ of $T$ a path $\pi=[a,b]$ such that $a' \le a \le b \le b'$ is a heavy sub-path of $H$. For the heavy sub-path $\pi$ of $H$ let $c$ and $d$ be the rightmost leaves in the sub-tree rooted at $b$ and $a$, respectively. The following are true about $\pi$. 
\begin{enumerate}
    \itemsep0em 
    \item If  $a \le u \le b$ then $u \in V(\pi)$.
    \item $a \le b \le c \le d$.
    \item Let $Q=(s,t)$ be a path in $T$ . If $s > d$ then $\pi$ and $Q$ are vertex disjoint paths in $T$.
\end{enumerate}
\end{lemma}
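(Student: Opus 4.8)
The plan is to run every part on the two bookkeeping facts supplied by the pre-order labelling: that a heavy path carries consecutive labels as one descends it (Proposition~\ref{prop:consecu}), and that the subtree $T_a$ occupies exactly the contiguous block of labels $\{a,a+1,\dots,d\}$, whose maximum $d$ is its rightmost leaf (Proposition~\ref{prop:pathnsubtree}(5)). Keeping straight the orientation of a heavy path, namely that a smaller label sits closer to the root of $T$ and each successive label is the heavy child of its predecessor, is what drives the descendant/ancestor claims below.

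For part~1 I would argue that since $\pi=[a,b]$ is a heavy sub-path of $H=[a',b']$ and, by Proposition~\ref{prop:consecu}, the nodes along $H$ receive the consecutive labels $a',a'+1,\dots,b'$, the portion of $H$ between $a$ and $b$ consists of precisely the nodes $a,a+1,\dots,b$. Hence every $u$ with $a\le u\le b$ is a node of $\pi$, matching $V(\pi)=\{a,\dots,b\}$. For part~2 the chain $a\le b\le c\le d$ is established link by link: $a\le b$ is the standing hypothesis; $b\le c$ holds because $c$ is the rightmost leaf of $T_b$ and hence the largest label of $V(T_b)=\{b,\dots,c\}$ by Proposition~\ref{prop:pathnsubtree}(5); and $c\le d$ follows by noting that each of $a+1,\dots,b$ is the heavy child of its predecessor, so $b$ is a descendant of $a$, giving $b\in V(T_a)$ and, via Proposition~\ref{prop:pathnsubtree}(1), $T_b\subseteq T_a$, whereupon comparing the maxima of the two contiguous label blocks yields $c\le d$.

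Part~3 is where the real work lies, and I would first record that $V(\pi)=\{a,\dots,b\}\subseteq\{a,\dots,d\}=V(T_a)$, using part~2 together with Proposition~\ref{prop:pathnsubtree}(5); also, from the tuple convention $Q=(s,t)$ we have $s\le t$, so $s>d$ gives $t\ge s>d$. The structural fact I would invoke is that every node on the tree path $Q$ lies on one of its two halves, the $s$-to-$\texttt{lca}(s,t)$ branch or the $\texttt{lca}(s,t)$-to-$t$ branch, so every node of $Q$ is an ancestor-or-self of $s$ or of $t$. Suppose for contradiction that some $g\in V(Q)\cap V(\pi)$; then $g\in V(T_a)$, and without loss of generality $g$ is an ancestor-or-self of $s$. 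By Proposition~\ref{prop:pathnsubtree}(1) we get $T_g\subseteq T_a$, so $s\in V(T_g)\subseteq V(T_a)=[a,d]$ forces $s\le d$, contradicting $s>d$ (the case where $g$ is an ancestor-or-self of $t$ is identical, using $t>d$). Hence $V(\pi)\cap V(Q)=\emptyset$.

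The step I expect to be the main obstacle is exactly this last one: correctly justifying that the two branches of $Q$ consist entirely of ancestors-or-self of its endpoints, and then converting the hypothesis $s>d$ into the subtree-containment contradiction through the contiguous-label description $V(T_a)=[a,d]$. Everything hinges on getting the heavy-path orientation and the direction of Proposition~\ref{prop:pathnsubtree}(1) (smaller label is the ancestor) right; once those are fixed, parts~1 and~2 are essentially immediate and part~3 reduces to the short contradiction argument above.
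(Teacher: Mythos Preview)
Your proof is correct and follows the same overall strategy as the paper: parts~1 and~2 rest on Proposition~\ref{prop:consecu} and the pre-order contiguity of subtrees, and part~3 reduces to showing that $Q$ is disjoint from $T_a\supseteq\pi$. Your argument for part~3 is in fact more careful than the paper's, which simply asserts that $s,t>d$ forces all of $Q$ outside $V(T_a)=[a,d]$; you make the ancestor-or-self structure of the tree path $Q$ explicit and close that step rigorously.
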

\begin{proof} The proofs are as  follows:
\begin{enumerate} 
    \itemsep0em 
    \item This  is  true due to Proposition~\ref{prop:consecu}.
    \item $a \le b$ by definition of $\pi$. $b \le c \le d$ since the labels are based on pre-order traversal.
    \item From Proposition~\ref{prop:pathnsubtree} we know that, if there are $z$ descendants of $a$ in $T_a$, then the nodes of $V(T_a)$ is
    the set $\{a, a+1, \ldots, a + z\}$.  Since $d$ is the label of the rightmost descendant of $a$, it follows that $d = a+z$. Since $a+z = d < s$, we know that $Q$ starts at a node that is visited after the nodes of $T_a$. Since $t>s$ the pre-order labels of nodes in $Q$ is not in $V(T_a)$. Thus, $Q$ and $T_a$ are vertex disjoint, and thus $Q$ and $\pi$ are vertex disjoint.
\end{enumerate}
\end{proof}

  
\begin{lemma}
\label{lem:lca_range}
\begin{enumerate} Let $a \in V(T)$ and $a_l$ be a child of $a$ such that $l \in V(T_{a_l})$ . 
    \item If ${\normalfont \texttt{lca}(l,r)} < a$ then $r \in [{\normalfont \texttt{rmost\_leaf}(a)+1,n}]$.
    \item If ${\normalfont \texttt{lca}(l,r)} = a$ then $r \in [{\normalfont \texttt{rmost\_leaf}(a_l)+1,\texttt{rmost\_leaf}(a)}]$.
\end{enumerate}
\end{lemma}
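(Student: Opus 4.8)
The plan is to reduce both claims to integer comparisons on the pre-order labels. The key enabling fact is Proposition~\ref{prop:pathnsubtree}, item~5: under pre-order labelling the subtree $T_a$ occupies exactly the contiguous block $V(T_a)=[a,\texttt{rmost\_leaf}(a)]$, and likewise $V(T_{a_l})=[a_l,\texttt{rmost\_leaf}(a_l)]$. Writing $e=\texttt{lca}(l,r)$, note first that the hypothesis $l\in V(T_{a_l})$ together with $a_l$ being a child of $a$ (so $T_{a_l}\subseteq T_a$) gives $l\in V(T_a)$, i.e.\ $a\le l\le r$. Thus each part only needs to place $r$ relative to the intervals $[a,\texttt{rmost\_leaf}(a)]$ and $[a_l,\texttt{rmost\_leaf}(a_l)]$.

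For the first part I would apply the forward direction of Proposition~\ref{prop:pathnsubtree}, item~3: since $l\in V(T_a)$ and $e<a$, it follows that $r\notin V(T_a)=[a,\texttt{rmost\_leaf}(a)]$. Because $r\ge l\ge a$, the only way to miss this interval is $r>\texttt{rmost\_leaf}(a)$, and trivially $r\le n$, giving $r\in[\texttt{rmost\_leaf}(a)+1,n]$.

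For the second part the upper bound is immediate from Proposition~\ref{prop:pathnsubtree}, item~4: $l\in V(T_a)$ and $e=a$ force $r\in V(T_a)$, hence $r\le\texttt{rmost\_leaf}(a)$. The lower bound is the one genuinely new step and is where I expect the only real work: I must show $r\notin V(T_{a_l})$. Suppose to the contrary that $r\in V(T_{a_l})$; then $l$ and $r$ both lie in $T_{a_l}$, so $a_l$ is a common ancestor of $l$ and $r$, whence $e=\texttt{lca}(l,r)$ is $a_l$ or a descendant of it, i.e.\ $e\in V(T_{a_l})$ and so $e\ge a_l>a$, contradicting $e=a$. Therefore $r\notin[a_l,\texttt{rmost\_leaf}(a_l)]$; combined with $r\ge l\ge a_l$ this yields $r>\texttt{rmost\_leaf}(a_l)$, and together with the upper bound gives $r\in[\texttt{rmost\_leaf}(a_l)+1,\texttt{rmost\_leaf}(a)]$. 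The main obstacle, such as it is, lies in justifying the ``lca confined to a subtree'' step cleanly; everything else is a direct appeal to the subtree-interval characterization.
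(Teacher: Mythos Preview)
Your proof is correct and follows essentially the same approach as the paper, reducing both claims to the interval characterization of subtrees under pre-order labelling. The only minor stylistic difference is in the lower bound of part~2: the paper directly identifies a distinct child $a_r$ of $a$ with $r\in T_{a_r}$ and $a_l<a_r$ (hence $r>\texttt{rmost\_leaf}(a_l)$), whereas you obtain $r\notin T_{a_l}$ by a contradiction on the location of $\texttt{lca}(l,r)$; the two arguments are equivalent.
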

\begin{proof} The proof is as follows:
    \begin{enumerate}
        \itemsep0em 
        \item Since $\texttt{lca}(l,r) < a$ and $r \ge l$,  $r$ is a node that is visited after the nodes in $T_a$ are visited, that is, $r \in [\texttt{rmost\_leaf}(a)+1,n]$. 
        \item If $\texttt{lca}(l,r) = a$ then there exists a child of $a$, say $a_r$ such that $r \in T_{a_r}$. $a_l < a_r$ since $l < r$. Thus, $r \in  [\texttt{rmost\_leaf}(a_l)+1,\texttt{rmost\_leaf}(a)]$.
    \end{enumerate}    
\end{proof}

\subsection{Organizing the heavy paths and light edges of $T$} 
\label{sec:totalorders}
Let $\mathcal{H}$ be the set of heavy paths of $T$; recall from Section~\ref{sec:hpd}. Let $H, H' \in \mathcal{H}$ such  that $H = [a,b]$ and $H' = [a',b']$. We define a total order $(\mathcal{H},\prec_{\mathcal{H}})$ as follows.  $H \prec_{\mathcal{H}} H'$ if $a < a'$. In other words, $H \prec_{\mathcal{H}} H'$ if $H$ is visited before $H'$ in the pre-order traversal  of  $T$.

\noindent
{\bf Total order on the heavy sub-paths of paths in $T$.}  $\prec_{\mathcal{H}}$, extends to the set of heavy sub-paths of  a path. Let $\Pi=\{\pi_1,\ldots,\pi_k\}, 1 \le k \le 2 \lceil \log n \rceil +  1$, be the set of heavy sub-paths of path $P$; see Lemma~\ref{lem:pathpartition} for details regarding heavy sub-paths of a path. For any two $\pi,\pi' \in \Pi$, let $H, H' \in {\cal H}$ be such that $\pi$ and $\pi'$ are heavy sub-paths of $H$ and $H'$, respectively.
$\pi \prec \pi'$ if $H \prec_{\mathcal{H}} H'$. In  other words, we order the heavy sub-paths according to the order of the heavy paths that contain  it. 


\noindent
{\em Convention:} In the rest of this section, $P$ denotes the path $(l,r)$ in $T$, and $\Pi$ is the decomposition of $P$ into heavy sub-paths. In other words, for the path $P$, $\Pi=(\pi_1,\ldots,\pi_k), 1 \le k \le 2 \lceil \log n \rceil + 1, \pi_i \prec \pi_j$ if $1 \le i < j \le k$. Also, for every $\pi_i \in \Pi$,  $\pi_i=(a_i,b_i)$. In Section~\ref{sec:seds}, we assume that the heavy paths of $T$ are numbered such that $H_i \prec_{\mathcal{H}} H_j$ if and only if $1 \le i  < j \le n$.

\subsection{Characterising path intersections in $T$}
\label{sec:charpathintersection}
In this section, first we will show that a heavy sub-path $\pi \in \Pi$ partitions the nodes of $T$ into four ranges of pre-order labels. Paths intersecting $\pi$ are characterised based on these ranges. Adjacency and neighbourhood queries for $\pi$ are implemented using orthogonal range search queries that use these ranges.

\noindent
{\bf  Successor of a heavy sub-path.} For $\pi_i, \pi_j \in \Pi$, if there exists nodes $u_1 \in V(\pi_i)$ and $u_2 \in V(\pi_j)$ such that $\{u_1,u_2\}$  is a light edge in $T$ then we say  that $\pi_i$ and $\pi_j$ are \textit{light edge separable} heavy  sub-paths; an extension of  the notion of light edge separable heavy paths from Section~\ref{sec:hpd}. Let $1 \le i < j \le k, \pi_i,\pi_j \in \Pi$. $\pi_j$ is called the \textit{successor} of $\pi_i$   if  $\pi_i \prec \pi_j$ and they are light edge separable. We define a mapping $\texttt{succ}:\Pi \times \{1,2\} \rightarrow \Pi \cup \{\texttt{NULL}\}$ from a heavy sub-path of $P$ to its successors defined as follows.
\begin{enumerate}
    \itemsep0em 
    \item Successors for $\pi_1$: We have the following sub-cases depending on  the number of heavy sub-paths of $P$. 
    \begin{enumerate}
        \itemsep0em 
        \item $k = 1:$  $\pi_1$ has no successors, that is, $\texttt{succ}(\pi_1,1)=\texttt{succ}(\pi_1,2)=\texttt{NULL}$.     
        \item $k > 1:$ There are two cases:
        \begin{enumerate}   
            \itemsep0em 
            \item $\pi_1$ has both its successors that is $\texttt{succ}(\pi_1,1) = \pi_2 \ne \texttt{NULL}$ and $\texttt{succ}(\pi_1,2) = \pi_z \ne \texttt{NULL}$ for $z \in [3,k]$.  This happens when $P$ can be divided into two sub-paths $P^1=(e,l)$ and $P^2=(e,r)$ such that $e \ne l \ne r$.
            \item $\pi_1$ has only one successor. Let $\pi_2=(a_2,b_2)$ and $a_1 \ne b_1$. There are two sub-cases.
            \begin{enumerate}
                \itemsep0em 
                \item $\texttt{succ}(\pi_1,1)=\texttt{NULL}$ and $\texttt{succ}(\pi_1,2)=\pi_2$. This happens when $\texttt{parent}(a_2)=a_1$.
                \item $\texttt{succ}(\pi_1,1)=\pi_2$ and $\texttt{succ}(\pi_1,2)=\texttt{NULL}$. This happens when $\texttt{parent}(a_2)=b_1$.
            \end{enumerate}
            When $a_1=b_1$, we define $\texttt{succ}(\pi_1,1)=\pi_2$ and $\texttt{succ}(\pi_1,2)=\texttt{NULL}$.
        \end{enumerate}
    \end{enumerate}
    \item Successors for $\pi_i, i \ne 1$:  $\texttt{succ}(\pi_i,1)=\pi_{i+1}$ if $\pi_{i+1} \in \Pi$ and  $\pi_i,\pi_{i+1}$ are light edge separable  else $\texttt{succ}(\pi_i,1)=\texttt{NULL}$. For all $i \ne 1, \texttt{succ}(\pi_i,2)=\texttt{NULL}$.
\end{enumerate}

\noindent
Note that if $\texttt{succ}(\pi_1,2) = \pi_z \ne \texttt{NULL}$ for $z \in [3,k]$ then $\texttt{succ}(\pi_{z-1},1)=\texttt{succ}(\pi_{z-1},2)=\texttt{NULL}$. Also, $\texttt{succ}(\pi_k,1)=\texttt{succ}(\pi_k,2)=\texttt{NULL}$.

\noindent
{\bf Interval ranges associated with $\pi_i \in \Pi, 1 \le i \le k$.}  For $u \in V(T)$, $\texttt{rmost\_leaf}(u)+1$ is the node that is visited immediately after traversing the nodes in sub-tree rooted at $u$ in the  pre-order traversal of $T$.  We associate four ranges of nodes of $T$ with $\pi_i$.  The four ranges associated with $\pi_i$ denoted by $R_j(i), 1 \le j \le 4,$ are as follows.
\begin{enumerate}[i.]
    \itemsep0em 
    \item $R_1(i)$: Range of nodes visited before $a_i$ in the pre-order traversal of $T$. If $i=1$ and $a_1 > 1$ then $R_1(1)=[1,a_1-1]$ else $R_1(1)=\phi$.

    \item $R_2(i)$: Heavy sub-path $\pi_i$, $[a_i,b_i]$.
    \item $R_3(i)$: There are two cases depending on existence of $\texttt{succ}(\pi_i,1)$.
    \begin{enumerate}
        \itemsep0em 
        \item If $\texttt{succ}(\pi_i,1) \ne \texttt{NULL}$ then $R_3(i)=R_3^1(i) \cup R_3^2(i)$ where
        \begin{enumerate}
            \itemsep0em 
            \item $R_3^1(i)$: Range of nodes visited after $b_i$ and before the nodes of $T_{a_{i+1}}$, $R_3^1(i)=[b_i+1,a_{i+1}-1]$.
            \item $R_3^2(i)$: Range of nodes visited after visiting the nodes of  $T_{a_{i+1}}$ and before the right-most leaf of $T_{b_i}$. If $\texttt{rmost\_leaf}(a_{i+1}) \ne \texttt{rmost\_leaf}(b_i)$ then $R_3^2(i)=[\texttt{rmost\_leaf}(a_{i+1})+1,\texttt{rmost\_leaf}(b_i)]$ else $R_3^2(i)=\phi$.
        \end{enumerate}
        \item If $\texttt{succ}(\pi_i,1) = \texttt{NULL}$ and $b_i \ne \texttt{rmost\_leaf}(b_i)$ then $R_3(i)=[b_i+1,\texttt{rmost\_leaf}(b_i)]$ else $R_3(i)=\phi$. Note that $b_i \ne \texttt{rmost\_leaf}(b_i)$ means that $b_i$ is not a  leaf.
    \end{enumerate} 
    \item $R_4(i)$: There are two cases depending on  $\texttt{succ}(\pi_i,2)$.
    \begin{enumerate}
        \itemsep0em 
        \item If $\texttt{succ}(\pi_i,2) \ne \texttt{NULL}$ then by definition $i=1$ and $R_4(1)=R_4^1(1) \cup R_4^2(1)$ where
        \begin{enumerate}
            \itemsep0em 
            \item $R_4^1(1)$: Range of nodes visited after nodes in $T_{b_1}$ and before nodes of $T_{a_z}$. If $\texttt{rmost\_leaf}(b_1)+1 \ne a_z$ then $R_4^1(1)=[\texttt{rmost\_leaf}(b_1)+1,a_z-1]$ else $R_4^1(1)=\phi$. Note that $\texttt{rmost\_leaf}(b_1)+1 \ne a_z$ means $b_1$ is not a leaf.
            \item $R_4^2(1)$: Range of nodes visited after nodes in $T_{a_z}$ and before the right-most node of $T_{a_1}$. If $\texttt{rmost\_leaf}(a_z) \ne \texttt{rmost\_leaf}(a_1)$ then $R_4^2(1)=[\texttt{rmost\_leaf}(a_z)+1,\texttt{rmost\_leaf}(a_1)]$  else $R_4^2(1)=\phi$. 
        \end{enumerate}
        \item If $\texttt{succ}(\pi_i,2) = \texttt{NULL}$ and $\texttt{rmost\_leaf}(a_i) \ne b_i$ then $R_4(i)=[\texttt{rmost\_leaf}(b_i)+1, \texttt{rmost\_leaf}(a_i)]$  else $R_4(i)=\phi$.
    \end{enumerate}
\end{enumerate}

\noindent
{\em Remark.} For each $1 \le i \le k$, $1 \le j \le 4, R_j(i) \subseteq [1,n]$ are intervals, and
 $R_1(i) \cup R_2(i) \cup R_3(i) \cup R_4(i) = V(T_{a_i}) \cup [1,a_i-1]$. Further, the range of nodes greater than $\texttt{rmost\_leaf}(a_i)$ is not relevant, as it follows from Lemma~\ref{lem:hpd_prop1} that the starting point of a path intersecting with $P$ should be in one of these ranges. See Figure~\ref{fig:heavysubpath_regions} for a pictorial representation of the ranges and Table~\ref{table:rangetable1} and~\ref{table:rangetable2} summarise the ranges.

%

\begin{figure}[ht]
\centering
\includegraphics[width=1\textwidth]{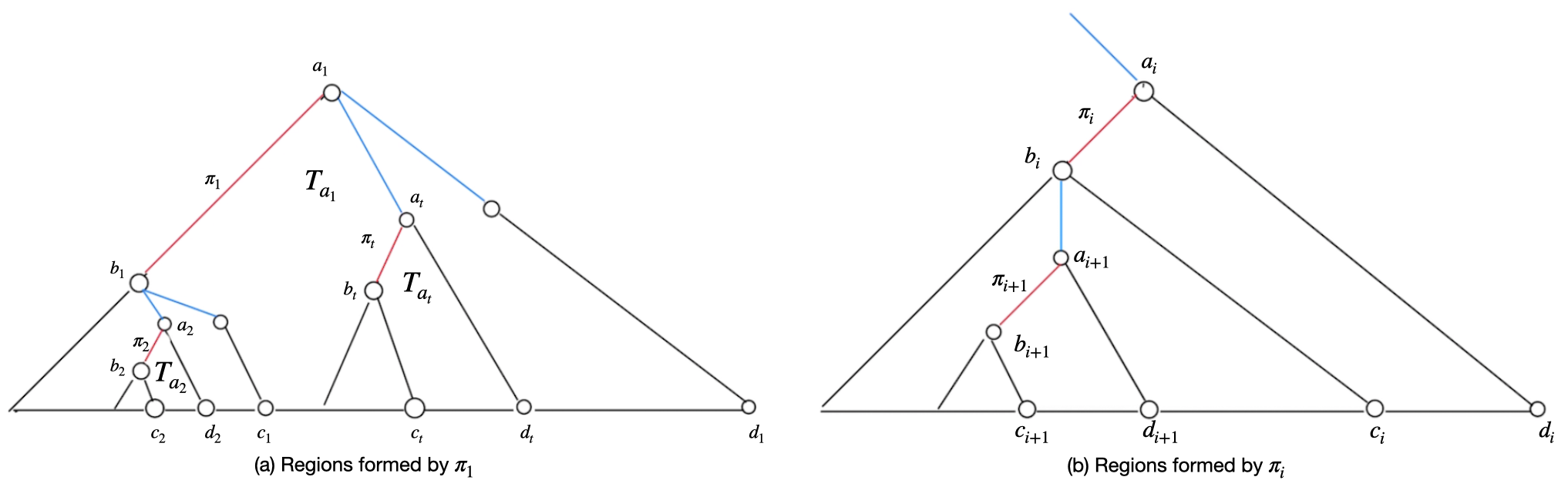}
\caption{The regions generated  by (a) $\pi_1$, and (b) $\pi_i, 2 \le i \le n$. For $j \in \{1,2,t,i,i+1\}$, $\texttt{rmost\_leaf}(a_j)=d_j$ and $\texttt{rmost\_leaf}(b_j)=c_j$.}
\label{fig:heavysubpath_regions}
\end{figure}

\begin{table}[!h]
\centering
\begin{tabular}{|p{1cm}||p{1cm}|p{4cm}|p{8cm}| }\hline
 {}& {} & {\bf Condition} & {\bf Range } \\\cline{2-3}\cline{3-4}
\multirow{7}{*}{$i=1$} & & & \\
    & $R_1$ & $a > 1$ & $R_1(1)=[1,a_1-1]$\\
    & & & \\\cline{3-4}
    & & & \\
    & & $a=1$ & $R_1(1)=\texttt{NULL}$\\
    & & & \\\cline{2-3}\cline{3-4}
    & & & \\
    & $R_2$ &  & $R_2(1)=[a_1,b_1]$\\
    & & & \\\cline{2-3}\cline{3-4}
    & & & \\
    & $R_3$ & $\texttt{succ}(\pi_1,1) \ne \texttt{NULL}$ & $R_3(1)=R_3^1(1) \cup R_3^2(1)$\begin{enumerate}
                    \item $R_3^1(1)$:  $R_3^1(1)=[b_1+1,a_2-1]$.
                    \item $R_3^2(1)$: If $\texttt{rmost\_leaf}(a_2) \ne \texttt{rmost\_leaf}(b_1)$ then $R_3^2(1)=[\texttt{rmost\_leaf}(a_2)+1,\texttt{rmost\_leaf}(b_1)]$ else $R_3^2(1)=\texttt{NULL}$.
                \end{enumerate} \\
    & & & \\\cline{3-4}
    & & & \\
    &       & $\texttt{succ}(\pi_1,1) = \texttt{NULL}$ & If $\texttt{rmost\_leaf}(a_2) \ne \texttt{rmost\_leaf}(b_1)$ then $R_3^2(1)=[\texttt{rmost\_leaf}(a_2)+1,\texttt{rmost\_leaf}(b_1)]$ else $R_3^2(1)=\texttt{NULL}$ \\
    & & & \\\cline{2-3}\cline{3-4}
    & & & \\
    
    & $R_4$  & $\texttt{succ}(\pi_1,2) \ne \texttt{NULL}$ & $R_4(1)=R_4^1(1) \cup R_4^2(1)$ \begin{enumerate}
                \item $R_4^1(1)$: If $\texttt{rmost\_leaf}(b_1)+1 \ne a_t$ then $R_4^1(1)=[\texttt{rmost\_leaf}(b_1)+1,a_t-1]$ else $R_4^1(1)=\texttt{NULL}$.
                \item $R_4^2(1)$: If $\texttt{rmost\_leaf}(a_t) \ne \texttt{rmost\_leaf}(a_1)$ then $R_4^2(1)=[\texttt{rmost\_leaf}(a_t)+1,\texttt{rmost\_leaf}(a_1)]$  else $R_4^2(1)=\texttt{NULL}$.
            \end{enumerate} \\\cline{3-4}
    & & & \\        
    &       & $\texttt{succ}(\pi_1,2) = \texttt{NULL}$ & If $\texttt{rmost\_leaf}(a_1) \ne b_1$ then $R_4(1)=[\texttt{rmost\_leaf}(b_1)+1, \texttt{rmost\_leaf}(a_1)]$ else $R_4(1)=\texttt{NULL}$\\
    & & & \\\hline
\end{tabular}
\caption{Ranges for heavy sub-path $\pi_1 \in  \Pi$.}
\label{table:rangetable1}
\end{table}

\begin{table}[!h]
\centering
\begin{tabular}{|p{1cm}||p{1cm}|p{4cm}|p{8cm}| }\hline
 {}& {} & {\bf Condition} & {\bf Range } \\\cline{2-3}\cline{3-4}
\multirow{7}{*}{$i \ne 1$} & & & \\
    & $R_1$ &  & $R_1(i)=\texttt{NULL}$\\
    & & & \\\cline{2-3}\cline{3-4}
    & & & \\
    & $R_2$ &  & $R_2(i)=[a_i,b_i]$\\
    & & & \\\cline{2-3}\cline{3-4}
    & & & \\
    & $R_3$ & $\texttt{succ}(\pi_i,1) \ne \texttt{NULL}$ & $R_3(i)=R_3^1(i) \cup R_3^2(i)$\begin{enumerate}
                    \item $R_3^1(i)$:  $R_3^1(i)=[b_i+1,a_{i+1}-1]$.
                    \item $R_3^2(i)$: If $\texttt{rmost\_leaf}(a_{i+1}) \ne \texttt{rmost\_leaf}(b_i)$ then $R_3^2(i)=[\texttt{rmost\_leaf}(a_{i+1})+1,\texttt{rmost\_leaf}(b_i)]$ else $R_3^2(i)=\texttt{NULL}$.
                \end{enumerate} \\\cline{3-4}
    & & & \\            
    &       & $\texttt{succ}(\pi_i,1) = \texttt{NULL}$ & If $\texttt{rmost\_leaf}(b_i) \ne \texttt{rmost\_leaf}(b_i)$ then $R_3(i)=[b_i+1,\texttt{rmost\_leaf}(b_i)]$ else $R_3(i)=\texttt{NULL}$ \\
    & & & \\\cline{2-3}\cline{3-4}
    & & & \\
    & $R_4$  & $\texttt{rmost\_leaf}(a_i) \ne b_i$ &   $R_4(i)=[\texttt{rmost\_leaf}(b_i)+1, \texttt{rmost\_leaf}(a_i)]$ \\
    & & & \\\cline{3-4}
    & & & \\
    & & $\texttt{rmost\_leaf}(a_i) = b_i$ & $R_4(i)=\texttt{NULL}$\\
    & & & \\
    \hline
\end{tabular}
\caption{Ranges for heavy sub-path $\pi_i \in  \Pi$.  Note that for $i \ne 1, \texttt{succ}(\pi_i,2)=\texttt{NULL}$.}
\label{table:rangetable2}
\end{table}

\noindent
{\bf  Characterising intersection of a path $Q$ with a heavy sub-path.} Let $Q=(s,t)$ be a  path in $\mathcal{P}$. Let the sequence of nodes in $Q$ be $(s=u_1,u_2,\ldots,u_z=t), 1 \le z \le n$. Let $y$ be the first node in the sequence such that $y \in V(P)$. 
From Lemma \ref{lem:vparti}, it follows that $\Pi$ partitions the vertices of $P$, and thus $y$ belongs to a unique $\pi \in \Pi$. \\
{\em Convention:} For ease of presentation, in the rest of this section, $Q=(s,t)$ is a path in $\mathcal{P}$, and $y$ denotes the first vertex in $(s=u_1,u_2,\ldots,u_z=t), 1 \le z \le n$ which is in $P$.\\
For  path $P$ and  path $Q \in \mathcal{P}$ we define the many-to-one function $\alpha:\mathcal{P} \times \mathcal{P} \rightarrow \Pi \cup \{\texttt{NULL}\}$ as follows.

$$
\alpha(P,Q)=\begin{cases}
            \texttt{NULL}, & \text{if }V(P) \cap V(Q) = \phi \\
            \pi \in  \Pi,  & \text{ if } y \in V(\pi)
		 \end{cases}
$$

\noindent
As a consequence of the definition of $\alpha(P,Q)$ we have the following lemma. 

\begin{lemma}
\label{lem:FFconditions}
For $1 \le i \le k$, $\alpha(P,Q)=\pi_i$ if and only if  exactly one of the following is true.
\begin{enumerate}
    \itemsep0em 
    \item $i=1$ and $s \in R_1(1)$ and $t \in V(T_{a_1})$.
    \item $s \in R_2(i)$
    \item $s \in R_3(i)$ and ${\normalfont \texttt{lca}}(s,t) \le b_i$  
    \item $s \in R_4(i)$ and ${\normalfont \texttt{lca}}(s,t) < b_i$ 
\end{enumerate} 
\end{lemma}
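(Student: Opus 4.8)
The plan is to prove both directions of the equivalence, together with the ``exactly one'' claim, around a single observation: because the labelling is by pre-order with heavy children visited first, knowing which of the four consecutive ranges $R_1(i),\dots,R_4(i)$ contains $s$ (plus the stated \texttt{lca}-condition) pins down the unique node at which the walk $(s=u_1,\dots,u_z=t)$ first meets $V(P)$. I would begin by recording two structural facts. Writing $e=\texttt{lca}(l,r)$ for the apex of $P$: since $e$ is an ancestor of every node of $P$ it carries the smallest pre-order label in $V(P)$, and as it lies in the $\prec$-smallest heavy sub-path this gives $e=a_1$ and hence $V(P)\subseteq V(T_{a_1})$. More generally, for each $i$ the node $a_i$ is the topmost node of the subpath $V(P)\cap V(T_{a_i})$, and for $i\ge 2$ that subpath runs monotonically downward from $a_i$ (its apex $e=a_1$ lies outside $T_{a_i}$). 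I would also invoke the Remark that $R_1(i)\,\dot\cup\,R_2(i)\,\dot\cup\,R_3(i)\,\dot\cup\,R_4(i)=[1,\texttt{rmost\_leaf}(a_i)]$ is a partition into consecutive intervals; this alone makes the four range-conditions pairwise incompatible, which already delivers the ``at most one'' half of ``exactly one''.

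For the only-if direction, assume $\alpha(P,Q)=\pi_i$, so the first hit $y$ lies in $V(\pi_i)\cap V(Q)$. The contrapositive of Lemma~\ref{lem:hpd_prop1}(3) then forces $s\le\texttt{rmost\_leaf}(a_i)$, so $s$ falls in exactly one of the four ranges and it remains to verify the accompanying \texttt{lca}-condition, which I would do by contradiction. If $s\in R_3(i)$ but $\texttt{lca}(s,t)>b_i$, then $\texttt{lca}(s,t)$ is a strict descendant of $b_i$, so $V(Q)\subseteq V(T_{\texttt{lca}(s,t)})$ is disjoint from the chain $[a_i,b_i]$ and $y\notin V(\pi_i)$. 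If $s\in R_4(i)$, then $s$ hangs off a node $p\in[a_i,b_i-1]$ of $\pi_i$ through a light child; because $s$ is visited only after all of $T_{b_i}$, that light subtree is disjoint from $T_{b_i}$ and carries labels exceeding $b_i$, so $\texttt{lca}(s,t)\ge b_i$ would push $\texttt{lca}(s,t)$ into that light subtree ($\texttt{lca}(s,t)=b_i$ is impossible since $b_i$ is not an ancestor of $s$), again giving $V(Q)\cap V(P)=\phi$. Finally $R_1(i)$ is nonempty only for $i=1$, and there $s<a_1$ with $V(P)\subseteq V(T_{a_1})$ forces $t\in V(T_{a_1})$: otherwise the $s$--$t$ walk never enters $T_{a_1}$ and $Q$ misses $P$.

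For the if direction I would exhibit $y$ explicitly in each case. If $s\in R_2(i)$ then $s\in[a_i,b_i]$, so $s\in V(\pi_i)$ by Lemma~\ref{lem:hpd_prop1}(1) and $y=s$. If $i=1$, $s\in R_1(1)$ and $t\in V(T_{a_1})$, then the walk enters $T_{a_1}$ only through $a_1\in V(P)$ and the prefix before $a_1$ lies outside $T_{a_1}\supseteq V(P)$, so $y=a_1\in V(\pi_1)$. If $s\in R_3(i)$ with $\texttt{lca}(s,t)\le b_i$, then $b_i$ is an ancestor of $s$ lying at or below $\texttt{lca}(s,t)$, and every node strictly between $s$ and $b_i$ on the upward walk sits in $T_{b_i}$ but outside $T_{a_{i+1}}$ (hence off $P$, using Proposition~\ref{prop:pathnsubtree}), so $y=b_i$. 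If $s\in R_4(i)$ with $\texttt{lca}(s,t)<b_i$, then by the label computation above $\texttt{lca}(s,t)$ cannot lie in the light subtree off $p$, so the upward walk reaches $p\in V(\pi_i)$ with no earlier hit and $y=p$. In every case $y\in V(\pi_i)\cap V(Q)$, hence $\alpha(P,Q)=\pi_i$.

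The main obstacle, and the step I would write most carefully, is the analysis of $R_4$ (and its converse): the whole argument hinges on converting ``$s$ is visited after all of $T_{b_i}$'' into ``the branch subtree containing $s$ has labels strictly above $b_i$,'' which is precisely where the heavy-child-first pre-order convention and Proposition~\ref{prop:pathnsubtree} do the real work, and it is also what distinguishes the condition $\texttt{lca}(s,t)<b_i$ in clause~4 from $\texttt{lca}(s,t)\le b_i$ in clause~3. By contrast $R_1,R_2,R_3$ reduce to routine ``entry point into a subtree'' observations, and the ``exactly one'' conclusion follows for free from the disjointness of the ranges established at the outset.
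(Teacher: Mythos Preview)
Your approach is essentially the paper's: in both, the forward direction locates $s$ among the ranges $R_j(i)$ and then verifies the accompanying \texttt{lca}-condition, the backward direction exhibits the first hit $y$ explicitly in each of the four cases, and the ``exactly one'' clause is obtained from disjointness of the ranges. Your write-up is in fact more careful than the paper's in articulating \emph{why} clause~3 carries $\texttt{lca}(s,t)\le b_i$ while clause~4 needs the strict $\texttt{lca}(s,t)<b_i$.

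One small point to tighten: the partition identity you invoke from the Remark, $R_1(i)\,\dot\cup\cdots\dot\cup\,R_4(i)=[1,\texttt{rmost\_leaf}(a_i)]$, is not literally true when $\texttt{succ}(\pi_i,1)$ (or, for $i=1$, $\texttt{succ}(\pi_1,2)$) exists---the subtree $V(T_{a_{i+1}})$ (respectively $V(T_{a_z})$) is deliberately carved out of $R_3(i)$ (respectively $R_4(1)$). So in the only-if direction you cannot conclude directly from $s\le\texttt{rmost\_leaf}(a_i)$ that $s$ lands in one of the four ranges; you also need the easy observation that $s\in V(T_{a_{i+1}})$ would force the walk from $s$ to meet $P$ first inside $\pi_j$ for some $j>i$, contradicting $\alpha(P,Q)=\pi_i$. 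The paper's own proof glosses over exactly the same point, so this is not a divergence in strategy, just a place where both arguments should be a line longer.
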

\begin{proof}
Let $\alpha(P,Q)=\pi_i$ then the position of the starting node of $Q$ has three possibilities relative to $\pi_i=(a_i,b_i)$. They are as follows:
\begin{enumerate}
    \itemsep0em 
    \item $s < a_i$: This can happen only when $i=1$. By  Proposition~\ref{prop:pathnsubtree}, $a_1 \in V(Q)$ and $y=a_1$. In  this case, $s \in R_1(1)$ and $t \in V(T_{a_1})$. For $i \ne 1$, any  path $Q$ with $l \in R_1(i)$ and $t \in T_{a_i}$ will have to pass through $b_{i-1}$. This implies $\alpha(P,Q) \ne \pi_i$ and thus a contradiction. 
    \item $a_i \le s \le b_i$: By Lemma~\ref{lem:hpd_prop1}, $s = y \in [a_i,b_i]$ that is $s \in R_2(i)$. 
    \item $s > b_i$: In this case, $s \ne y$ and  $y \in [a_i,b_i]$. Depending on the regions of $\pi_i$ as described above we have  two  possibilities as shown below:
    \begin{enumerate}
        \itemsep0em 
        \item $s \in R_3(i)$ and $\texttt{lca}(s,t) \le b_i$. By Proposition~\ref{prop:pathnsubtree}, $b_i \in V(Q)$  and $y = b_i$. 
        \item $s \in R_4(i)$ and $\texttt{lca}(s,t) < b_i$. In this case, $y \in [a_i,b_i-1]$.
    \end{enumerate}
\end{enumerate}
Exactly one of the conditions is satisfied as the ranges $R_j(i)$ are non-overlapping and paths  start in any one  of the ranges. On the other hand, if there exists an $i$ such  that any one of the four conditions as given below is true, then we show that $\alpha(P,Q)=\pi_i$. 
\begin{enumerate}
    \itemsep0em 
    \item $s \in R_1(1)$ and $t \in V(T_{a_1})$: In this case, $y=a_1$ and since $y \in V(\pi_1)$, $\alpha(P,Q)=\pi_1$.
    \item $s \in R_2(i)$: In this case, $y=s$ and since $y \in V(\pi_i)$, $\alpha(P,Q)=\pi_i$.
    \item $s \in R_3(i)$ and $\texttt{lca}(s,t) \le b_i$: Since $l \in R_3(i)$ and $\texttt{lca}(l,r) \le b_i$, $y=b_i$. Since $y \in V(\pi_i)$, $\alpha(P,Q)=\pi_i$.
    \item $s \in R_4(i)$ and $\texttt{lca}(s,t) < b_i$: Since $s \in R_4(i)$ and $\texttt{lca}(s,t) < b_i$, $y \in [a_i,b_i-1]$. Since $y \in V(\pi_i)$, $\alpha(P,Q)=\pi_i$. 
\end{enumerate}
\end{proof}

\noindent
{\bf Function $\texttt{check}$$\alpha$}. This is a useful function that returns true if $\alpha(P,Q)=\pi_i, \pi_i \in \Pi,$ based on conditions of Lemma~\ref{lem:FFconditions}. 

\begin{lemma}
\label{lem:checkalpha}
For path $P \in \mathcal{P}$, given as input the index $i$ of a heavy sub-path $\pi_i \in \Pi$, successors of $\pi_1 \in \Pi$ and another path $Q \in \mathcal{P}$, there exists a function \\ ${\normalfont \texttt{check}\alpha(i,Q,\Pi,\texttt{succ}(1,1),\texttt{succ}(1,2))}$ that checks $\alpha(P,Q)=\pi_i$ in constant time.
\end{lemma}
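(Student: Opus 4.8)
The plan is to convert the characterization of Lemma~\ref{lem:FFconditions} into an explicit procedure and verify that each step costs $O(1)$. Recall that $\alpha(P,Q)=\pi_i$ holds if and only if exactly one of the four conditions of Lemma~\ref{lem:FFconditions} is satisfied, where $Q=(s,t)$; each condition is a test of membership ``$s\in R_j(i)$'' possibly combined with a comparison of $\texttt{lca}(s,t)$ against $b_i$. Hence it suffices to show that, from the inputs $i$, $Q$, $\Pi$, $\texttt{succ}(\pi_1,1)$ and $\texttt{succ}(\pi_1,2)$, the function $\texttt{check}\alpha$ can (a) instantiate each of the ranges $R_1(i),\ldots,R_4(i)$ using the formulas of Tables~\ref{table:rangetable1} and~\ref{table:rangetable2}, and (b) perform the membership and $\texttt{lca}$ tests, all in constant time.

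First I would read off the endpoints that appear in the range formulas. Treating $\Pi$ as a list of $O(\log n)$ pairs indexed by position, the pair $\pi_i=(a_i,b_i)$ and, where a formula needs it, $\pi_{i+1}=(a_{i+1},b_{i+1})$ are fetched in $O(1)$ time. Every remaining quantity in those formulas is either a constant, an endpoint already fetched, or a pre-order label returned by an $O(1)$ call to $\texttt{rmost\_leaf}(\cdot)$ (Lemma~\ref{lem:2ntree}). Consequently each of the constantly many interval endpoints defining $R_1(i),\ldots,R_4(i)$ is evaluated in $O(1)$ time, a membership test $s\in R_j(i)$ reduces to two integer comparisons, and the side condition on $\texttt{lca}(s,t)$ is settled by a single $O(1)$ $\texttt{lca}$ query (Lemma~\ref{lem:2ntree}) followed by a comparison with $b_i$.

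The step needing care is choosing the correct branch of the range definitions, which is governed by the successor structure of $\pi_i$. For $i=1$ this is immediate: the two successors are supplied as input, so the procedure reads $a_2$ from $\texttt{succ}(\pi_1,1)$ and $a_z$ from $\texttt{succ}(\pi_1,2)$ and selects the matching row of Table~\ref{table:rangetable1}. For $i\neq1$ we have $\texttt{succ}(\pi_i,2)=\texttt{NULL}$ by definition, so only $\texttt{succ}(\pi_i,1)$ must be resolved; within a branch consecutive sub-paths are light edge separable, so $\texttt{succ}(\pi_i,1)=\pi_{i+1}$ except at a branch boundary. The procedure detects a boundary in $O(1)$ time by testing whether $i=k$ (with $k=|\Pi|$) or whether $\pi_{i+1}$ coincides with the supplied $\texttt{succ}(\pi_1,2)=\pi_z$ (equivalently $i=z-1$); in either case $\texttt{succ}(\pi_i,1)=\texttt{NULL}$, and otherwise $\texttt{succ}(\pi_i,1)=\pi_{i+1}$. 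This fixes the correct row of Table~\ref{table:rangetable2}.

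Combining the pieces, $\texttt{check}\alpha$ computes the relevant range(s), tests whether $s$ lies in one of them, and for the $R_3(i)$ and $R_4(i)$ cases additionally tests the $\texttt{lca}$ condition, returning true exactly when one of the four conditions of Lemma~\ref{lem:FFconditions} holds, i.e.\ exactly when $\alpha(P,Q)=\pi_i$. As every operation used is $O(1)$, the overall test is constant time. I expect the only real obstacle to be the bookkeeping of the successor/branch cases at $\pi_1$; once these are pinned down using the supplied successors and $|\Pi|$, everything reduces to a fixed number of comparisons and constant-time tree operations.
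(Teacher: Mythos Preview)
Your proposal is correct and follows essentially the same approach as the paper's proof: compute the four ranges $R_j(i)$ in $O(1)$ time using the endpoints from $\Pi$, the appropriate successor (from the input when $i=1$, and otherwise $\pi_{i+1}$ unless $i=k$ or $\pi_{i+1}=\texttt{succ}(\pi_1,2)$), and constant-time $\texttt{rmost\_leaf}$ calls, then test the four conditions of Lemma~\ref{lem:FFconditions} via integer comparisons and a constant-time $\texttt{lca}$ query. Your handling of the successor bookkeeping matches the paper's exactly.
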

\begin{proof}
The check can be done in the following manner.
\begin{enumerate}
    \itemsep0em 
    \item Compute the interval ranges $R_j, 1 \le j \le 4,$ of $\pi_i$ using its end points and its successor stored in  $\Pi$. If $i=1$ then we can get the successors $\texttt{succ}(1,1)$ and $\texttt{succ}(1,2)$ from the input. If $i \ne 1$ then it can be obtained from $\Pi$  as follows. For $i \ne 1$, it is $\pi_{i+1}$ unless $\pi_{i+1}=\texttt{succ}(1,2)$ or $i=k$ where $k$ is the number of heavy sub-paths in $\Pi$. Since there are only four ranges and from  Lemma~\ref{lem:2ntree}, $\texttt{rmost\_leaf}$ takes constant time, the ranges can be computed in constant time.
    \item Check if $\alpha(P,Q)=\pi_i$ based on Lemma~\ref{lem:FFconditions}. It takes constant time as the  four checks are based on comparisons and from Lemma~\ref{lem:2ntree}, $\texttt{lca}(s,t)$ can be computed in constant time. If $\alpha(P,Q)=\pi_i$ then $\texttt{check}$$\alpha$ returns true.
\end{enumerate}
Thus, $\texttt{check}$$\alpha$ checks $\alpha(P,Q)=\pi_i$ in constant time. 
\end{proof}

\noindent
For $\pi \in \Pi$, let $\beta(\pi)=\{Q \mid Q \in \mathcal{P} \text{ and } \alpha(P,Q)=\pi \}$.  We have the following lemma.

\begin{lemma}
\label{lem:disjointF}
For all distinct $\pi, \pi' \in \Pi, \beta(\pi) \cap \beta(\pi') = \phi$.
\end{lemma}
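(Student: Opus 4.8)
The plan is to observe that Lemma~\ref{lem:disjointF} is essentially the statement that the fibers of the map $Q \mapsto \alpha(P,Q)$ are pairwise disjoint, so it suffices to verify that this map is single-valued on $\mathcal{P}$. First I would recall that for each $Q \in \mathcal{P}$ with $V(P) \cap V(Q) \ne \phi$, the vertex $y$ is defined as the \emph{first} node in the ordered sequence $(s = u_1, u_2, \ldots, u_z = t)$ of $Q$ lying in $V(P)$; this $y$ is uniquely determined by $P$ and $Q$, being the minimum-index node of a fixed sequence satisfying a fixed membership condition.

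Next I would invoke the partition property. By Lemma~\ref{lem:vparti} the sets $V(\Phi^{-1}(v))$ partition $V(T)$, and in particular the heavy sub-paths in $\Pi$ partition $V(P)$ (as already noted immediately before the definition of $\alpha$). Hence the uniquely determined vertex $y \in V(P)$ lies in exactly one heavy sub-path $\pi \in \Pi$. Combined with the first step, this shows that $\alpha(P,Q)$ takes a single well-defined value for every $Q$: either $\texttt{NULL}$ when $V(P) \cap V(Q) = \phi$, or the unique $\pi \in \Pi$ containing $y$. If one wishes to make the ``exactly one $\pi$'' claim fully explicit, Lemma~\ref{lem:FFconditions} can be cited here, since it already shows that the ranges $R_j(i)$ are non-overlapping and that a starting node falls into exactly one of them.

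Finally I would conclude by a one-line contradiction argument: suppose $\pi \ne \pi'$ and some $Q \in \beta(\pi) \cap \beta(\pi')$. By the definition of $\beta$ this would give $\alpha(P,Q) = \pi$ and $\alpha(P,Q) = \pi'$ simultaneously, contradicting the single-valuedness of $\alpha(P,\cdot)$ established above; therefore no such $Q$ exists and $\beta(\pi) \cap \beta(\pi') = \phi$. I do not expect any genuine obstacle here — the lemma is just the disjointness of preimages of a function — so the only point requiring care is confirming that $\alpha$ is well-defined, and both ingredients for that (uniqueness of $y$ and the partition of $V(P)$ by $\Pi$) are already available.
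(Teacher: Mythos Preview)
Your proposal is correct and takes essentially the same approach as the paper: both observe that $\beta(\pi)$ is the preimage of $\pi$ under the function $\alpha(P,\cdot)$, so disjointness of distinct preimages is immediate. The paper's proof is a one-liner that simply invokes the fact that $\alpha$ is a function, whereas you additionally spell out why $\alpha$ is well-defined (uniqueness of $y$ and the partition of $V(P)$ by $\Pi$); this extra justification is sound but not strictly needed since well-definedness was already implicit in the definition of $\alpha$.
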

\begin{proof}
For each $\pi \in \Pi$, $\beta(\pi)$ is the pre-image of $\pi$ under the function $\alpha$.  Since $\alpha$ is a function, it follows that if $\pi \neq \pi'$, $\beta(\pi) \cap \beta(\pi') = \phi$.
\end{proof}

\noindent
Let the \textit{neighbourhood of a path} be the set of all paths that have non-empty intersection with it. We have the following theorem regarding neighbourhood.
\begin{lemma}
\label{lem:nhbfpi}
Let $N(P)$ denote the neighbourhood of $P$. $N(P)=\biguplus_{\pi \in \Pi}\beta(\pi)$.
\end{lemma}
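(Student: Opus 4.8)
The statement to be proved is the disjoint-union identity $N(P)=\biguplus_{\pi \in \Pi}\beta(\pi)$, so the plan is to establish two facts: (i) the family $\{\beta(\pi)\}_{\pi\in\Pi}$ is pairwise disjoint, and (ii) its union is exactly $N(P)$. Fact (i) is already in hand: Lemma~\ref{lem:disjointF} asserts precisely that $\beta(\pi)\cap\beta(\pi')=\phi$ for distinct $\pi,\pi'\in\Pi$, which is the disjointness encoded by $\biguplus$. Hence all the remaining content lies in proving the set equality in (ii), which I would do by double inclusion directly from the definition of the map $\alpha$.

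For the inclusion $N(P)\subseteq\bigcup_{\pi\in\Pi}\beta(\pi)$, take $Q\in N(P)$, so that $V(P)\cap V(Q)\neq\phi$ by the definition of the neighbourhood of a path. By the definition of $\alpha$, a non-empty intersection forces $\alpha(P,Q)\neq\texttt{NULL}$; concretely, the node sequence of $Q$ contains a first vertex $y$ lying in $V(P)$. By Lemma~\ref{lem:vparti} the heavy sub-paths in $\Pi$ partition $V(P)$, so $y$ belongs to exactly one $\pi\in\Pi$, and thus $\alpha(P,Q)=\pi$ is well defined. By the definition of $\beta$ this gives $Q\in\beta(\pi)$, and hence $Q$ lies in the union.

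For the reverse inclusion $\bigcup_{\pi\in\Pi}\beta(\pi)\subseteq N(P)$, take $Q\in\beta(\pi)$ for some $\pi\in\Pi$, so that $\alpha(P,Q)=\pi\neq\texttt{NULL}$. Reading the case definition of $\alpha$ backwards, $\alpha(P,Q)$ can take a value in $\Pi$ only when $V(P)\cap V(Q)\neq\phi$, since the only other branch returns \texttt{NULL}. Therefore $Q$ intersects $P$, and so $Q\in N(P)$. Combining the two inclusions with Lemma~\ref{lem:disjointF} yields the claimed disjoint union.

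I expect no serious obstacle here: the lemma is essentially a restatement of the fact that $\alpha$ is a well-defined function whose value is \texttt{NULL} exactly on the non-neighbours of $P$, combined with the partition property of $\Pi$. The only point deserving genuine care is the well-definedness of $\alpha$ on an intersecting pair — that the first intersection vertex $y$ lands in a \emph{unique} sub-path of $\Pi$ — and this is exactly what Lemma~\ref{lem:vparti} supplies.
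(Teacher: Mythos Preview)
Your proposal is correct and follows essentially the same approach as the paper's proof: both use Lemma~\ref{lem:disjointF} for the disjointness and the definition of $\alpha$ to show that $Q\in N(P)$ iff $\alpha(P,Q)\in\Pi$. Your version is simply more explicit, spelling out the reverse inclusion and invoking Lemma~\ref{lem:vparti} for the well-definedness of $\alpha$, whereas the paper leaves these points implicit.
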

\begin{proof}
For a path $Q \in N(P)$, $\alpha(P,Q) \neq NULL$, and thus $Q$ is an element of $\beta(\pi)$ for some $\pi \in Pi$.
By Lemma~\ref{lem:disjointF},  for each pair of distinct $\pi,\pi'$, $\beta(\pi) \cap \beta(\pi') = \phi$. Thus $N(P)=\biguplus_{\pi \in \Pi}\beta(\pi)$. 
\end{proof}

\noindent

\section{The Succinct Data Structure}
\label{sec:succrep}
In this section, we present the construction of the succinct representation followed by the implementation of the queries. The input to our construction procedure is $(T, \mathcal{P})$ obtained from the path graph $G$ using Gavril's method~\cite{Gavril_path} where $T$ is the clique tree and $\mathcal{P}=\{P_1,\ldots,P_n\}$ is the set of paths in it such that the paths correspond to vertices of $G$ and have a non-empty intersection of their vertex sets if and only if the corresponding vertices are adjacent. The construction procedure starts by pre-processing the clique tree as explained in the previous section followed by storing it and the paths in a space efficient manner. We demonstrate a polynomial time construction mechanism without worrying about the most optimal way.

\begin{figure}[ht]
\centering
\includegraphics[width=1\textwidth]{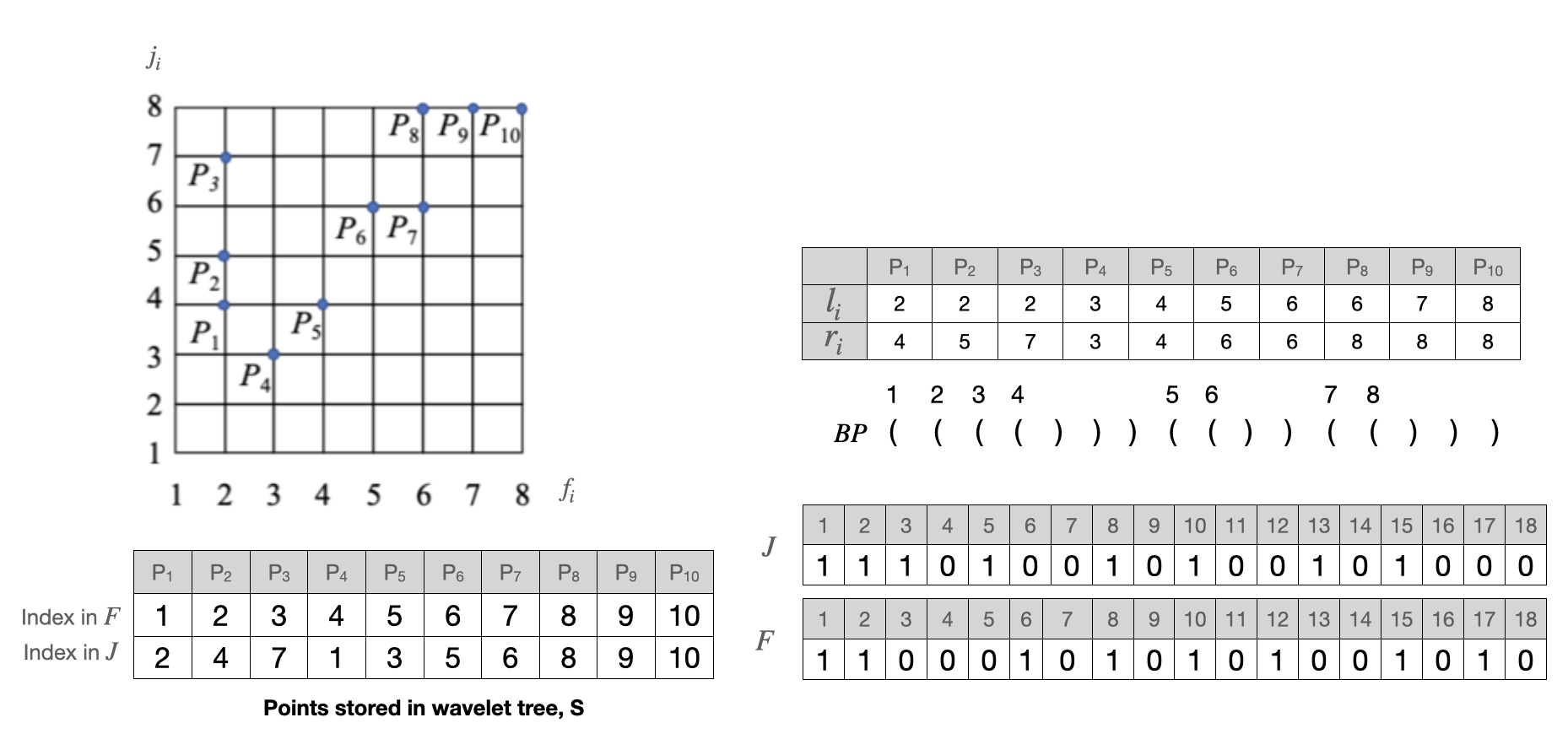}
\caption{Succinct representation for the  path graph $G$ of Figure~\ref{fig:pathgraph}. Top  left diagram shows the paths as points in a grid.  Bottom left shows the path aliases stored in the wavelet tree. The right side shows the $l_i$ and $r_i$ values of paths, the balanced parentheses representation of the pre-processed clique tree along with the $F$ and $J$ data structures.}
\label{fig:sucrep}
\end{figure}

\subsection{Construction of the Succinct Data Structure}
\label{sec:sdsconstruction}
Our succinct data structure for path graphs has two main  parts - the clique tree $T$ and the paths $\mathcal{P}=\{P_1,\ldots,P_n\}$ in it. The construction uses other compact data structures~\cite{Navarro} which are of the types: ordinal tree, bit vector, wavelet tree,  and array of sorted integers. In the next two sections we will explain the construction and storage of the clique tree and the paths in it.

\subsubsection{Storing the Clique Tree}
In this section we explain how the clique tree $T$ and its BP representation is stored succinctly.

\noindent
{\bf Clique tree $T$.} By Remark~\ref{rem:maxn}, the clique tree has at most $n$ nodes and is an ordinal tree. It is stored using $2n+o(n)$ bits using the data structure of 
Lemma~\ref{lem:2ntree}.  

\noindent
{\bf Bit-vector $BP$.} The balanced parentheses representation of $T$ is stored using the data structure of Lemma~\ref{lem:bstr} in bit-vector $BP$ using $2n+o(n)$ bits. In $BP$ the open and close parenthesis are represented by  bit  1 and  0, respectively. For every node $v$ in $T$ there exists two indices $i$ and  $j$ in $BP$ where $i<j$ such that $BP[i]=1$ and $BP[j]=0$. For some $1 \le i \le 2n$, if $BP[i]=1$ and $BP[i-1]=0$ then they represent the open and close parenthesis of nodes $v,u  \in V(T)$ that have a common parent $w$. Since $T$ is ordinal, in the order of children of $w$, $u$ comes immediately before $v$ and it is called $v$'s \textit{previous sibling}. The following three methods are supported by $BP$:
\begin{enumerate}
    \itemsep0em 
    \item $\texttt{getPreorder}(i)$: For $1 \le i \le 2n$ such that $BP[i]=1$, returns the pre-order label of the node which has its open parenthesis at $i$ in $BP$. It  is implemented by $\texttt{rank}(BP,1,i)$ for $i \ne 0$ and for $i=0$ it returns $1$. 
    \item $\texttt{getIndex}(v)$: Returns the index of the open parenthesis of $v \in V(T)$ in $BP$. It is implemented by $\texttt{select}(BP,1,v)$. 
    \item $\texttt{getHPStartNode}(v)$: Returns  the start node of heavy path $\pi$ that contains $v \in V(T)$ in constant time. If $v$ is not the first child, that is, it is adjacent to its parent by a light edge, then $v$ itself is returned else the method returns $\texttt{getPreorder}(\texttt{select}(BP,0,\texttt{rank}(BP,0,\texttt{getIndex}(v)))+1)$.
\end{enumerate}
\noindent
\begin{lemma}
\label{lem:getHPStartNode}
For $v \in V(T)$, ${\normalfont\texttt{getHPStartNode}}(v)$ returns in constant time the starting node of heavy path $\pi$ that contains $v$.
\end{lemma}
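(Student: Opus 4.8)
The plan is to check the two branches of $\texttt{getHPStartNode}(v)$ separately against the structure of heavy paths in the pre-order-labelled clique tree, and then to observe that each primitive it invokes runs in $O(1)$ time.

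First I would recall the shape of a heavy path. By the pre-processing of Section~\ref{sec:proprocessingT}, the first child of every node is its heavy child, so a heavy path $[a,b]$ is a chain $a \to a+1 \to \cdots \to b$ in which each node is the first child of the preceding one; by Proposition~\ref{prop:consecu} its labels are consecutive. Its top node $a$ is either the root or is joined to its parent by a light edge, hence is \emph{not} the first child of its parent; conversely every other node $v \in \{a+1,\dots,b\}$ is the first child of its parent, i.e.\ is joined to it by a heavy edge. This is exactly the dichotomy the function branches on: $v$ starts its heavy path precisely when $v$ is the root or is attached to its parent by a light edge.

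Next I would move to the $BP$ bit-vector. The key observation I will establish is that, in the balanced-parenthesis encoding produced by a pre-order traversal, the open parenthesis of a first child sits immediately to the right of its parent's open parenthesis; iterating along the heavy-edge chain, the open parentheses of $a, a+1, \dots, b$ therefore occupy the consecutive positions $p_a, p_a+1, \dots, p_b$, with no close parenthesis among them. With this in hand, the two branches follow. If $v$ is attached to its parent by a light edge (tested in $O(1)$ time via $BP[\texttt{getIndex}(v)-1]$, or equivalently via $\texttt{first\_child}$ and $\texttt{parent}$ of Lemma~\ref{lem:2ntree}), then $v=a$ and returning $v$ is correct. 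Otherwise $v=a+t$ for some $t\ge 1$, so $\texttt{getIndex}(v)=p_a+t$; because positions $p_a,\dots,p_a+t$ are all open parentheses, $\texttt{select}(BP,0,\texttt{rank}(BP,0,p_a+t))$ returns the position of the nearest preceding close parenthesis, which is $p_a-1$ when $a$ is not the root. Adding one gives $p_a$, and $\texttt{getPreorder}(p_a)=a$ is the desired start node.

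The step needing the most care is the root boundary case, where the nearest preceding close parenthesis does not exist: there $p_a=1$, so $\texttt{rank}(BP,0,p_a+t)=0$, and by the convention of Lemma~\ref{lem:bstr} we get $\texttt{select}(BP,0,0)=0$, whence $\texttt{getPreorder}(0+1)=\texttt{getPreorder}(1)=1=a$. I would also note that this analysis shows the formula is in fact correct on \emph{both} branches, the light-edge shortcut merely avoiding the rank/select calls when $v=a$. Constant running time is then immediate, since $\texttt{getHPStartNode}$ is a fixed composition of $\texttt{getIndex}$, $\texttt{rank}$, $\texttt{select}$ and $\texttt{getPreorder}$, each of which is $O(1)$ by Lemmas~\ref{lem:2ntree} and~\ref{lem:bstr}.
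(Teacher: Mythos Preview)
Your proposal is correct and follows essentially the same approach as the paper's proof: both arguments rest on the observation that along a heavy path $[a,b]$ each node is the first child of its predecessor, so the corresponding open parentheses occupy consecutive positions in $BP$, and hence the nearest preceding close parenthesis (located via $\texttt{rank}$ then $\texttt{select}$) sits immediately before the open parenthesis of $a$. The paper organizes the case analysis slightly differently (root vs.\ non-root, then $v=a$ vs.\ $v\neq a$), but the content and the handling of the root boundary via $\texttt{select}(BP,0,0)=0$ are the same.
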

\begin{proof}
We need to show that the method $\texttt{getHPStartNode}$ as implemented above indeed obtains the  start node of $\pi$ in constant time. As we use constant time methods of Lemma~\ref{lem:bstr}, $\texttt{getHPStartNode}$ also completes in constant time. To show that $\texttt{getHPStartNode}$ returns the starting node of $\pi$ we consider the two cases depending on $v$:
\begin{enumerate}
    \itemsep0em 
    \item \textit{When $v$ is the root node of $T$ i.e. $v=1$}: $\texttt{getIndex}(v)$ returns 1 when $v=1$ and $\texttt{select}(BP,0,\texttt{rank}(BP,0,1))$ returns 0. Further, $\texttt{getPreorder}$ on input 1 returns 1. Thus, $\texttt{getHPStartNode}$ returns $v$ when input $v$ is the root  node, as it is the start node of $\pi$.
    \item \textit{When $v$ is not the root  node of $T$ i.e. $v\ne 1$}: Let $BP[i]$ be the open parenthesis of $v$ and $x$ denote the starting node of $\pi$. Also, let $BP[j]$ be the closing parenthesis of the previous sibling of $x$ in $BP$. Since $BP[j+1]$ is the open parenthesis of $x$, the length of path from $v$ to $x$ is $l=i-j-1$. The base case is when $l=0$ that is when $v$ is the starting node of $\pi$. In this case, $\texttt{getHPStartNode}$ returns $v$ itself. When $l>0$,  $\texttt{getIndex}(v)$ returns the position $i$ of the open parenthesis of $v$ in $BP$. $\texttt{select}(BP,0,\texttt{rank}(BP,0,i))$ returns $j$, the index of the closing parenthesis of the previous sibling of $x$. $\texttt{getPreorder}(j+1)$ thus returns the start node of $\pi$ correctly. 
\end{enumerate}
\end{proof}

\subsubsection{Storing the Paths $P_1,\ldots,P_n$}
To store path $P_i, 1 \le i \le n$ we need to store its starting  node $l_i$  and its ending  node $r_i$ in a space efficient way. Let $M=(M_1,\ldots,M_n)$ and $N=(N_1,\ldots,N_n)$ be the sequence of starting and ending nodes of paths sorted in non-decreasing order, respectively. For $1 \le i \le n$, $M[i]$ is the starting node of path $P_i$. On the other hand, for $1 \le i \le n$, $N[i]$ is the $i-$th ending node in the non-decreasing sorted order of ending nodes.

\noindent
{\bf Bit-vectors $F$ and $J$.}  $M$ and $N$ are stored in data structures $F$ and $J$, respectively, using the data structure of Lemma~\ref{lem:ssn} taking $2n+o(n)$ bits each.

\begin{proposition}
For $1 \le i \le n, {\normalfont\texttt{accessNS}}(F,i)$ returns $M[i]$ stored in $F$ in constant time.
\end{proposition}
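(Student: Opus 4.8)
The plan is to treat this proposition as a direct instantiation of Lemma~\ref{lem:ssn}, so the real work reduces to checking that the sequence $M$ satisfies that lemma's two hypotheses and then invoking it. First I would recall how $M$ is defined: it is the list $(M_1,\ldots,M_n)$ of starting nodes $l_i$ of the paths, arranged in non-decreasing order. Being a sorted list, $M$ is non-decreasing, which is the first requirement of Lemma~\ref{lem:ssn}.

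The second requirement is that every entry lie in $[1,n]$. Here each $M_i$ is a pre-order label of a node of the clique tree $T$, and by Remark~\ref{rem:maxn} the clique tree of an $n$-vertex path graph has at most $n$ nodes; hence its pre-order labels form a subset of $[1,n]$, and each $M_i$ is a positive integer bounded above by $n$. With both hypotheses verified, Lemma~\ref{lem:ssn} applies to $M$.

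Applying the lemma yields the data structure $F$, occupying $2n+o(n)$ bits, that supports \texttt{accessNS} in constant time. By the correctness of the construction underlying that lemma (the value is encoded in unary-differential form and recovered by the single composition $\texttt{rank}(F,1,\texttt{select}(F,0,i))$, which telescopes to $M[i]$), the call $\texttt{accessNS}(F,i)$ returns precisely the $i$-th stored value $M[i]$. Since the entire argument is just confirming two conditions and quoting an already-proved lemma, there is no genuine obstacle; the only point that needs an explicit word of justification is the bound $M_i \le n$ on the node labels, and that is exactly what Remark~\ref{rem:maxn} supplies.
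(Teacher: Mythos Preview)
Your proposal is correct and matches the paper's approach: the paper states this proposition without proof, treating it as immediate from the construction of $F$ via Lemma~\ref{lem:ssn}, and your argument simply makes explicit the verification of that lemma's hypotheses (non-decreasing sequence with entries in $[1,n]$, the latter justified by Remark~\ref{rem:maxn}).
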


\begin{proposition}
For $1 \le i \le n, {\normalfont\texttt{accessNS}}(J,i)$ returns $N[i]$ stored in $J$ in constant time.
\end{proposition}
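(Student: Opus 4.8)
The plan is to observe that this proposition is an immediate instantiation of Lemma~\ref{lem:ssn} for the specific sequence $N$, so the work reduces entirely to checking that $N = (N_1, \ldots, N_n)$ satisfies the hypotheses of that lemma. First I would confirm the two required properties of $N$. The non-decreasing property is immediate from the definition: $N$ was defined as the sequence of ending nodes of the paths $P_1, \ldots, P_n$ \emph{sorted in non-decreasing order}, so $N_1 \le N_2 \le \cdots \le N_n$ by construction. For the range bound, each $N_i$ is the pre-order label of some ending node $r$ in the clique tree $T$; by Remark~\ref{rem:maxn} the clique tree has at most $n$ nodes, and the pre-order labelling assigns labels in $[1, |V(T)|] \subseteq [1, n]$, so $1 \le N_i \le n$ for every $i$.

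Having verified that $N$ is a non-decreasing sequence of $n$ positive integers each in $[1, n]$, I would then invoke Lemma~\ref{lem:ssn} directly: the data structure $J$ is precisely the structure produced by the construction of that lemma applied to the sequence $N$. Consequently $J$ occupies $2n + o(n)$ bits and supports $\texttt{accessNS}(J, i)$ in constant time, and by the semantics of $\texttt{accessNS}$ established in the lemma the call returns the $i$-th element of the stored sequence, which is exactly $N[i]$. This completes the argument.

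There is no real obstacle here; the statement is essentially a corollary of Lemma~\ref{lem:ssn}, and the only content is matching the sequence $N$ to the lemma's hypotheses. If I wanted to be fully self-contained I could additionally recall the underlying mechanism from the proof of Lemma~\ref{lem:ssn}, namely that the gaps $N_i - N_{i-1}$ (with $N_0 = 0$) are encoded in unary in a $2n$-bit string $B$ stored via Lemma~\ref{lem:bstr}, and that $\texttt{accessNS}(J, i)$ is realised as $\texttt{rank}(B, 1, \texttt{select}(B, 0, i))$, which runs in constant time since both \texttt{rank} and \texttt{select} do. This makes transparent that the returned value equals the partial sum of gaps up to index $i$, i.e. $N[i]$, but strictly speaking this repetition is unnecessary once Lemma~\ref{lem:ssn} is cited.
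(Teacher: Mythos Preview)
Your proposal is correct and matches the paper's treatment: the paper states this proposition without proof, treating it as an immediate consequence of Lemma~\ref{lem:ssn} applied to the sequence $N$, which is exactly the argument you give. Your additional verification that $N$ satisfies the hypotheses of Lemma~\ref{lem:ssn} (non-decreasing by construction, entries in $[1,n]$ since they are pre-order labels of a tree with at most $n$ nodes) is more than the paper provides but is the right justification.
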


\noindent
$F$ supports the following useful function too:
\begin{itemize}
    \item $\texttt{getPathCount}(d)$: Returns the number of paths that start at node $d \in V(T)$. When $\texttt{select}(F,1,d)$ is well defined and $F[\texttt{select}(F,1,d)+1]=0$, the count is obtained using the expression $\texttt{rank}(F,0,\texttt{select}(F,1,d+1)) - \texttt{rank}(F,0,\texttt{select}(F,1,d))$. In all other cases the function returns 0.
\end{itemize}
\begin{lemma}
For $x \in V(T)$, method ${\normalfont\texttt{getPathCount}}(x)$ returns $|\{P_i: P_i \in \mathcal{P}, l_i=x\}|$ where $l_i$ is the starting node of path $P_i$ in constant time. 
\end{lemma}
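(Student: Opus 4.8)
The plan is to unfold the gap (unary) encoding underlying $F$ and reduce the claim to a careful count of $0$-bits lying between two consecutive $1$-bits. Recall from Lemma~\ref{lem:ssn} that $F$ stores the non-decreasing sequence $M=(M_1,\dots,M_n)$ of starting nodes by writing, for each $i$, exactly $M_i-M_{i-1}$ ones (with the convention $M_0=0$) followed by a single $0$. Hence the $i$-th $0$-bit of $F$ encodes the element $M_i$, and the number of $1$-bits preceding it telescopes to $M_1+(M_2-M_1)+\dots+(M_i-M_{i-1})=M_i$. I would record this as the basic invariant: a $0$-bit has exactly $d$ ones before it if and only if its corresponding element $M_i$ equals $d$, i.e. if and only if some path $P_i$ has starting node $l_i=d$.

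First I would translate this invariant into positional terms. Writing $p_d=\texttt{select}(F,1,d)$ for the position of the $d$-th $1$ in $F$, the $0$-bits having exactly $d$ preceding ones are precisely those lying strictly between $p_d$ and $p_{d+1}$. Since $M$ is sorted, these are exactly the indices $i$ with $M_i=d$, i.e. the paths with $l_i=d$. Counting them then becomes a matter of subtracting prefix counts of $0$-bits: as $p_d$ and $p_{d+1}$ are $1$-bit positions, $\texttt{rank}(F,0,p_{d+1})$ and $\texttt{rank}(F,0,p_d)$ count the $0$-bits strictly before $p_{d+1}$ and before $p_d$ respectively, so their difference equals $|\{i : M_i = d\}|$, which is exactly the quantity returned in the first branch of $\texttt{getPathCount}$.

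Next I would justify the two guards that route all remaining inputs to the answer $0$. If fewer than $d$ ones occur in $F$ (equivalently $d$ exceeds the largest starting node) then $\texttt{select}(F,1,d)$ is undefined and no path starts at $d$, so returning $0$ is correct. Otherwise I would argue that $d$ is the starting node of some path if and only if $F[p_d+1]=0$: when $d$ is realised, the first element equal to $d$ contributes its closing $0$ immediately after its last increment bit, which is precisely the $d$-th one, whence $F[p_d+1]=0$; conversely, if the sequence skips $d$, then the $d$-th one sits inside a run of increment bits and is followed by another $1$, so $F[p_d+1]=1$ and the function correctly returns $0$.

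Finally, the constant-time claim is immediate, since the routine performs only a constant number of $\texttt{rank}$, $\texttt{select}$, and single-bit access operations on $F$, each running in $O(1)$ time by Lemma~\ref{lem:bstr}. The delicate point I would be most careful about is the inclusive/exclusive bookkeeping of $\texttt{rank}$ evaluated at the $1$-bit positions $p_d$ and $p_{d+1}$, together with the boundary case where $d$ equals the maximum starting node so that $p_{d+1}$ is undefined; there one must verify that the stated conventions on $\texttt{select}$ and $\texttt{rank}$ still yield the correct residual count of trailing $0$-bits, and this is the step where an off-by-one error would be easiest to make.
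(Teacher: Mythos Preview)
Your proposal is correct and follows essentially the same approach as the paper: both arguments reduce the count to the number of $0$-bits lying between the $d$-th and $(d{+}1)$-th $1$-bits in the unary gap encoding, computed as $\texttt{rank}(F,0,\texttt{select}(F,1,d{+}1))-\texttt{rank}(F,0,\texttt{select}(F,1,d))$, and invoke Lemma~\ref{lem:bstr} for the $O(1)$ bound. Your treatment is in fact more careful than the paper's own proof, which handles only the main branch explicitly, whereas you also justify the two guard conditions and flag the boundary case $d=\max_i l_i$ where $\texttt{select}(F,1,d{+}1)$ may be undefined.
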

\begin{proof}
Let input $x \in [n]$ be a valid $l_i$ value of some path in $\mathcal{P}$, that is, $\texttt{select}(F,1,d)$ is well defined and $F[\texttt{select}(F,1,d)+1]=0$. If the $l_i$ value $x$ is repeating in $F$ then there will be a contiguous sequence of two or more 0's between the $x$-th 1 and the $x+1-$th 1.  Let $n_1$ be the number of 0's  before the $x+1-$st 1. It can be obtained using the expression $\texttt{rank}(F,0,\texttt{select}(F,1,d+1))$. Let $n_2$ be the number of 0's before the $x-$th 1. $n_2$ can be obtained using the expression $\texttt{rank}(F,0,\texttt{select}(F,1,d))$. The number of times $x$ is repeating is $n_1-n_2$. As per Lemma~\ref{lem:bstr} all these operations  can be done in constant time.
\end{proof}

\noindent
Next, we need to associate the path $P_i$ with its starting and ending nodes stored in $F$ and $J$. Starting node of $P_i$ is available directly from $F$ using $\texttt{accessNS}(F,i)$ whereas to get the ending node we need to associate it with its ending node's position in $J$. This association is established using a wavelet tree as described below.

\noindent
{\bf Wavelet tree $S$.} For each path $P_i, 1 \le i \le n$ we assign the tuple $(f_i,j_i)$ where $f_i$ and $j_i$ are indices of the $l_i$ and $r_i$ values in $M$ and $N$ respectively. Since paths are numbered based on the non-decreasing order of their starting nodes, $i = f_i$. In other words, $(f_i,j_i)$ acts as an alias for path $P_i=(l_i,r_i)$ and they have the following property.

\begin{lemma}
\label{lem:distpaths}
Let $\mathcal{P}'=\{(f_1,j_1),\ldots,(f_n,j_n)\}$  be the set of  aliases of paths in $\mathcal{P}$. The following are true:
\begin{enumerate}
    \itemsep0em 
    \item $1 \le k \ne l \le n, f_k \ne f_l$  and $j_k \ne j_l$
    \item $\mathcal{P}'$ can be stored using the wavelet tree $S$ using $n \log n + o(n \log n)$ bits of space such that $S$ supports the following method:
    \begin{enumerate}
        \itemsep0em 
        \item ${\normalfont\texttt{accessWT}}(S,i)$: For $i \in [n]$, returns $j_i$ in $O(\log n)$ time where $N[j_i]$ is the ending node of path $P_i$.
        \item ${\normalfont\texttt{searchWT}}(S,[i_1,i_2],[j_1,j_2])$: For $i_1,i_2,j_1,j_2 \in [n]$, returns $\{i \mid P_i \in \mathcal{P}', f_i \in [i_1,i_2] \text{ and } j_i  \in [j_1,j_2]\}$ in $O(\log n)$ time per path.
        \item ${\normalfont\texttt{countWT}}(S,[i_1,i_2],[j_1,j_2])$: For $i_1,i_2,j_1,j_2 \in [n]$, returns $\mid\{i \mid P_i \in \mathcal{P}', f_i \in [i_1,i_2] \text{ and } j_i  \in [j_1,j_2]\} \mid$ in $O(\log n)$ time.
    \end{enumerate}
\end{enumerate}
\end{lemma}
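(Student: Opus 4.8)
The plan is to treat the two parts of Lemma~\ref{lem:distpaths} separately: the first carries the only real content, while the second is essentially a direct instantiation of the wavelet-tree result of Lemma~\ref{lem:rsds}. Concretely, I would first establish that the aliases $(f_i,j_i)$ form a set of $n$ points with pairwise distinct $x$-coordinates and pairwise distinct $y$-coordinates lying in $[n]\times[n]$, and then invoke Lemma~\ref{lem:rsds} on this point set to obtain the stated space bound together with the three query primitives.

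For the distinctness claim, the $x$-coordinates are immediate: by the convention that paths are indexed in non-decreasing order of their starting nodes, we have $f_i=i$ for every $i$, so $f_k\ne f_l$ whenever $k\ne l$. The $y$-coordinates require a little more care, because two distinct paths may well terminate at the same node of $T$, so the \emph{values} of the ending nodes can repeat. The key observation is that $N$ is a sequence of exactly $n$ entries, one contributed by each path, sorted non-decreasingly; hence assigning to each path $P_i$ the position $j_i$ of its ending node inside $N$ defines a map from the set of paths onto $[n]$, \emph{provided ties are resolved consistently}. I would break ties among equal ending-node values by path index (equivalently, by the order induced by $F$), so that each of the repeated occurrences of a value in $N$ is claimed by exactly one path. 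With this rule, $i\mapsto j_i$ is a bijection and therefore injective, giving $j_k\ne j_l$ for $k\ne l$. I expect this tie-breaking bookkeeping --- making sure the map into positions of $N$ is well defined and injective even when ending nodes coincide --- to be the only genuine obstacle; everything else is routine.

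For the second part, once distinctness is in hand the precondition of Lemma~\ref{lem:rsds} is met by the point set $\mathcal{P}'=\{(f_i,j_i)\}_{i\in[n]}\subseteq[n]\times[n]$, so I would store $\mathcal{P}'$ in the wavelet tree $S$ of Lemma~\ref{lem:rsds}, inheriting the $n\log n(1+o(1))=n\log n+o(n\log n)$ space bound. The three methods are then read off from the three primitives of Lemma~\ref{lem:rsds}: $\texttt{accessWT}(S,i)$ is $\texttt{access}(S,i)$, which returns the $y$-coordinate of the unique point with $x$-coordinate $i$, namely $j_i$, in $O(\log n)$ time; and $\texttt{searchWT}$, $\texttt{countWT}$ are $\texttt{search}$, $\texttt{count}$ on the rectangle $[i_1,i_2]\times[j_1,j_2]$. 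The only cosmetic point to note is that, since $f_i=i$, reporting the points of $\mathcal{P}'$ lying in the query rectangle is the same as reporting their indices $i$; hence $\texttt{searchWT}$ returns $\{i : f_i\in[i_1,i_2],\, j_i\in[j_1,j_2]\}$ at $O(\log n)$ time per reported path, and $\texttt{countWT}$ returns its cardinality in $O(\log n)$ time, exactly as claimed. Figure~\ref{fig:sucrep} illustrates this point set and its storage.
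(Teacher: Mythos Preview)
Your proposal is correct and follows essentially the same approach as the paper: establish that the aliases $(f_i,j_i)$ have pairwise distinct coordinates in $[n]\times[n]$, and then invoke Lemma~\ref{lem:rsds} directly, delegating \texttt{accessWT}, \texttt{searchWT}, and \texttt{countWT} to the wavelet tree's \texttt{access}, \texttt{search}, and \texttt{count}. In fact you are more careful than the paper on one point: the paper simply asserts that each $P_i$ has its ending node stored at a unique index $j_i$ in $N$, whereas you correctly flag and resolve the tie-breaking issue when several paths share the same ending node.
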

\begin{proof}
The proof is as follows:
\begin{enumerate}
    \itemsep0em 
    \item $P_i, 1 \le i \le n$ has its starting and ending nodes stored at unique indices $f_i$ and $j_i$ in $M$ and $N$, respectively. This ensures that for $1 \le k \ne l \le n, f_k \ne f_l$  and $j_k \ne j_l$.
    \item For path $P_i, 1 \le i \le n$, the wavelet tree of Lemma~\ref{lem:rsds} stores $(f_i,j_i)$, where $M[i]$ and $N[j_i]$ are the starting and ending nodes of $P_i$.  $\texttt{accessWT}$, $\texttt{searchWT}$, and $\texttt{countWT}$ functions can be directly delegated to the $\texttt{access}$, $\texttt{search}$, and $\texttt{count}$ functions of the wavelet tree of Lemma~\ref{lem:rsds}. The time complexities also  follow from Lemma~\ref{lem:rsds}.
\end{enumerate}

\end{proof}

\noindent
{\bf Function \texttt{pathep}.} Given a path index  $i, 1 \le i \le n,$ we can now obtain its $l_i$ and $r_i$ values using the method $\texttt{pathep}$. The method takes $i$ as input and returns $(l_i,r_i)$ in $O(\log n)$ time as follows.
\begin{enumerate}
    \itemsep0em 
    \item $l_i=\texttt{accessNS}(F,i)$.
    \item $r_i = \texttt{accessNS}(J,\texttt{accessWT}(S,i))$.
\end{enumerate}

\begin{lemma}
\label{lem:pathep}
For $1 \le i \le n$, ${\normalfont\texttt{pathep}}(i)$  returns $(l_i,r_i)$ of $P_i$ in $O(\log n)$ time.
\end{lemma}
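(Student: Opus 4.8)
The plan is to verify the correctness of the two assignments in the definition of $\texttt{pathep}$ separately and then add up their running times. First I would handle the starting node. Recall that the paths are numbered by non-decreasing order of their starting nodes, so that $i = f_i$ and $M[i]$ is precisely the starting node $l_i$ of $P_i$. Hence the call $\texttt{accessNS}(F,i)$ returns $M[i] = l_i$, and by Lemma~\ref{lem:ssn} it does so in constant time. This settles the first step, both its correctness and its $O(1)$ cost.

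Next I would argue about the ending node, which is the only nontrivial part. The subtlety is that $J$ stores the ending nodes in their \emph{own} sorted order $N$, so position $i$ in $J$ is in general \emph{not} the ending node of $P_i$; one must first translate the path index $i$ into the position $j_i$ of $r_i$ within $N$. This is exactly the role of the alias $(f_i,j_i)$ stored in the wavelet tree: by construction $f_i = i$ and $N[j_i] = r_i$. Invoking Lemma~\ref{lem:distpaths}, $\texttt{accessWT}(S,i)$ returns $j_i$ in $O(\log n)$ time. Composing this with $\texttt{accessNS}(J,\cdot)$ and appealing once more to Lemma~\ref{lem:ssn}, the call $\texttt{accessNS}(J,j_i)$ returns $N[j_i] = r_i$ in constant time, so the second step returns $r_i$ correctly.

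Finally I would sum the costs: the first step is $O(1)$, while the second is dominated by the $O(\log n)$ wavelet-tree access, giving a total of $O(\log n)$ and establishing the claim. The only place requiring genuine care is the index translation via the wavelet tree, where one must confirm that feeding the path index $i$ as the $x$-coordinate retrieves the matching $y$-coordinate $j_i$ precisely because $i = f_i$; everything else is a direct appeal to the constant-time access guarantees of Lemmas~\ref{lem:ssn} and~\ref{lem:distpaths}.
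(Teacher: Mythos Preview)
Your proposal is correct and follows essentially the same approach as the paper's own proof: verify that $\texttt{accessNS}(F,i)$ yields $l_i$ in $O(1)$ time, then use $\texttt{accessWT}(S,i)$ to recover $j_i$ in $O(\log n)$ time and feed it to $\texttt{accessNS}(J,\cdot)$ to obtain $r_i$ in $O(1)$ time, for a total of $O(\log n)$. If anything, your write-up is slightly more explicit about why the path index $i$ coincides with the $x$-coordinate $f_i$ in the wavelet tree, which is a helpful clarification.
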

\begin{proof}
First we show that $\texttt{pathep}(i)$ returns the $l_i$ value of path $i$ correctly. The $l_i$ value of $i$ is the number of  1's before the $i-$th 0 in $F$ which is obtained by $\texttt{accessNS}(F,i)$. Now, we show that the correct $r_i$ value is returned by $\texttt{pathep}(i)$. To get the $r_i$ value which is stored in $J$ we have to  get the index $j$ of path $i$ in $J$. This can be obtained by querying $S$. We obtain the $r_i$ value from $J$ by $\texttt{accessNS}(J,\texttt{accessWT}(S,i))$. Since $\texttt{accessNS}$ takes constant time as per Lemma~\ref{lem:ssn} and $\texttt{accessWT}$ takes $O(\log n)$ time as per Lemma~\ref{lem:distpaths}, the total time  taken is $O(\log n)$ time. 
\end{proof}

\noindent
{\bf Function $\texttt{maprange}^F$/$\texttt{maprange}^J$.} Given range $[l,l']$ of starting nodes of paths as input, $\texttt{maprange}^F$ outputs the range $[j,j']$ where $j$ is the first index in $M$ such that $M[j] \ge l$  and $j'$ is the last index in $M$  such that  $M[j'] \le l'$.  
\begin{enumerate}
    \itemsep0em 
    \item $j$ is obtained using the  expression  $\texttt{rank}(F,0,\texttt{select}(F,1,l))+1$ that returns the index in $M$ of the first occurrence of $l$ or a value greater than $l$ but less than or equal to $l'$. 
    \item To obtain $j'$ we use the following steps:
    \begin{enumerate}
        \itemsep0em 
        \item If $M[\texttt{rank}(F,1,\texttt{select}(F,1,l')+1)]=l'$ then return \\  $\texttt{rank}(F,0,\texttt{select}(F,1,l')+1) + \texttt{getPathCount}(l')-1$. In other words, if $l'$  is present in  $M$ then $j'$ is the index of the last $l'$ in  $M$. To account for the repeating $l'$ we add to the the first occurrence of $l'$ in $M$ one less than the number of times the $l'$ value repeats. 
        \item If $M[\texttt{rank}(F,1,\texttt{select}(F,1,l')+1)] \ne l'$ then return \\ $\texttt{rank}(F,0,\texttt{select}(F,1,l'))$. If $l'$  is not present then $M[j']$ is a value that is less than $l'$ but greater than or equal to $l$. 
    \end{enumerate}
    $M[z], z \in [n]$ can be obtained using $\texttt{accessNS}(F,z)$.
\end{enumerate}

\begin{lemma}
\label{lem:maprange}
Given a  range $[l,l']$ of starting nodes where $l,l' \in [n]$, ${\normalfont \texttt{maprange}^F}(l,l')$ returns the range $[j,j']$ in constant time where $j$ and $j'$ are the smallest and largest indices in $M$ such that $M[j] \ge l$ and $M[j'] \le l'$.
\end{lemma}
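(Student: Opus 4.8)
The plan is to reduce correctness of $\texttt{maprange}^F$ to a single counting identity about the bit-vector $F$ and then read off both endpoints from it. First I would recall the encoding from Lemma~\ref{lem:ssn}: the non-decreasing sequence $M$ of starting nodes is written in $F$ as $M[1]$ ones, a zero, then $M[i]-M[i-1]$ ones and a zero for each subsequent $i$, so that the $i$-th zero is preceded by exactly $M[i]$ ones, i.e. $\texttt{accessNS}(F,i)=M[i]$. The whole argument then rests on the observation that, since the $i$-th zero carries exactly $M[i]$ ones before it, this zero occurs before the $v$-th one precisely when $M[i] < v$; because $\texttt{select}(F,1,v)$ is a one-bearing position, $\texttt{rank}(F,0,\texttt{select}(F,1,v))$ counts the zeros strictly before it and therefore equals the number of indices $i$ with $M[i] < v$.

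With this identity the left endpoint is immediate. Taking $v=l$, the number of starting nodes smaller than $l$ is $\texttt{rank}(F,0,\texttt{select}(F,1,l))$, and since $M$ is sorted the first index whose value is at least $l$ is one more than this count; this is exactly the stated expression for $j$, and it is correct whether or not $l$ itself occurs in $M$. For the right endpoint I would instantiate the same identity at $v=l'+1$: the number of starting nodes that are $\le l'$ equals $\texttt{rank}(F,0,\texttt{select}(F,1,l'+1))$, and because $M$ is sorted this count is itself the largest index $j'$ with $M[j']\le l'$. The two branches in the stated implementation are just an explicit realisation of this closed form: when $l'$ occurs in $M$ one locates its first occurrence and advances by its multiplicity through $\texttt{getPathCount}(l')$, and when $l'$ is absent one simply returns the number of zeros before the $l'$-th one. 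I would verify that both branches evaluate to $\texttt{rank}(F,0,\texttt{select}(F,1,l'+1))$, using $M[z]=\texttt{accessNS}(F,z)$ to decide presence of $l'$.

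The hard part will be precisely this case analysis together with the boundary convention of $\texttt{select}$. One must argue that when $l'$ is present the zero immediately following the $l'$-th one is the first element of value $l'$ (so that first-occurrence-plus-multiplicity indeed yields the last such index), and one must treat the extremal situation $l' \ge M[n]$, where $\texttt{select}(F,1,l'+1)$ is undefined and returns $0$ by the convention of Lemma~\ref{lem:bstr}, separately from the generic case; repeated values of $l'$ also have to be checked so that $j'$ names the last, not the first, index at value $l'$. Once these cases are settled, the constant-time claim follows by inspection: each branch performs a fixed number of $\texttt{rank}$, $\texttt{select}$, $\texttt{accessNS}$, and $\texttt{getPathCount}$ operations, all $O(1)$ by Lemmas~\ref{lem:bstr} and~\ref{lem:ssn} and the constant-time guarantee for $\texttt{getPathCount}$, together with $O(1)$ arithmetic.
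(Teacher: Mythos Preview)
Your proposal is correct and follows essentially the same approach as the paper: both verify the rank/select expressions by analysing what they count in the unary encoding of $M$ stored in $F$, treating separately whether the queried value is present in $M$. Your presentation is a touch cleaner in that you isolate the single identity $\texttt{rank}(F,0,\texttt{select}(F,1,v)) = |\{i : M[i] < v\}|$ up front and derive both endpoints from it, whereas the paper argues each case directly; but the substance and the constant-time accounting are the same.
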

\begin{proof}
First we will show that $j$ is computed correctly by the expression \\ $\texttt{rank}(F,0,\texttt{select}(F,1,l))+1$. In the unary encoding in $F$, $\texttt{select}(F,1,l)$ identifies the position $i$ of the $l-$th 1. If $l$ is present in $F$ then $F[i+1]$ is a 0 else its a 1. If $F[i+1]=0$ then $\texttt{rank}(F,0,i)+1$ returns the index $j$ of $l$ in $M$. On the other hand, if $F[i+1]=1$ then let $k$ be the smallest number such  that $F[i+k]=0$. In this case, $\texttt{rank}(F,0,i)+1$ returns the smallest index $j$ in $M$ of $l''>l$. Next, we show that $j'$ is returned correctly. If $l'$ is present in $M$ then $\texttt{rank}(F,0,\texttt{select}(F,1,l')+1)$ returns the index $j''$ of the first $l'$ in $M$. The largest index in $M$ of $l'$ is obtained by adding $\texttt{getPathCount}(l')-1$ to $j''$. On the other hand, if $l'$ is not in $M$ then $\texttt{rank}(F,0,\texttt{select}(F,1,l'))$ returns the largest index $j$ in $M$ of $l'' < l'$.  By Lemma~\ref{lem:bstr}, $\texttt{rank}$ and $\texttt{select}$ can be completed in constant time. Also, by Lemma~\ref{lem:ssn}, $\texttt{accessNS}$ takes constant time. Thus, $\texttt{maprange}^F$ completes in constant time.
\end{proof}

\noindent
We have a similar function, $\texttt{maprange}^J$ for mapping the range $[r,r']$ of ending nodes of paths to range $[j,j']$ such that $j$ is the smallest index in $N$ such that $N[j] \ge r$ and $j'$ is the largest index in $N$ such that $N[j'] \le r'$.

\begin{lemma}
\label{lem:maprangeJ}
Given a  range $[r,r']$ of ending nodes where $r,r' \in [n]$, ${\normalfont \texttt{maprange}^J}(r,r')$ returns the range $[j,j']$ in constant time where $j$ and $j'$ are the smallest and largest indices in $N$ such that $N[j] \ge r$ and $N[j'] \le r'$.
\end{lemma}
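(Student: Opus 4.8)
The plan is to mirror the proof of Lemma~\ref{lem:maprange} almost verbatim, exchanging the roles of $F$ and $J$ and of $M$ and $N$. The observation that makes this transfer legitimate is that $N$, like $M$, is a non-decreasing sequence of integers in $[n]$ stored via the data structure of Lemma~\ref{lem:ssn}; hence $J$ uses exactly the same unary differential encoding as $F$, namely $N_1$ ones followed by a zero, and thereafter $N_i - N_{i-1}$ ones followed by a zero for each $i>1$. Consequently every rank/select identity exploited in Lemma~\ref{lem:maprange} holds with $J$ in place of $F$, and the correctness argument carries over line by line.

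First I would define $\texttt{maprange}^J(r,r')$ by the analogous expressions. To compute $j$, the smallest index with $N[j] \ge r$, I would use $\texttt{rank}(J,0,\texttt{select}(J,1,r))+1$: the call $\texttt{select}(J,1,r)$ locates the $r$-th $1$, and the number of $0$'s weakly preceding that position counts exactly the entries of $N$ whose value is strictly below $r$, so adding one yields the first index whose value is at least $r$. To compute $j'$, the largest index with $N[j'] \le r'$, I would branch on whether $r'$ actually occurs in $N$, exactly as in the two cases of $\texttt{maprange}^F$: if $r'$ is present, then $j'$ is the index of its last occurrence, obtained by adding one less than its multiplicity to the index of its first occurrence; if $r'$ is absent, then $j'$ is the largest index whose value lies below $r'$, read off directly by a rank on the $0$'s.

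The only ingredient that does not already exist verbatim is the analog of $\texttt{getPathCount}$, which in Lemma~\ref{lem:maprange} supplies the multiplicity of $l'$ in $M$. I would therefore obtain the multiplicity of $r'$ in $N$ by the symmetric expression on $J$, namely the difference of the ranks of the $0$'s falling between the $r'$-th and $(r'+1)$-th $1$'s, which counts precisely the entries of $N$ equal to $r'$; this reuses the construction behind $\texttt{getPathCount}$ and introduces no new idea. I do not anticipate a genuine obstacle: the statement is the dual of Lemma~\ref{lem:maprange}, and both correctness and the constant running time follow from Lemmas~\ref{lem:bstr} and~\ref{lem:ssn} once the encoding of $N$ in $J$ is observed to coincide with that of $M$ in $F$. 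The single point needing care is the boundary handling of repeated values of $r'$, which is exactly where the multiplicity count enters and where an off-by-one slip would otherwise occur; I would verify this case explicitly by checking the position of the $1$ immediately following the $r'$-th $1$, as done for $\texttt{maprange}^F$.
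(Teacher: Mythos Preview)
Your proposal is correct and matches the paper's approach: the paper does not even write out a proof for this lemma, instead simply noting that \texttt{maprange}$^J$ is the analogous function on $J$ and $N$, so the argument of Lemma~\ref{lem:maprange} carries over verbatim with $F,M$ replaced by $J,N$. Your observation that an analog of \texttt{getPathCount} on $J$ is needed (and is obtained by the same rank/select expression) is exactly the one detail the paper leaves implicit.
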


\noindent
{\bf Bit-vector D.} Bit vector $D$ of size $n$ stores for each path a 1 if the path intersects with more than $\log n$ other paths else a 0. It supports the following function.
\begin{itemize}
    \item $\texttt{isLargeDegree}(i)$: Returns true if $D[i]=1$ else false in constant time.
\end{itemize}

\begin{lemma}
\label{lem:succinctds}
There exists an $n \log n + o(n \log n)$-bit succinct data structure for path graphs.
\end{lemma}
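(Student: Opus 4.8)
The plan is to tally the space used by each component of the succinct data structure built in Section~\ref{sec:sdsconstruction} and check that the total is $n\log n + o(n\log n)$ bits. The representation comprises five pieces: the clique tree $T$ stored as an ordinal tree, its balanced-parentheses bit-vector $BP$, the two bit-vectors $F$ and $J$ encoding the non-decreasing endpoint sequences $M$ and $N$, the wavelet tree $S$ holding the path aliases, and the bit-vector $D$ flagging large-degree paths.

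First I would quote the space bound already established for each piece. By Remark~\ref{rem:maxn} the clique tree has at most $n$ nodes, so by Lemma~\ref{lem:2ntree} it occupies $2n + o(n)$ bits, and by Lemma~\ref{lem:bstr} its parenthesis encoding $BP$ takes a further $2n + o(n)$ bits. Since $M$ and $N$ are each non-decreasing sequences of $n$ values in $[n]$, Lemma~\ref{lem:ssn} gives $2n + o(n)$ bits apiece for $F$ and $J$. The length-$n$ bit-vector $D$ contributes $n + o(n)$ bits by Lemma~\ref{lem:bstr}. Finally, by Lemma~\ref{lem:distpaths} the wavelet tree $S$ stores the $n$ aliases in $n\log n + o(n\log n)$ bits.

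I would then sum these contributions. Every structure except the wavelet tree is linear in $n$, hence $o(n\log n)$, so only the wavelet tree affects the leading term, giving a total of $n\log n + o(n\log n)$ bits. To justify calling this succinct, I would recall the $n\log n$ information-theoretic lower bound for unlabelled interval graphs from Acan et al.~\cite{HSSS}, which transfers to path graphs since the latter form a superclass.

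The main thing to verify carefully---though it has in fact already been discharged by the earlier lemmas---is that no auxiliary structure hides a super-linear cost, so that the wavelet tree alone determines the $n\log n$ leading term. The only genuinely non-trivial precondition is that the aliases $(f_i,j_i)$ have pairwise-distinct $x$- and $y$-coordinates, which is exactly part~1 of Lemma~\ref{lem:distpaths} and is precisely the requirement imposed by the range-search structure of Lemma~\ref{lem:rsds}; with that in hand the space accounting is routine and the proof is essentially a one-line summation.
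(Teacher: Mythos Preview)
Your proposal is correct and mirrors the paper's own proof: both enumerate the components (clique tree plus $BP$, the bit-vectors $F$, $J$, $D$, and the wavelet tree $S$), bound each using the corresponding lemma, observe that everything except $S$ is $O(n)$ and hence $o(n\log n)$, and then invoke the interval-graph lower bound from~\cite{HSSS} to justify the word ``succinct''. Your write-up is if anything slightly more careful, since you explicitly flag the distinct-coordinates precondition for the wavelet tree that the paper leaves implicit via Lemma~\ref{lem:distpaths}.
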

\begin{proof}
The space taken by the components of the succinct data structure for path graphs are as follows:
\begin{enumerate}
    \itemsep0em 
    \item The clique tree $T$ and its $BP$ representation takes $O(n)$ bits of space. This follows from Lemma~\ref{lem:2ntree} and~\ref{lem:bstr}.
    \item To store the end points of paths in $\mathcal{P}$ we have bit vectors $F$, $J$. From Lemma~\ref{lem:ssn} this also takes $O(n)$ bits.
    \item The wavelet tree stores the indices of paths in $M$ and $N$ and from Lemma~\ref{lem:distpaths} takes $n \log n + o(n \log n)$ bits.
    \item To improve  the degree query we  have the $n$ bit  vector $D$.
\end{enumerate}

The space complexity of the succinct representation is dominated by the space required for wavelet tree $S$. Thus, our representation takes $n \log n + o(n \log n)$ bits. This representation is succinct as it uses the permitted storage for succinct representation  of  interval graphs~\cite{HSSS} that is a proper sub-class of path graphs~\cite{agtpg}. 
\end{proof}

\subsection{Adjacency and Neighbourhood Queries}
\label{sec:supqueries}
In this section, we will present efficient implementations of adjacency and neighbourhood queries using the succinct representation as constructed in Section~\ref{sec:sdsconstruction}. In this section, as a consequence of Lemma~\ref{lem:succinctds}, the succinct representation for path graph $G$ is denote  as $(T, \mathcal{P})$.  Adjacency query, as will be shown in Lemma~\ref{lem:adjacency}, takes two path indices $i,j \in [n]$ and the succinct representation $(T, \mathcal{P})$ as input and returns true if the paths $P_i$ and $P_j$ have a non-empty intersection. The neighbourhood query, as will be shown in Lemma~\ref{lem:nhbrsds}, takes a single path index $i \in [n]$ and the succinct representation $(T, \mathcal{P})$ as input and returns the list of paths that have non-empty intersection with of the path $P_i$. The implementation of the queries depend on the following:
\begin{enumerate}
    \itemsep0em 
    \item Computing paths $P=(l,r)$ and $Q=(s,t)$ corresponding to $i$ and $j$, respectively using $\texttt{pathep}$ and $p=\texttt{lca}(l,r)$ in $O(\log  n)$ time.
    \item Computing $\Pi, k , \texttt{succ}(1,1)$ and $\texttt{succ}(1,2)$. From Lemma~\ref{lem:computePi}  that follows, Algorithm~\ref{alg:computePi} can do this in $O(\log n)$ time.
    \item Computing $\beta(\pi)$ for $\pi \in \Pi$. From Lemma~\ref{lem:computeBeta} that follows, $\beta(\pi)$ can be computed in $O(d \log n)$ time where $d$ is the  number of paths returned by $\beta(\pi)$.
\end{enumerate}


\begin{algorithm}[ht!]
\label{alg:computePi}
\caption{Given path $P=(l,r)$ as input, function $\texttt{compute}\Pi$ computes $\Pi$, $k$, $\texttt{succ}(1,1)$, and $\texttt{succ}(1,2)$. We assume that $\texttt{parent}(v)=0$ when $v$ is the root of the tree.}
\DontPrintSemicolon
    \SetKwFunction{FaddHSP}{addHSP}
    \SetKwFunction{FcomputePi}{compute$\Pi$}
    \SetKwFunction{FcomputePiH}{compute$\Pi$\_Helper}
    \SetKwProg{Fn}{Function}{:}{}
    \Fn{\FcomputePi{$l,r$}}{
        $p \leftarrow \texttt{lca}(l,r)$\;
        $k \leftarrow 0$\;
        $\Pi=\Pi'=\texttt{succ}(1,1)=\texttt{succ}(1,2)=\texttt{NULL}$\;
        \If{ $l \ne p$}{
            \FcomputePiH{$p,l,\Pi,k$} \;
            \FcomputePiH{$p,r,\Pi',k$} \;
            Add second entries of $\Pi$ and $\Pi'$ as $\texttt{succ}(1,1)$ and $\texttt{succ}(1,2)$ respectively \;
            Concatenate $\Pi'$  to $\Pi$ preserving the order $\prec$ \;
        }
        \Else{
            \FcomputePiH{$l,r,\Pi,k$} \; 
            \If{first entry in $\Pi$ has equal starting and ending  nodes}{
	           If starting node of the first entry in $\Pi$ is the parent of starting node of the second entry
	           then $\texttt{succ}(1,1)$  is the second entry in $\Pi$ and $\texttt{succ}(1,2)=\texttt{NULL}$ \;
            }
            \Else{
	           If ending node of the first entry in $\Pi$ is the parent of the starting node of the second  entry
	           then $\texttt{succ}(1,1)$ is the second entry in $\Pi$ and $\texttt{succ}(1,2)=\texttt{NULL}$ \;
            }
        }
    }
    \Fn{\FcomputePiH{$l,r,\Pi,k$}}{
	    \If {$l>r$} \KwRet \; 
	    $p=l$\;
	    $u \leftarrow  \texttt{getHPStartNode}(r)$ \;

       Increment $k$\;
       \If {$u >= p$}{  
        Add $(u,r)$ to beginning of $\Pi$ \;
        \FcomputePiH{$p,\texttt{parent}(u),\Pi,k$} \;
       }
      \Else{
        Add $(p,r)$ to beginning of $\Pi$ \;	
      }
    }      
\end{algorithm}

\begin{lemma}
\label{lem:computePi}
Given a path $P=(l,r)$, ${\normalfont\texttt{compute$\Pi$}}(l,r)$ computes $\Pi$, $k$, ${\normalfont\texttt{succ}}(1,1)$, and ${\normalfont\texttt{succ}}(1,2)$ for it in $O(\log n)$ time.
\end{lemma}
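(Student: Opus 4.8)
The plan is to prove the statement in two parts, correctness and running time, with the correctness argument resting on an auxiliary claim about the recursive routine \texttt{compute$\Pi$\_Helper}. The auxiliary claim I would establish is: for a node $x$ and an ancestor-or-equal $p$ of $x$ in $T$, a call \texttt{compute$\Pi$\_Helper}$(p,x,\Pi,k)$ prepends to $\Pi$, in top-to-bottom (i.e. $\prec$) order, exactly the maximal heavy sub-paths into which the vertical path from $p$ down to $x$ decomposes, and increments $k$ accordingly. I would prove this by induction on the number of light edges on the $p$--$x$ path. The engine is Lemma~\ref{lem:getHPStartNode}: $u=\texttt{getHPStartNode}(x)$ is the top of the heavy path $H$ through $x$, and since both $u$ and $p$ are ancestors of $x$ they lie on the root-to-$x$ chain and are therefore comparable, so by the pre-order labelling $u\ge p$ holds if and only if $u\in V(T_p)$ lies between $p$ and $x$. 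When $u\ge p$, the intersection of $H$ with the $p$--$x$ path is exactly $(u,x)$, the edge $(\texttt{parent}(u),u)$ is light because $u$ starts a heavy path, and the residual path $p$ to $\texttt{parent}(u)$ has one fewer light edge, so the inductive hypothesis applies to the recursive call and prepending $(u,x)$ restores the order; when $u<p$ the whole segment $p$--$x$ lies on $H$, hence is the single heavy sub-path $(p,x)$ that the routine adds before stopping. Prepending the deepest sub-path first yields the claimed top-to-bottom order.

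From this claim I would derive correctness of $\Pi$ and $k$ by splitting on $p=\texttt{lca}(l,r)$. When $l=p$, the path $P$ is itself the vertical path $p$--$r$, so a single helper call produces $\Pi$ and $k$ directly and $\pi_1$ (the first entry) carries $p$. When $l\ne p$, the vertex $p$ is a strict common ancestor and $P$ is the concatenation of the two vertical branches $p$--$l$ and $p$--$r$; the two helper calls decompose them, and by the subtree-nesting of pre-order labels (Lemma~\ref{lem:hpd_prop1}, together with the convention that the heavy child is visited first) every heavy sub-path of the branch carrying $p$'s heavy path precedes, in $\prec$, every sub-path of the other branch --- precisely what the concatenation ``preserving $\prec$'' must reproduce. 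The bound $k\le 2\lceil\log n\rceil+1$ of Lemma~\ref{lem:pathpartition} confirms the produced list has admissible length.

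I would then verify that $\texttt{succ}(1,1)$ and $\texttt{succ}(1,2)$ match the formal definition of \texttt{succ}. For $l\ne p$ this is the definition-case in which $\pi_1$ has two successors: the routine reads the second entries of the two branch lists, and I must check they are exactly the first sub-paths reached from $\pi_1$ across a light edge down each branch, with the slot assignment ($1$ versus $2$) made consistently with the interval ranges $R_3,R_4$ of Tables~\ref{table:rangetable1} and~\ref{table:rangetable2}. For $l=p$ I would match the singleton / non-singleton test and the \texttt{parent} checks in the \texttt{Else} block to the sub-cases of the definition (whether $\texttt{parent}(a_2)$ equals $a_1$ or $b_1$), using that along a purely downward path the light edge leaving $\pi_1$ always attaches at $b_1$. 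The genuinely delicate issue, and the step I expect to cost the most care, is the treatment of the apex $p$: both branch decompositions legitimately place $p$ in their topmost sub-path --- as a singleton on the branch that does not continue $p$'s heavy path, and as the top of an extended sub-path on the branch that does --- so I must argue that exactly one of these becomes $\pi_1$, that $p$ is not double counted after concatenation, and that the two ``second entries'' then name the correct successor on each side.

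Finally, for the running time, I would observe that \texttt{compute$\Pi$} calls \texttt{lca} once (constant time, Lemma~\ref{lem:2ntree}), performs at most two helper invocations, and one concatenation. Each recursive step of the helper performs a single \texttt{getHPStartNode} (constant time, Lemma~\ref{lem:getHPStartNode}), a single \texttt{parent} (constant time, Lemma~\ref{lem:2ntree}) and a constant-time prepend, and emits exactly one heavy sub-path before recursing; hence the number of steps equals the number of heavy sub-paths, which is $O(\log n)$ by Lemma~\ref{lem:pathpartition}, and the concatenation and successor bookkeeping touch only $O(\log n)$ entries. Therefore the total running time is $O(\log n)$, completing the proof.
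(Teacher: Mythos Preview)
Your proposal is correct and follows essentially the same approach as the paper: split on whether $p=\texttt{lca}(l,r)$ equals $l$, use the recursive helper to peel off heavy sub-paths via \texttt{getHPStartNode}, and bound the number of iterations by Lemma~\ref{lem:pathpartition}. If anything you are more careful than the paper's own proof, which does not explicitly justify the successor assignments or address the apex double-counting issue you flag; your inductive framing of the helper's correctness is also a genuine improvement over the paper's descriptive walk-through.
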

\begin{proof} 
First we show  that $\texttt{compute}\Pi$ of Algorithm~\ref{alg:computePi} computes the heavy sub-paths of $P$ as in Lemma~\ref{lem:pathpartition}. Function $\texttt{compute}\Pi$ depends on the function $\texttt{compute$\Pi$\_Helper}$ to compute the heavy sub-paths. Paths are of two types depending on whether the lca is same as its starting node. Based on this distinction different steps are executed in the function $\texttt{compute}\Pi$; see Line 5 of Algorithm~\ref{alg:computePi}. 
\begin{enumerate}
    \itemsep0em 
    \item Type 1 paths: If lca of $P$ is not equal to $l$ then the heavy sub-paths that comprise the sub-path from $p$ to $l$ are computed first. This is followed by computing the heavy sub-paths that comprise the sub-path from $p$ to $r$. This is done using the function $\texttt{compute$\Pi$\_Helper}$ as shown in Line 6 and 7 of Algorithm~\ref{alg:computePi}. $\texttt{compute$\Pi$\_Helper}$$(p,l,\Pi,k)$ computes the heavy sub-paths recursively till $\pi_1$; see Line 6 of Algorithm~\ref{alg:computePi}. Starting at $l$, the starting node of the heavy sub-path to which it belongs is  obtained by using $\texttt{getHPStartNode}$; see Line 21 of Algorithm~\ref{alg:computePi}. The set of heavy sub-paths are computed in this manner till $p$ is reached; see Line 23 to 25 of Algorithm~\ref{alg:computePi}. Similar steps are performed for $\texttt{compute$\Pi$\_Helper}$$(p,r,\Pi,k)$; see Line 7 of Algorithm~\ref{alg:computePi}. This gives us the end points of the heavy sub-paths of $P$.
    \item Type 2 paths: If lca of $P$ is equal to $l$ then the heavy sub-paths comprising the only sub-path from $l=p$ to $r$ is computed using the function $\texttt{compute$\Pi$\_Helper}$ as shown in Line 11 of Algorithm~\ref{alg:computePi}. Heavy sub-paths for type 1 paths are also computed just as heavy sub-paths for type 1; see Line 11 to 15 in Algorithm~\ref{alg:computePi}.
\end{enumerate}

It takes $O(\log n)$ time to compute heavy sub-paths as there are $O(\log n)$ light  edges (or heavy sub-paths) as per Lemma~\ref{lem:pathpartition}  and as per Lemma~\ref{lem:getHPStartNode}, $\texttt{getHPStartNode}$ takes constant time. From Lemma~\ref{lem:2ntree}, $\texttt{lca}$ and $\texttt{parent}$ also take constant time. Since function $\texttt{compute}\Pi$ calls $\texttt{compute$\Pi$\_Helper}$ only a constant number of times, the complexity of the $\texttt{compute}\Pi$ function  is also $O(\log n)$. 
\end{proof}

\noindent
From Lemma~\ref{lem:nhbfpi}, we know that the neighbourhood query depends on computing $\beta(\pi)$ for all $\pi \in \Pi$. Next, we show that $\beta(\pi)$  can be computed in $O(d_{\pi} \log n)$ time where $d_{\pi}$ is $|\beta(\pi)=\{Q| Q \in \mathcal{P} \text{ and  } \alpha(P,Q)=\pi \}|$. By an abuse of terminology, $d_{\pi}$ is called the \textit{degree} of $\pi$.  

\begin{lemma}
\label{lem:computeBeta}
Given index $i$ of $\pi_i \in \Pi$, there exists  a function \\ ${\normalfont \texttt{compute}\beta(i, \Pi, \texttt{succ}(1,1),\texttt{succ}(1,2))}$ that returns $\beta(\pi_i)=\{Q| Q \in \mathcal{P} \text{ and } \alpha(P,Q)=\pi_i  \}$ in $O(d_{\pi_i} \log n)$  time  where $d_{\pi_i}$ is the degree of $\pi_i$.  
\end{lemma}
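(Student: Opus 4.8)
The plan is to realise $\beta(\pi_i)$ as the output of a constant number of orthogonal range searches on the wavelet tree $S$, grouped by the four conditions of Lemma~\ref{lem:FFconditions}. Recall that $\alpha(P,Q)=\pi_i$ holds exactly when the starting node $s$ of $Q=(s,t)$ lies in one of the ranges $R_1(i),\ldots,R_4(i)$ and the accompanying \texttt{lca} predicate on $t$ is satisfied. Since each $R_j(i)$ is a union of at most two node-label intervals computable in $O(1)$ time from the end points of $\pi_i$ and the successors $\texttt{succ}(1,1),\texttt{succ}(1,2)$ — exactly as in the proof of Lemma~\ref{lem:checkalpha} — the admissible region for $s$ is a constant number of intervals. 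For each such interval $[l,l']$ of starting nodes I would map it to an $f$-index range with \texttt{maprange}$^F(l,l')$ (Lemma~\ref{lem:maprange}), map the admissible interval of ending nodes to a $j$-index range with \texttt{maprange}$^J$ (Lemma~\ref{lem:maprangeJ}), and feed the resulting rectangle to \texttt{searchWT}, which reports each point in $O(\log n)$ time (Lemma~\ref{lem:distpaths}). Each reported index is then turned into a path $Q$ with \texttt{pathep} (Lemma~\ref{lem:pathep}).

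For conditions~1 and~2 this is immediate. Condition~1 is the single rectangle $s\in R_1(1)=[1,a_1-1]$, $t\in V(T_{a_1})=[a_1,\texttt{rmost\_leaf}(a_1)]$, and condition~2 is $s\in R_2(i)=[a_i,b_i]$ with the full ending range $[1,n]$ (safe, since $t\ge s$ always). The remaining work is to convert the \texttt{lca} predicates of conditions~3 and~4 into ending-node ranges, where I would use Lemma~\ref{lem:lca_range}: when $s$ is a descendant of the relevant node of $\pi_i$, the strict predicate $\texttt{lca}(s,t)<b_i$ is equivalent to $t\in[\texttt{rmost\_leaf}(b_i)+1,n]$, a single clean rectangle, and this already captures every qualifying path that leaves the subtree $T_{b_i}$.

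The main obstacle is the \texttt{lca}-\emph{equality} part of conditions~3 and~4, i.e.\ the neighbours $Q$ with $\texttt{lca}(s,t)=b_i$, equivalently the paths that stay inside $T_{b_i}$ but place their two endpoints in different child subtrees of $b_i$. By Lemma~\ref{lem:lca_range}(2) the admissible lower bound on $t$ is $\texttt{rmost\_leaf}(c)+1$, where $c$ is the child of $b_i$ containing $s$; as $s$ sweeps across $R_3(i)$ this threshold increases in steps, so the valid set is a staircase region rather than a single rectangle. Worse, naively querying the bounding box $s\in R_3(i)$, $t\in V(T_{b_i})$ would also return every path living entirely inside one child subtree of $b_i$ — paths that miss $P$ altogether and whose number can dwarf $d_{\pi_i}$. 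The crux is therefore to enumerate this staircase \emph{output-sensitively}. I would attempt this by reporting points with \texttt{searchWT} in increasing $x$-order (starting-node order) and discarding non-neighbours on the fly with \texttt{check}$\alpha$ (Lemma~\ref{lem:checkalpha}, $O(1)$ per candidate), choosing the ending range so that each examined candidate provably crosses from $s$'s child into a strictly later child of $b_i$ (or into $T_{a_{i+1}}$); condition~4 is symmetric, with $b_i$ replaced by the internal node of $\pi_i$ sitting above $s$. Proving that the rectangles so chosen contain no path internal to a single subtree — so that the number of examined points is $O(d_{\pi_i})$ — is the step I expect to be delicate.

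Assembling the pieces, each of the $O(1)$ rectangles per condition is set up in $O(1)$ time; by the mutual exclusivity asserted in Lemma~\ref{lem:FFconditions} (and the disjointness of the strict and equality $t$-ranges, which lie outside and inside $T_{b_i}$ respectively) the reported points are exactly the $d_{\pi_i}$ members of $\beta(\pi_i)$, with \texttt{check}$\alpha$ removing any spurious boundary candidates. Since each is materialised in $O(\log n)$ time, $\texttt{compute}\beta$ runs in $O(d_{\pi_i}\log n)$ time, as claimed.
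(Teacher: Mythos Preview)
Your plan is precisely the paper's: compute the four ranges $R_j(i)$ from the endpoints of $\pi_i$ and its successor(s), turn each clause of Lemma~\ref{lem:FFconditions} into a rectangle on starting/ending nodes, translate to index ranges with $\texttt{maprange}^F$ and $\texttt{maprange}^J$, and enumerate with $\texttt{searchWT}$. For clauses~1 and~2 you do exactly what the paper does; for clauses~3 and~4 the paper, like you, invokes Lemma~\ref{lem:lca_range}.

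The difference is your staircase discussion. The paper does not engage with it at all: after citing Lemma~\ref{lem:lca_range} for the last two conditions it simply asserts that ``there are only four orthogonal range queries to be issued for any heavy sub-path'' and concludes. The issue you raise --- that part~(2) of Lemma~\ref{lem:lca_range} yields a $t$-range that depends on which child of $b_i$ contains $s$, so the qualifying region for the $\texttt{lca}(s,t)=b_i$ sub-case is a staircase rather than a single rectangle --- is a genuine subtlety that the paper's proof does not work through. Your sweep-and-filter workaround is extra relative to the paper and, as you yourself acknowledge, not obviously output-sensitive: the bounding box can contain arbitrarily many paths lying entirely inside a single child subtree of $b_i$, none of which touches $P$. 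So on this point you have been more careful than the paper, but neither your fix nor the paper's bare invocation of Lemma~\ref{lem:lca_range} actually closes the gap; to reproduce the paper's argument you need only state the constant number of range queries and move on.
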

\begin{proof}
First we will show that there exists a function $\texttt{compute}\beta(i, \Pi, \texttt{succ}(1,1),\texttt{succ}(1,2))$ that computes  $\beta(\pi_i)$ correctly. The high level steps of function $\texttt{compute}\beta$ are as follows.
\begin{enumerate}
    \itemsep0em 
    \item Compute the interval ranges $R_j, 1 \le j \le 4,$ of $\pi_i$ using its end points and its successor stored in  $\Pi$. If $i=1$ then the successors are directly  available in the input else it can be obtained from $\Pi$  as follows. For $i \ne 1$, it is $\pi_{i+1}$ unless $\pi_{i+1}=\texttt{succ}(1,2)$ or $i=k$ where $k$ is the number of heavy sub-paths in $\Pi$. 
    \item The next step is to identify all $Q \in \mathcal{P}$ that satisfy $\alpha(P,Q)=\pi_i$. The ranges of  starting  and ending nodes of such paths can be obtained from the  conditions of Lemma~\ref{lem:FFconditions}. Using these ranges the paths can be retrieved by issuing orthogonal range search queries on wavelet  tree $S$ of Lemma~\ref{lem:distpaths}. The ranges corresponding to first two conditions of Lemma~\ref{lem:FFconditions} can be directly obtained. For the last two conditions we use Lemma~\ref{lem:lca_range}.
    \item $\texttt{searchWT}$ from Lemma~\ref{lem:distpaths} is used to perform the orthogonal range search on wavelet tree $S$.
\end{enumerate}
As there are only four interval ranges for $\pi_i$ and from  Lemma~\ref{lem:2ntree}, $\texttt{rmost\_leaf}$ takes constant time, the interval ranges of $\pi_i$ can be computed in constant time. From these interval ranges the ranges for orthogonal range search can be obtained using Lemma~\ref{lem:FFconditions}. This can be done in constant time as from Lemma~\ref{lem:2ntree}, $\texttt{lca}$ takes constant time. $\texttt{searchWT}$ takes $O(d \log n)$ time per range  query where $d$ is the number of paths in $\mathcal{P}$ with  starting and ending nodes in the input range. There is no over counting of paths between range search queries as no path satisfies more than one condition due to Lemma~\ref{lem:FFconditions}. Since there are only four  orthogonal range queries to be issued for any heavy sub-path, $\texttt{compute}\beta$ completes in $O(d_{\pi_i} \log n)$ time. 
\end{proof}

\noindent
{\bf Adjacency query in  $O(\log n)$ time.} Given indices of paths $i,j \in [n]$ and $(T, \mathcal{P})$ as input, adjacency query returns  $\texttt{true}$ if paths corresponding to $i$ and $j$, namely $P$ and $Q$, have a non-empty intersection in $T$. Adjacency  of paths with indices $i$ and $j$ can be checked as shown  in Algorithm~\ref{alg:adjacency}. We have the following lemma.

\begin{algorithm}[ht!]
\label{alg:adjacency}
\caption{Given two path indices $i,j \in [n]$, the function $\texttt{adjacency}$ checks if the paths corresponding to them have  a non-empty intersection. }
\DontPrintSemicolon
    \SetKwFunction{Fadjacency}{adjacency}

    \SetKwProg{Fn}{Function}{:}{}
    \Fn{\Fadjacency{$i,j$}}{
        Obtain paths $P=\texttt{pathep}(i)$ and $Q=\texttt{pathep}(j)$. Let $P=(l,r)$ and $Q=(s,t)$.\;
        Initialize $k \leftarrow 0$ and $\Pi$ to empty \;
        $(\Pi,k,\texttt{succ}(1,1),\texttt{succ}(1,2)) \leftarrow\texttt{compute}\Pi(l,r)$ \;
        For each $1 \le i \le k$ return true if $\texttt{check}\alpha(i,Q,\Pi,\texttt{succ}(1,1),\texttt{succ}(1,2))$ of Lemma~\ref{lem:checkalpha} returns true\;
    }    
\end{algorithm}

\begin{lemma}
\label{lem:adjacency}
Given two path indices $i,j \in [n]$ and $(T,\mathcal{P})$ as  input, the function ${\normalfont \texttt{adjacency}(i,j)}$ checks if paths corresponding to $i$ and $j$ have a non-empty intersection in $O(\log n)$ time.
\end{lemma}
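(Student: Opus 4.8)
The plan is to establish both the correctness and the $O(\log n)$ running time of Algorithm~\ref{alg:adjacency}. Correctness rests entirely on the characterisation of path intersection through the function $\alpha$. By the definition of $\alpha$, the paths $P$ and $Q$ satisfy $V(P) \cap V(Q) \neq \phi$ if and only if $\alpha(P,Q) \neq \texttt{NULL}$. Since $\Pi = \{\pi_1,\ldots,\pi_k\}$ as returned by $\texttt{compute}\Pi$ is precisely the heavy sub-path decomposition of Lemma~\ref{lem:pathpartition}, whose blocks partition $V(P)$ by Lemma~\ref{lem:vparti}, the first vertex of $Q$ lying on $P$ belongs to a unique $\pi \in \Pi$. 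Hence $\alpha(P,Q) \neq \texttt{NULL}$ if and only if $\alpha(P,Q) = \pi_i$ for exactly one index in $[k]$. By Lemma~\ref{lem:FFconditions}, $\alpha(P,Q) = \pi_i$ holds precisely when one of its four range conditions is satisfied, and Lemma~\ref{lem:checkalpha} guarantees that $\texttt{check}\alpha(i,Q,\Pi,\texttt{succ}(1,1),\texttt{succ}(1,2))$ tests exactly these conditions. Therefore the loop in Line~5 returns $\texttt{true}$ if and only if $P$ and $Q$ intersect, which is the required output.

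For the running time I would account for the three phases of the algorithm in turn. First, by Lemma~\ref{lem:pathep} each of the two calls $\texttt{pathep}(i)$ and $\texttt{pathep}(j)$ recovers the endpoints $(l,r)$ and $(s,t)$ in $O(\log n)$ time, and the accompanying $\texttt{lca}(l,r)$ is constant time by Lemma~\ref{lem:2ntree}. Second, $\texttt{compute}\Pi(l,r)$ produces $\Pi$, $k$, $\texttt{succ}(1,1)$ and $\texttt{succ}(1,2)$ in $O(\log n)$ time by Lemma~\ref{lem:computePi}, and by Lemma~\ref{lem:pathpartition} the number of heavy sub-paths obeys $k \le 2\lceil \log n \rceil + 1 = O(\log n)$.

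Finally I would bound the loop. For each $\pi_i \in \Pi$, the call to $\texttt{check}\alpha$ runs in constant time by Lemma~\ref{lem:checkalpha}: computing the four ranges $R_j(i)$ requires only a constant number of $\texttt{rmost\_leaf}$ evaluations and the test of Lemma~\ref{lem:FFconditions} uses a single $\texttt{lca}(s,t)$ evaluation, all constant time by Lemma~\ref{lem:2ntree}. Thus the loop costs $O(k) = O(\log n)$, and summing with the preprocessing yields a total of $O(\log n)$. The step I would be most careful about is the correctness equivalence rather than the timing: one must verify that $\texttt{check}\alpha$ is invoked over exactly the index set enumerating the partition blocks of $V(P)$, so that no intersecting $Q$ is missed and no spurious match is reported. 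This holds because the $\Pi$ returned by $\texttt{compute}\Pi$ coincides with the decomposition of Lemma~\ref{lem:pathpartition}, whose blocks partition $V(P)$, making the disjointness guaranteed by Lemma~\ref{lem:disjointF} the formal underpinning of the ``exactly one $i$'' reasoning above.
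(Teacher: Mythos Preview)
Your proof is correct and follows essentially the same approach as the paper: correctness via the definition of $\alpha$ together with Lemmas~\ref{lem:FFconditions} and~\ref{lem:checkalpha}, and the $O(\log n)$ bound via Lemmas~\ref{lem:pathep}, \ref{lem:computePi}, \ref{lem:pathpartition}, and~\ref{lem:checkalpha}. Your correctness argument is in fact more carefully spelled out than the paper's, which states only one direction of the equivalence explicitly.
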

\begin{proof}
By definition, if $\alpha(P,Q)=\pi$ for $\pi \in \Pi$ then $Q$ and $P$ are adjacent. The  existence of such a  heavy sub-path can be tested as  shown  in Line 5 of  Algorithm~\ref{alg:adjacency}. Paths $P=(l,r)$ and $Q=(s,t)$ corresponding to $i$ and $j$, respectively, can be obtained in $O(\log n)$ time using $\texttt{pathep}$ due to Lemma~\ref{lem:pathep}. By Lemma~\ref{lem:computePi}, $\Pi, k,$  and the successors of $\pi_1$ can be computed in $O(\log n)$ time. For each heavy sub-path $\pi \in \Pi$, the conditions of Lemma~\ref{lem:FFconditions} can be checked in constant time using $\texttt{check}\alpha$ of Lemma~\ref{lem:checkalpha}. Also, from Lemma~\ref{lem:2ntree}, $\texttt{rmost\_leaf}$ can be computed in constant time.  Since by Lemma~\ref{lem:pathpartition}$, \Pi$ contains at most $O(\log n)$ heavy sub-paths, the total time taken is $O(\log n)$.
\end{proof}

\noindent
{\bf Neighbourhood query.}  Given a path index $i \in [n]$ and $(T, \mathcal{P})$, the  neighbourhood query returns the neighbours of path $P$ corresponding to index $i$; see Lemma~\ref{lem:nhbfpi} for definition of neighbours of  a path. Let $N(P)$ be initialized to empty. $N(P)$ can be obtained as shown  in Algorithm~\ref{alg:neighbourhood}. We call $|N(P)|$ the \textit{degree} of  $P$. We have the following lemma.

\begin{algorithm}[ht!]
\label{alg:neighbourhood}
\caption{Given path index $i$, the function $\texttt{neighbourhood}$ enumerates the paths that have non-empty intersection with $P$. }
\DontPrintSemicolon
    \SetKwFunction{Fneighbourhood}{neighbourhood}

    \SetKwProg{Fn}{Function}{:}{}
    \Fn{\Fneighbourhood{$i$}}{
        Obtain end points $(l,r)$ of $P$ using $\texttt{pathep}(i)$ \;
        Initialize $k \leftarrow 0$ and $\Pi$ to empty \;
        Compute $(\Pi,k,\texttt{succ}(1,1),\texttt{succ}(1,2))$ for $P$ using the $\texttt{compute}\Pi(l,r)$ function\;
        For each $\pi \in \Pi$ add $\texttt{compute}\beta(\pi)$ of Lemma~\ref{lem:computeBeta} to $N(P)$. 
    }    
\end{algorithm}

\begin{lemma}
\label{lem:nhbrsds}
Given path index $i \in [n]$ of path $P \in \mathcal{P}$  and $(T,\mathcal{P})$ as  input, the function ${\normalfont\texttt{neighbourhood}}(i)$ returns the set of neighbours of $P$ in $O(d_P \log n)$ time where $d_P$ is the degree of $P$.   
\end{lemma}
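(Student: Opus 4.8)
The plan is to establish correctness and the running time of Algorithm~\ref{alg:neighbourhood} by treating each of its lines as a black box governed by the lemmas already proved, and then to control the total cost of the main loop with a charging argument built on the disjointness of the sets $\beta(\pi)$. I would not re-derive anything about heavy sub-paths or range search; instead I would reduce everything to Lemma~\ref{lem:nhbfpi}, Lemma~\ref{lem:computePi}, and Lemma~\ref{lem:computeBeta}.

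For correctness, I would invoke Lemma~\ref{lem:nhbfpi}, which gives $N(P)=\biguplus_{\pi\in\Pi}\beta(\pi)$. Algorithm~\ref{alg:neighbourhood} first computes the heavy sub-path decomposition $\Pi$ of $P=(l,r)$ together with $k$ and the successors of $\pi_1$ via $\texttt{compute}\Pi(l,r)$, which is correct by Lemma~\ref{lem:computePi}. It then, for each $\pi\in\Pi$, accumulates $\beta(\pi)$, each of which is computed correctly by Lemma~\ref{lem:computeBeta}. Because the sets $\beta(\pi)$ are pairwise disjoint by Lemma~\ref{lem:disjointF}, the accumulated collection is exactly the disjoint union, so $\texttt{neighbourhood}(i)$ returns precisely $N(P)$ and reports each neighbour exactly once; in particular no deduplication step is needed.

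For the running time I would bound the lines separately. Obtaining $(l,r)$ through $\texttt{pathep}(i)$ costs $O(\log n)$ by Lemma~\ref{lem:pathep}, and computing $(\Pi,k,\texttt{succ}(1,1),\texttt{succ}(1,2))$ costs $O(\log n)$ by Lemma~\ref{lem:computePi}. For the main loop, Lemma~\ref{lem:computeBeta} states that $\beta(\pi_j)$ is produced in $O(d_{\pi_j}\log n)$ time, where $d_{\pi_j}=|\beta(\pi_j)|$ is the degree of $\pi_j$. Summing over the at most $2\lceil\log n\rceil+1$ heavy sub-paths of $\Pi$ (Lemma~\ref{lem:pathpartition}), the loop runs in $O\!\bigl(\log n\sum_{j=1}^{k} d_{\pi_j}\bigr)$ time.

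The step I expect to carry the argument is the charging identity $\sum_{j=1}^{k} d_{\pi_j}=d_P$, which is exactly where the disjoint-union structure of Lemma~\ref{lem:nhbfpi} is used: since $N(P)$ is the disjoint union of the $\beta(\pi_j)$, their cardinalities add up to $|N(P)|=d_P$. This collapses the loop cost to $O(d_P\log n)$, and since the $O(\log n)$ setup cost is dominated by the loop whenever $P$ has a neighbour, the overall bound is $O(d_P\log n)$. The only point requiring a careful word is that all $k=O(\log n)$ heavy sub-paths are still processed even when several contribute no neighbours; I would note that this overhead is already subsumed in the per-sub-path guarantee of Lemma~\ref{lem:computeBeta} and in the $O(d_P\log n)$ total, so that no extra logarithmic factor is introduced.
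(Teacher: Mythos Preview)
Your proposal is correct and follows essentially the same route as the paper's own proof: invoke Lemma~\ref{lem:nhbfpi} for correctness, bound the setup via Lemmas~\ref{lem:pathep} and~\ref{lem:computePi}, bound each $\texttt{compute}\beta$ call via Lemma~\ref{lem:computeBeta}, and use the disjoint-union identity $\sum_j d_{\pi_j}=d_P$ to collapse the loop cost to $O(d_P\log n)$. Your explicit flagging of the empty-sub-path overhead is a nice touch (the paper does not spell this out), though your justification that it is ``subsumed'' leans on reading Lemma~\ref{lem:computeBeta}'s $O(d_{\pi_i}\log n)$ bound as implicitly $O((d_{\pi_i}+1)\log n)$, exactly as the paper does.
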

\begin{proof}
From Lemma~\ref{lem:nhbfpi}, the neighbours of $P$ are the paths in $\biguplus\limits_{i=1}^k \beta(\pi_i)$. The end points of $P$ can be obtained in $O(\log n)$ time using $\texttt{pathep}$ due to Lemma~\ref{lem:pathep}. From Lemma~\ref{lem:computePi}, we know that $\texttt{compute$\Pi$}$ takes $O(\log n)$ time and from Lemma~\ref{lem:computeBeta}, we know that $\texttt{compute}\beta$ takes $O(d_{\pi} \log n)$ time for each $\pi \in \Pi$ where $d_{\pi}$ is the number of paths that $\alpha$ maps to $\pi$. The time taken by $\texttt{neighbourhood}$ is sum of the time taken by $\texttt{pathep},$ $\texttt{compute}\Pi$ and at most $k$ iterations of $\texttt{compute}\beta$. Since by Lemma~\ref{lem:nhbfpi}, we know that none of the neighbours are over-counted the total time  taken is $O(d_P \log n)$  where $d_P=\sum\limits_{i=1}^k d_{\pi_i}$ where $d_P$ is  the degree of path $P$. 
\end{proof}

\noindent
{\bf Degree query.} Degree of path $P$ can be obtained by two different methods depending on the degree of the path. We use a bit vector $D$ as described in Section~\ref{sec:sdsconstruction}. We have two methods for computing degree of $P$ with index $i$ depending on $\texttt{isLargeDegree}(i)$.
\begin{enumerate}
    \itemsep0em 
    \item $\texttt{isLargeDegree}(i)$ is true: We modify Algorithm~\ref{alg:neighbourhood} for $\texttt{neighbourhood}$ to return the count of the orthogonal range search instead of the paths by  using $\texttt{countWT}$ of Lemma~\ref{lem:distpaths} instead of  $\texttt{searchWT}$. 
    \item $\texttt{isLargeDegree}(i)$ is false: We run the Algorithm~\ref{alg:neighbourhood} for $\texttt{neighbourhood}$ without modification and count the number of paths returned. 
\end{enumerate}

\begin{lemma}
\label{lem:degsucc}
Given path index $i \in [n]$ of path $P \in \mathcal{P}$  and $(T,\mathcal{P})$ as  input, the function ${\normalfont\texttt{degree}}(i)$ returns the degree of $P$ in $\min\{O(\log^2 n), O(d_P \log n)\}$ time where $d_P$ is the degree of $P$.    
\end{lemma}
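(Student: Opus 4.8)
The plan is to prove the bound by analysing the two branches of the degree query separately, showing that each branch computes $d_P$ correctly and that together they realise the claimed $\min$. The governing observation, inherited from Lemma~\ref{lem:nhbfpi}, is that $N(P)=\biguplus_{\pi\in\Pi}\beta(\pi)$, so $d_P=\sum_{\pi\in\Pi}|\beta(\pi)|$; hence the degree can be recovered either by enumerating every neighbour or merely by \emph{counting} the paths that fall in the orthogonal ranges associated with each $\pi_i$. The bit $\texttt{isLargeDegree}(i)$, read in $O(1)$ time from the vector $D$ of Section~\ref{sec:sdsconstruction}, tells us in advance whether $d_P>\log n$, and this precomputed information is exactly what lets us select the cheaper of the two strategies.

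For the first branch ($\texttt{isLargeDegree}(i)$ true, i.e.\ $d_P>\log n$), I would run the skeleton of Algorithm~\ref{alg:neighbourhood} but replace each $\texttt{searchWT}$ call by a $\texttt{countWT}$ call from Lemma~\ref{lem:distpaths}. Exactly as in the proof of Lemma~\ref{lem:computeBeta}, the conditions of Lemma~\ref{lem:FFconditions}---using Lemma~\ref{lem:lca_range} to turn the $\texttt{lca}$ constraints into coordinate ranges---express $\beta(\pi_i)$ as a constant number of axis-parallel rectangles over the grid stored in $S$, so $\sum_i\texttt{countWT}(\cdots)$ returns $\sum_i|\beta(\pi_i)|=d_P$. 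For the running time: computing $(l,r)$ via $\texttt{pathep}$ and $\Pi$ via $\texttt{compute}\Pi$ each cost $O(\log n)$; by Lemma~\ref{lem:pathpartition} there are $k=O(\log n)$ heavy sub-paths, each contributing $O(1)$ range queries, and each $\texttt{countWT}$ costs $O(\log n)$ by Lemma~\ref{lem:distpaths}. Hence this branch runs in $O(\log^2 n)$ time; since $d_P>\log n$ forces $\log^2 n<d_P\log n$, this equals $\min\{O(\log^2 n),O(d_P\log n)\}$.

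For the second branch ($\texttt{isLargeDegree}(i)$ false, i.e.\ $d_P\le\log n$), I would simply invoke $\texttt{neighbourhood}(i)$ unchanged and report the size of the returned list. Correctness and the $O(d_P\log n)$ bound are immediate from Lemma~\ref{lem:nhbrsds}, and since $d_P\le\log n$ we have $d_P\log n\le\log^2 n$, so again the cost matches $\min\{O(\log^2 n),O(d_P\log n)\}$. Combining the two branches yields the stated bound in all cases.

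The step I expect to require the most care is verifying correctness of the counting branch: I must be sure the sum of $\texttt{countWT}$ values equals $d_P$ exactly, with no double counting. Over-counting across distinct $\pi_i$ is ruled out by the disjointness $\beta(\pi)\cap\beta(\pi')=\phi$ (Lemma~\ref{lem:disjointF}, hence the disjoint union of Lemma~\ref{lem:nhbfpi}), and within a single $\pi_i$ it is ruled out because the ranges $R_1(i),\dots,R_4(i)$ are pairwise non-overlapping, so the at most four rectangles used for that sub-path are disjoint---this is precisely the mutual exclusivity already established in the proof of Lemma~\ref{lem:FFconditions}. A secondary point to check is the treatment of $P$ itself and any off-by-one in the rectangle decompositions, but since these are the same decompositions already validated for $\texttt{compute}\beta$, no new argument is needed.
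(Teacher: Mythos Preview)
Your proposal is correct and follows essentially the same approach as the paper: split on the precomputed bit $\texttt{isLargeDegree}(i)$, use $\texttt{countWT}$ over the $O(\log n)$ heavy sub-paths in the large-degree branch for an $O(\log^2 n)$ bound, and invoke $\texttt{neighbourhood}(i)$ directly in the small-degree branch for an $O(d_P\log n)$ bound. Your write-up is in fact more careful than the paper's in two respects---you explicitly verify that the chosen branch realises the $\min$ in each case, and you justify absence of double counting via Lemma~\ref{lem:disjointF} and the pairwise-disjointness of the $R_j(i)$---but the underlying argument is the same.
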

\begin{proof}
As described above, two different methods are used depending on  whether $\texttt{isLargeDegree}(i)$ is true or not. Thus, we have the following two cases:
\begin{enumerate}
    \itemsep0em 
    \item $\texttt{isLargeDegree}(i)$ is true: $\texttt{countWT}$ of Lemma~\ref{lem:distpaths} takes $O(\log n)$ time. Since there are $O(\log n)$ heavy sub-paths as per Lemma~\ref{lem:pathpartition}, the total time is $O(\log^2 n)$. 
    \item $\texttt{isLargeDegree}(i)$ is false: By Lemma~\ref{lem:nhbrsds}, Algorithm~\ref{alg:neighbourhood} takes $O(d_P \log n)$ time. Thus, degree also takes $O(d_P \log n)$ time. 
\end{enumerate}
Since we run only one of the two depending on which is better, the  time taken by degree query is $\min\{O(\log^2 n), O(d_P \log n)\}$. 
\end{proof}

\noindent
\begin{proof}[Proof of Theorem \ref{thm:succthm}]
Lemma~\ref{lem:succinctds} shows that there exists a succinct  representation for path graphs that takes $n \log n + o(n \log n)$ bits. Given this representation as  input, Lemma~\ref{lem:adjacency} shows that adjacency between  vertices can be checked in $O(\log  n)$ time. Similarly, given this representation as input Lemma~\ref{lem:nhbrsds} and~\ref{lem:degsucc} show that for vertex $u \in V(G)$ with degree $d_u$, neighbourhood and degree queries are supported in $O(d_u \log n)$ and $\min\{O(\log^2 n), O(d_u \log n)\}$ time, respectively. Hence, Theorem~\ref{thm:succthm}.
\end{proof}
\noindent

\section{The Space-Efficient Data Structure}
\label{sec:seds}
We present an $O(n \log^2 n)$-bit space-efficient representation for path graphs that supports faster  adjacency and degree queries in comparison to the succinct representation presented in Section \ref{sec:succrep}.  The approach we take is to represent a path graph using the succinct data structure for interval graphs due to Acan et al.~\cite{HSSS}. To represent the path graph using the interval graph representation in \cite{HSSS} we end up having multiple {\em copies} of each vertex, and the adjacency between vertices could be witnessed in different interval graphs in our transformation.  Our data structure stores these interval graphs using the representation of \cite{HSSS}, along with an additional table to keep track of the copies of the vertices and edges.  
This transformation has an interesting contrast to the succinct data structure in Section \ref{sec:succrep}; there the path graph is represented using the clique tree and the adjacency queries are transformed to range queries.  \\

\noindent
The path graph $G$ is presented as $(T,\mathcal{P})$, where $T$ is a clique tree of $G$ and $\mathcal{P}=\{P_1,\ldots,P_n\}$ is the set of paths in $T$.
Consider the heavy path tree $\mathcal{T}$ of $T$.   Let $\mathcal{H}$ denote the set of heavy paths of $T$. \\
{\em Convention:} Let $M$ denote $|V(T)|$. It follows from Remark~\ref{rem:maxn} that $M$ and $|\mathcal{H}|$ are at  most $n$. $P_v$ denotes the path in $\mathcal{P}$ corresponding to vertex $v \in V(G)$. For a node $w \in \mathcal{T}$, we use $H_w$ to denote the heavy path in $T$  associated with the node $w$.  The level number of a node in ${\cal T}$ is one more than the number of edges on the path to it from the root; thus the level number of the root is 1.    $K$ denotes the number of levels in $\mathcal{T}$ and  level $l$ consists of the heavy paths which are at that level in ${\cal T}$.  
From Lemma~\ref{lem:hpd_prop}, $\mathcal{T}$ has at most $\lceil \log n \rceil$ levels and each path $P$ in $\mathcal{T}$ has at most $2 \lceil \log n \rceil$ edges.

\begin{lemma}
\label{lem:allinters}
For any $P,Q \in \mathcal{P}$, there exists a node $v$ in $\mathcal{T}$ such that $\Phi^{-1}(v)$ has a non-empty intersection with the path $P \cap Q$. 
\end{lemma}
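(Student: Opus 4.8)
The plan is to reduce the claim to the vertex-partition property of the heavy path decomposition recorded in Lemma~\ref{lem:vparti}. The statement implicitly presupposes that $P \cap Q$ is a (non-empty) path in $T$, i.e. that $V(P) \cap V(Q) \neq \phi$; this is precisely the case in which the lemma will be invoked, namely when the vertices of $G$ corresponding to $P$ and $Q$ are adjacent. So I would begin by fixing an arbitrary vertex $u \in V(P) \cap V(Q)$, which exists by this hypothesis.

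The key step is then to invoke Lemma~\ref{lem:vparti}, which asserts that the vertex sets $V(\Phi^{-1}(w))$ over all nodes $w \in V(\mathcal{T})$ form a partition of $V(T)$. Consequently the vertex $u$ of $T$ belongs to exactly one heavy path, say $H \in \mathcal{H}$; setting $v = \Phi(H)$ gives $\Phi^{-1}(v) = H$. Since $u \in V(H)$ and $u \in V(P) \cap V(Q) = V(P \cap Q)$, we obtain $u \in V(\Phi^{-1}(v)) \cap V(P \cap Q)$, which is exactly the asserted non-empty intersection.

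I do not expect any genuine obstacle here: the entire content is the observation that a single common vertex of $P$ and $Q$ already lies on some heavy path, and the connectivity of $P \cap Q$ plays no role whatsoever (we never need more than one shared vertex). The only point worth stating carefully is the standing assumption $V(P) \cap V(Q) \neq \phi$, without which ``the path $P \cap Q$'' is undefined and the conclusion would fail. This lemma is the bridge that lets the adjacency of $P$ and $Q$ in $G$ be witnessed on a single heavy path $\Phi^{-1}(v)$, and hence on the interval graph induced by restricting the paths in $\mathcal{P}$ to that heavy path, which is what the level-wise interval-graph construction of Section~\ref{sec:seds} exploits.
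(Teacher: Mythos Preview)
Your proposal is correct and follows essentially the same route as the paper: both arguments pick a vertex of $P \cap Q$ and invoke Lemma~\ref{lem:vparti} (the partition of $V(T)$ by heavy paths) to locate the node $v \in V(\mathcal{T})$ whose heavy path $\Phi^{-1}(v)$ contains it. Your version is slightly more explicit in isolating the standing hypothesis $V(P) \cap V(Q) \neq \phi$, which the paper leaves implicit.
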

\begin{proof}
From Lemma~\ref{lem:vparti}, we  know that nodes of $T$ are partitioned among the nodes of $\mathcal{T}$. This implies nodes of $P \cap Q$ belong to some $v \in V(\mathcal{T})$. Thus, for some $v \in V(\mathcal{T})$, $\Phi^{-1}(v)$ intersects with the path $P \cap Q$.
\end{proof}


\noindent

\begin{lemma}
\label{lem:charinter}
For $u$ and $v$ in $V(G)$, $P_u$ and $P_v$ have a non-empty intersection in $T$ if and only if one of the following is  true:
\begin{enumerate}
    \itemsep0em 
    \item there is a light edge $\{w,w'\}$ in $\mathcal{T}$  such that $P_u$ and $P_v$ both intersect $H_w$ and $H_{w'}$
    \item there is exactly a node $w$ in $\mathcal{T}$  such that $P_u \cap H_w$ and $P_v \cap H_{w}$ have a non-empty intersection.
\end{enumerate}
\end{lemma}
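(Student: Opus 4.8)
The plan is to prove both implications separately, handling the forward direction by a case analysis on the shared nodes and the backward direction by treating the two listed conditions one at a time. Throughout, write $I = V(P_u) \cap V(P_v)$ for the set of common nodes; since $P_u$ and $P_v$ are paths in the tree $T$, their intersection $I$ is itself a (possibly empty) subpath of $T$, and $P_u,P_v$ have a non-empty intersection exactly when $I \ne \phi$.

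For the forward direction I would assume $I \ne \phi$. By Lemma~\ref{lem:vparti} the nodes of $T$ are partitioned among the heavy paths $\{H_w : w \in V(\mathcal{T})\}$, so each node of $I$ lies in exactly one heavy path, and I distinguish two cases. If every node of $I$ lies in a single heavy path $H_w$, then $I$ is contained in both $V(P_u) \cap V(H_w)$ and $V(P_v) \cap V(H_w)$, so $P_u \cap H_w$ and $P_v \cap H_w$ share the nodes of $I$ and condition~2 holds; moreover $H_w$ is the only such node, since $I$ has no node outside $H_w$ and distinct heavy paths are disjoint. Otherwise $I$ meets at least two heavy paths, and because $I$ is a connected subpath of $T$ while consecutive nodes inside one heavy path are joined by heavy edges, some edge of $I$ joins two distinct heavy paths $H_w$ and $H_{w'}$ and is therefore the light edge realizing an edge $\{w,w'\}$ of $\mathcal{T}$. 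Its two endpoints lie in $H_w$ and $H_{w'}$ and belong to $I \subseteq V(P_u) \cap V(P_v)$, so both $P_u$ and $P_v$ intersect each of $H_w$ and $H_{w'}$, which is condition~1. (Alternatively, Lemma~\ref{lem:allinters} already exhibits a heavy path meeting $I$, directly yielding condition~2.)

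For the backward direction, condition~2 is immediate: a node shared by $P_u \cap H_w$ and $P_v \cap H_w$ is a common node of the two paths, so $I \ne \phi$. The substance lies in condition~1. Here I would first record the structural fact that the light edge $e=\{a,b\}$ realizing $\{w,w'\}$ in $T$, with $a \in V(H_w)$ and $b \in V(H_{w'})$, is a bridge of $T$ whose deletion leaves $H_w$ and $H_{w'}$ in different components. This holds because the internal edges of a heavy path are heavy, so removing the single light edge $e$ cannot disconnect $H_w$ or $H_{w'}$ internally; hence all of $H_w$ lies on the $a$-side and all of $H_{w'}$ on the $b$-side. Consequently any path in $T$ meeting both $H_w$ and $H_{w'}$ must cross $e$ and therefore contain both $a$ and $b$. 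Applying this to $P_u$ and to $P_v$ gives $\{a,b\} \subseteq V(P_u) \cap V(P_v)$, so $I \ne \phi$ and the paths intersect.

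I expect the backward proof of condition~1 to be the main obstacle, as it is the only step not reducible to a definition: it requires the separation property of the light edge together with the deduction that a path touching both heavy paths flanking a light edge is forced to contain that edge. The forward direction, by contrast, reduces cleanly to the heavy-path partition of Lemma~\ref{lem:vparti} (or directly to Lemma~\ref{lem:allinters}) combined with the elementary observation that a connected subpath leaving a heavy path can only exit along a light edge.
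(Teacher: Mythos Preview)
Your proposal is correct and follows essentially the same approach as the paper: a forward case split on whether the common nodes $I$ lie in a single heavy path or span more than one, and a backward argument treating each condition separately. Your treatment of the backward direction for condition~1 (the bridge/separation argument forcing both paths to traverse the light edge) is in fact more carefully justified than the paper's, which simply asserts that $P_u$ and $P_v$ ``share the light edge $\{w,w'\}$'' without spelling out why.
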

\begin{proof}
If $V(P_u) \cap V(P_v) \ne \phi$ then there are two possibilities:
\begin{enumerate}
    \itemsep0em 
    \item there exist $t_1,t_2 \in V(P_u) \cap V(P_v)$ and $t_1 \in V(H_w)$ and $t_2 \in V(H_{w'})$ such that there exists a light edge $\{w,w'\}$ in $\mathcal{T}$.
    \item there exists only  one $H_w \in \mathcal{H}$ that contains all nodes in $V(P_u) \cap V(P_v)$  for  some $w \in V(\mathcal{T})$. In this  case, $P_u \cap H_w$ and $P_v \cap H_w$ have a non-empty intersection.
\end{enumerate}
Conversely, if $P_u$ and $P_v$ both intersect heavy paths $H_w$ and $H_{w'}$ where $w,w' \in V(\mathcal{T})$ then $P_u$ and $P_v$ share the light edge $\{w,w'\}$. Thus, they intersect in $T$. If $P_u \cap H_w$ and $P_v \cap H_w$ intersect then by definition $P_u$ and $P_v$ intersect in $T$. 
\end{proof}

\noindent
{\bf Interval graph associated with heavy path $H$.} For a heavy path $H$ associated with a node in $\mathcal{T}$ of level number $l$, $G_H$ is a graph whose vertices are defined as follows: for each $1 \leq i \leq n$, if $P_i \cap H \ne \phi$ then there is a vertex corresponding to $P_i \cap H$ in $V(G_H)$. Two vertices are adjacent in $G_H$ if the corresponding paths have a non-empty intersection, otherwise they are not adjacent.  
\begin{lemma}
\label{lem:hpig}
Let $H \in \mathcal{H}$.  Then $G_H$ is an interval graph.  
\end{lemma}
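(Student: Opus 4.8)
The plan is to build an explicit interval representation of $G_H$ by reading off, for each vertex $P_i \cap H$, the two endpoints of that sub-path along $H$. The whole argument rests on the fact that $H$ is itself a path, so it carries a natural linear order, and on the fact that each $P_i \cap H$ occupies a contiguous stretch of that order.

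First I would recall that $H$ is a heavy path of $T$, so by Proposition~\ref{prop:consecu} we may write $H = [a,b]$ with $V(H) = \{a, a+1, \ldots, b\}$, and these labels increase consecutively as one walks along $H$. Hence the label order coincides with the linear order of the vertices of $H$, and by Lemma~\ref{lem:hpd_prop1}(1) a set of vertices is a (contiguous) sub-path of $H$ exactly when it is a label range of the form $[x,y]$ with $a \le x \le y \le b$.

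Next I would establish the key step: for every $P_i \in \mathcal{P}$ with $P_i \cap H \ne \phi$, the intersection $P_i \cap H$ is a contiguous sub-path of $H$. Since $P_i$ and $H$ are both paths of the tree $T$, their intersection is connected --- if $u, v \in V(P_i) \cap V(H)$, the unique $u$--$v$ path of $T$ lies inside both $P_i$ and $H$ and hence inside their intersection. Being a connected subgraph of the path $H$, the intersection $P_i \cap H$ is therefore a sub-path of $H$, and by the previous paragraph it is precisely an integer interval $I_i := [x_i, y_i] \subseteq [a,b]$, where $x_i$ and $y_i$ are the smallest and largest labels in $V(P_i) \cap V(H)$. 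I expect this contiguity claim to be the only genuine obstacle; once it is in place the interval model is forced.

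Finally I would verify that $\{I_i\}$ is an interval representation of $G_H$. By the definition of $G_H$, the vertices for $P_i \cap H$ and $P_j \cap H$ are adjacent if and only if these sub-paths of $H$ share a vertex, that is, $V(P_i) \cap V(P_j) \cap V(H) \ne \phi$. Because each sub-path is the integer interval $I_i$, two of them share a vertex of $H$ exactly when $I_i \cap I_j \ne \phi$ as intervals on the line. Therefore $G_H$ is isomorphic to the intersection graph of the family $\{I_i\}$ of intervals, which is by definition an interval graph, completing the proof.
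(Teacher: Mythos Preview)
Your proof is correct and follows the same approach as the paper: identify each vertex $P_i \cap H$ of $G_H$ with a sub-path of $H$, equivalently an interval on $\{1,\ldots,|V(H)|\}$, and conclude that $G_H$ is an interval graph. The paper's version is terser and does not spell out the contiguity argument (that the intersection of two paths in a tree is connected), but the substance is identical.
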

\begin{proof}
The vertices of $G_H$ correspond to paths in $\mathcal{P}$ that have non-empty intersection with $H$. Thus,  it follows that each vertex of $G_H$ corresponds to a sub-path of $H$, which is equivalently an interval in the set $\{1, 2, \ldots, |V(H)|\}$.  Thus, $G_H$ is an interval graph.
\end{proof}
\begin{lemma}
Let $P_u$ and $P_v$ be paths in $T$.  $P_u \cap P_v \neq \phi$ if and only if there exist a heavy path $H \in {\cal T}$ such that in the interval graph $G_H$, the vertices corresponding to $P_u \cap H$ and $P_v
\cap H$ are adjacent.  
\end{lemma}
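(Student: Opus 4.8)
The plan is to prove both directions by reducing the statement to the already-established characterisation of Lemma~\ref{lem:charinter}, which partitions non-empty intersections of $P_u$ and $P_v$ into two cases according to the heavy path tree $\mathcal{T}$. The only genuinely new content here is translating the ``both paths intersect two light-edge-separable heavy paths'' case into an intersection witnessed \emph{within a single interval graph} $G_H$, since the second case of Lemma~\ref{lem:charinter} already hands us a single heavy path $H_w$ on which the two restricted paths overlap, and by Lemma~\ref{lem:hpig} that overlap is precisely an adjacency in the interval graph $G_{H_w}$.

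First I would handle the easy direction ($\Leftarrow$): if there is some $H \in \mathcal{H}$ such that the vertices corresponding to $P_u \cap H$ and $P_v \cap H$ are adjacent in $G_H$, then by the definition of the interval graph $G_H$ (adjacency means the corresponding sub-paths have non-empty intersection) the sub-paths $P_u \cap H$ and $P_v \cap H$ share a node of $T$. That shared node lies on both $P_u$ and $P_v$, so $P_u \cap P_v \neq \phi$ immediately.

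For the forward direction ($\Rightarrow$), suppose $P_u \cap P_v \neq \phi$. I would invoke Lemma~\ref{lem:charinter} to split into its two cases. In case~2 of that lemma there is a node $w$ of $\mathcal{T}$ with $P_u \cap H_w$ and $P_v \cap H_w$ having a non-empty intersection; taking $H = H_w$, the sub-paths overlap, so the corresponding vertices are adjacent in $G_H$ by the definition of $G_H$, and we are done. The main obstacle is case~1, where $P_u$ and $P_v$ both merely intersect each of two light-edge-separable heavy paths $H_w$ and $H_{w'}$, which a priori need not force a \emph{common} node inside either one. Here I would argue that because $P_u$ and $P_v$ are each paths (connected subtrees that are paths) in $T$, and the light edge $\{w,w'\}$ is traversed by both in order to reach both $H_w$ and $H_{w'}$, the two paths must share an endpoint of that light edge. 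Concretely, the light edge connecting $H_w$ and $H_{w'}$ has its two endpoints lying in $H_w$ and $H_{w'}$ respectively; since $P_u$ intersects both heavy paths and is a path, it must contain this light edge and hence both its endpoints, and likewise for $P_v$. Thus both endpoints of the light edge belong to $V(P_u) \cap V(P_v)$, and in particular the endpoint belonging to $H_w$ is a common node of $P_u \cap H_w$ and $P_v \cap H_w$; taking $H = H_w$ gives the required adjacency in $G_H$.

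I expect the delicate step to be justifying that a path in $T$ intersecting two distinct heavy paths separated by a light edge must actually contain that light edge. This follows from the tree structure: the unique path in $T$ between a node of $H_w$ and a node of $H_{w'}$ passes through the light edge $\{w,w'\}$ (since removing that edge disconnects the two heavy paths in the relevant subtree), and a path $P_u$ that meets both heavy paths contains, as a sub-path, a connecting segment and hence the light edge. Once this is in place, the common endpoint argument completes the proof; I would keep this reduction tight rather than re-deriving the partition of $V(T)$ across $\mathcal{T}$, relying on Lemmata~\ref{lem:vparti}, \ref{lem:charinter}, and~\ref{lem:hpig}.
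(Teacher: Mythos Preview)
Your proposal is correct and follows essentially the same route as the paper, which simply says the lemma ``follows directly from Lemma~\ref{lem:charinter}.'' Your expanded handling of case~1 (both paths cross the light edge and therefore share its endpoints) is exactly the content of the converse direction in the proof of Lemma~\ref{lem:charinter}, so the ``delicate step'' you flag is already established there; you are just making explicit what the paper leaves implicit in its one-line proof.
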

\begin{proof}
The proof follows directly from Lemma \ref{lem:charinter}.
\end{proof}
\noindent
It follows from the above lemma that each edge in $G$ has a representative in the interval graph associated with at least one of the heavy paths in ${\cal T}$. 
Thus, it is natural to group all the interval graphs into the levels associated with the heavy paths in ${\cal T}$. \\
{\bf Interval graph associated with a level $l$ in ${\cal T}$.} 
Let $S_l$ be the set of nodes in $\mathcal{T}$ at level $l$. For each $l$, define $U_l=\{G_{H_w} \mid w \in S_l\}$.  In other words, $U_l$ is the collection of  interval graphs associated with each heavy path at level $l$.  Thus, the vertex set and edge set of $U_l$ is the union of vertex sets and edge sets of $G_{H_w}$ for all $w \in S_l$. 
Clearly, $U_l$ is an interval graph. We next show that the number of vertices in $U_l$ is at most twice the number of vertices in $G$, that is, at most twice the number of paths in ${\cal P}$.  

\begin{lemma}
\label{lem:2nvertices}
Let $P_v$ be a path in ${\cal P}$ and $l$ be a level number in ${\cal T}$. 
There exists at most two nodes $w$ and $w'$ at level $l$ of $\mathcal{T}$ such that $G_{H_w}$ and $G_{H_{w'}}$  have a vertex each corresponding to the paths $P_v \cap H_w$ and $P_v \cap H_{w'}$.  Therefore, the number of vertices in  the interval graph $U_l$ is at most $2n$. 
\end{lemma}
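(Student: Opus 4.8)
The plan is to reduce the statement to a fact about a single path in the heavy path tree $\mathcal{T}$, for which Proposition~\ref{prop:2node} already supplies the ``at most two per level'' bound.

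First I would identify precisely which heavy paths $P_v$ meets. By Lemma~\ref{lem:pathpartition}, $P_v$ decomposes into heavy sub-paths $\pi_1,\ldots,\pi_k$, each $\pi_i$ being a heavy sub-path of some heavy path $H_i \in \mathcal{H}$. Since the heavy paths partition $V(T)$ (Lemma~\ref{lem:vparti}), every node of $P_v$ lies in a unique heavy path, so the heavy paths that $P_v$ intersects are \emph{exactly} $H_1,\ldots,H_k$. Hence the nodes $w$ of $\mathcal{T}$ for which $G_{H_w}$ carries a vertex coming from $P_v$ (equivalently, $P_v \cap H_w \neq \phi$) are precisely $\Phi(H_1),\ldots,\Phi(H_k)$.

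Next I would invoke the construction inside the proof of Lemma~\ref{lem:pathpartition}, where it is shown that these images form a path $P' = \Phi(H_1)\,f_1\,\Phi(H_2)\cdots f_{k-1}\,\Phi(H_k)$ in $\mathcal{T}$: consecutive heavy sub-paths of $P_v$ are separated by a light edge $f_i$, and by the edge-labelling convention on $\mathcal{T}$ each such $f_i$ is an edge of $\mathcal{T}$ joining $\Phi(H_i)$ and $\Phi(H_{i+1})$. Applying Proposition~\ref{prop:2node} to $P'$ (the bound is insensitive to the offset in the level numbering) shows that at a fixed level $l$ the path $P'$ contains at most two nodes. Translating back, there are at most two nodes $w,w'$ at level $l$ with $P_v \cap H_w \neq \phi$ and $P_v \cap H_{w'} \neq \phi$, which is the first assertion.

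For the count, I would note that every vertex of $U_l$ has the form $P_v \cap H_w$ with $w$ a node at level $l$ and $P_v \cap H_w \neq \phi$; since heavy paths at the same level are vertex-disjoint, distinct pairs $(v,w)$ give distinct vertices of $U_l$. By the first part each of the $n$ paths $P_v$ contributes at most two such vertices, so $|V(U_l)| \le 2n$. The only point requiring care is the identification in the second paragraph: I must argue both that $P_v$ meets no heavy path beyond the $H_i$ carrying its heavy sub-paths, and that their images are genuinely consecutive along a \emph{single} path of $\mathcal{T}$ rather than forming an arbitrary node set. Both follow from Lemma~\ref{lem:vparti} and the explicit path $P'$ built in Lemma~\ref{lem:pathpartition}, after which the ``at most two'' bound is immediate from Proposition~\ref{prop:2node}.
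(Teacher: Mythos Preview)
Your proposal is correct and follows essentially the same route as the paper: decompose $P_v$ into heavy sub-paths via Lemma~\ref{lem:pathpartition}, observe that the corresponding heavy paths form a path $P'$ in $\mathcal{T}$, and apply Proposition~\ref{prop:2node} to bound the number of nodes of $P'$ at level $l$ by two. Your write-up is in fact somewhat more careful than the paper's, making explicit the use of Lemma~\ref{lem:vparti} and the injectivity of $(v,w)\mapsto P_v\cap H_w$ at a fixed level.
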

\begin{proof}
We know from Lemma \ref{lem:pathpartition} that the nodes of $P_v$ are partitioned into heavy sub-paths, each of which is contained in a heavy path.  
Since each heavy path corresponds to a node in ${\cal T}$, it follows that  $P_v$ naturally defines a path $P$ in ${\cal T}$.
From Proposition~\ref{prop:2node}, it follows that $P$ has at most two nodes in level $l$.  Thus,  for each level $l$, $P_v$ has a non-empty intersection with  at most two heavy paths whose nodes are at level $l$ in ${\cal T}$.  Consequently, $U_l$ has at most $2n$ vertices.
\end{proof}

\noindent
In Section~\ref{sec:sedconstruction}, we present the data structure to store the set of interval graphs $\{U_l \mid 1 \leq l \leq K\}$ and additional tables to respond to the adjacency and neighborhood queries.

\subsection{Construction of the Space Efficient Data Structure}
\label{sec:sedconstruction}
The main goal of this section is to prove the space complexity part of Theorem \ref{thm:nlog2n}.
Given the $(T, \mathcal{P})$ representation for a path graph $G$ with $n$ vertices, the space-efficient data structure is constructed by the following steps:
\begin{enumerate}
    \itemsep0em 
    \item Compute the heavy path decomposition of clique tree $T$ and the heavy path tree $\mathcal{T}$ from $T$ as guaranteed in Section~\ref{sec:hpd}.
    \item Construct the ordinal clique tree $T$ as explained in Section~\ref{sec:proprocessingT}.
    \item Store the set of interval graphs $\{U_l \mid 1 \leq l \leq K\}$ using~\cite{HSSS} and a table called PIT that stores, for every level $1 \le l \le K$, the labels of vertices in $G_l$ corresponding to paths in $\mathcal{P}$; by Lemma~\ref{lem:2nvertices} there are two labels per path at a level $l$. 
\end{enumerate}

\noindent
 The construction of the data structure takes polynomial time and implementation details are left out. The components of the space-efficient representation are as follows.

\noindent
{\bf Array, $F$.} This is a one dimensional array of length $M$. $F[a]=i$ where $i$ is the index, in $\prec_{{\cal H}}$, of the heavy path which contains $a \in V(T)$.
So to store the heavy paths to which all nodes of $T$ belong we need $O(n \log n)$ bits.
The following function is supported by $F$.
\begin{itemize}
    \item $\texttt{getHeavyPath}(a)$: Returns the heavy path number of $H \in \mathcal{H}$ to which $a \in V(T)$ belongs in constant time. 
\end{itemize}

\noindent
{\bf Array, $L$.} This is a one dimensional array of length $|\mathcal{H}|$. $L[i]=l$ where $l$  is the level in  $\mathcal{T}$  to which heavy path $H_i \in \mathcal{H}$ belongs. Each entry in the array  uses $O(\log \log n)$ bits, since from Lemma~\ref{lem:hpd_prop}, there are $O(\log n)$ levels in $\mathcal{T}$. Since $|\mathcal{H}| \le n$, the total space taken by $L$ is $O(n \log \log n)$ bits. The following function is supported.
\begin{itemize}
    \item $\texttt{getLevel}(j)$: Returns in constant time the level to which heavy path $H_j \in \mathcal{H}$, for $j \in [n]$, belongs in $\mathcal{T}$.
\end{itemize}

\noindent
{\bf Array, $E$.} This is a $K \times 2n$ two dimensional array. Each row corresponds to a level of $\mathcal{T}$ and column corresponds to a vertex in $U_l, 1 \le l \le K$; by Lemma~\ref{lem:2nvertices}, $U_l$ has at most $2n$ vertices. $E[l][v]=i$ where $i$ is the index of the path $P_i \in \mathcal{P}$ that has non-empty  intersection  with  a heavy path $H$ at  level $l$ and $v$ is the vertex in $G_l$ corresponding to $P_i \cap H$. For vertex labels that are not present in $U_l$, $E[l][v]$ stores 0. The total space taken by $E$ is $O(n \log^2 n)$ as there are $2n \log n$ entries and each entry takes $\log n$ bits.

\begin{itemize}
    \item $\texttt{getPathIndex}(l,v)$: Returns  the path index stored at $E[l][v]$ in constant time given $1 \le l \le K$ and $1 \le v \le 2n$ as input. 
\end{itemize}

\noindent
{\bf The Path Intersection Table (PIT).} $PIT$ is an $n \times K$ two dimensional array of records with rows corresponding to paths of $\mathcal{P}$ and columns to the levels of $\mathcal{T}$. For path index $i$ and level $l$, each record consists of a bit and two vertex labels $v,v' \in V(G_l)$ such that $v$ denotes $P_i \cap  H_w$ and $v'$ denotes $P_i \cap H_{w'}$ for some heavy paths $H_w$ and $H_{w'}$ corresponding to nodes $w,w' \in V(\mathcal{T})$ at level $l$. The entries in $PIT$ are as follows:
\begin{enumerate}
    \itemsep0em 
    \item $PIT[i][l]=1$, if the path $P_i$ has non-empty intersection with some heavy path at level $l$ else $PIT[i][l]=0$. Storing this information takes $n \log n$ bits.
    \item $PIT[i][l]$ stores the labels of the two vertices in interval graphs $G_{H_w}$ and  $G_{H_{w'}}$ corresponding to $P_i \cap H_w$ and $P_i \cap H_{w'}$ where $w,w' \in V(\mathcal{T}$ at level $l$. If $P_i$ does not belong to the level $l$ then we store $\texttt{NULL}$.  If $P_i$ belongs to the level $l$ but to only one interval graph, say $G_{H_w}$, then $PIT[i][l]$ stores the label of the vertex in $G_{H_w}$. Each entry of the PIT takes at most $2 \log n$ bits, since by  Lemma~\ref{lem:2nvertices} there are at most two labels for a path per level.
\end{enumerate}
$PIT$ has $n \log  n$ entries and each entry takes $O(\log n)$ bits. Thus, total space needed is $O(n  \log^2 n)$. PIT is constructed as follows. Note that as per Proposition~\ref{prop:consecu}, each heavy path in $\mathcal{H}$ is an interval. 
\begin{enumerate}
    \itemsep0em 
    \item Initialize an array of counters $c[l] \leftarrow 0$ for each level $1 \le l \le K$.
    \item For each  $i \in [n]$ perform the following steps. 
    \begin{enumerate}
        \itemsep0em 
        \item For each  $a \in V(P_i)$ and the heavy path $j \leftarrow \texttt{getHeavyPath}(a)$ such that $a$ is the first vertex of $P_i$ that is also in $H_j$, do the following steps.
        \begin{enumerate}
            \itemsep0em 
            \item $l \leftarrow  \texttt{getLevel}(j)$.
            \item Store new vertex label $c[l] \leftarrow c[l]+1$ corresponding to path $P_i \cap H_j$ at level $l$ at $PIT[i][l]$. 
        \end{enumerate}
    \end{enumerate}
\end{enumerate}

\noindent
The following functions are supported.
\begin{enumerate}
    \itemsep0em 
    \item $\texttt{isPresent}(i,l)$: Returns true if $PIT[i][l]=1$ in constant time for path index $i \in [n]$ and the level $1 \le l \le K$.
    \item $\texttt{getVertices}(i,l)$: Returns the vertex labels stored at $PIT[i][l]$ in constant time for path index $i \in [n]$ and level $1 \le l \le  K$. The vertices are ordered based on the total order of the heavy paths that define them.
\end{enumerate}


\noindent
{\bf The Interval Graph Table (IT).} $IT$ is a one dimensional  array of length $K$. As a consequence of Lemma~\ref{lem:hpig}, $U_l,  1  \le l \le K,$ is an interval graph. For level $l$, $IT[l]$ stores $U_l$ using the method of Acan et al.~\cite{HSSS}. Thus, the total space taken by IT is $O(n \log^2 n)$. IT can be constructed in polynomial time as follows. Populate $IT[l]$, for each level $l$ of $\mathcal{T}$, using the following steps.

\begin{enumerate}
    \itemsep0em 
    \item Let $S_l$ be the set of heavy paths at level $l$. Obtain $S_l$  from $L$ and sort it in non-decreasing order. 
    \item Let \texttt{UNUSED}=0 and \texttt{USED}=1. For each $w \in S_l$ and path $P_i \in \mathcal{P}, i \in [n],$ do the following after initializing the bit vector $B$ of length $2n$ to $\texttt{UNUSED}$.
        \begin{enumerate}
            \itemsep0em 
            \item If $P_i \cap H_w \ne \phi$ then add the vertex returned by $\texttt{getVertices}(i,l)$ that is marked \texttt{UNUSED} in $B$ to $V(G_{H_w})$. Once a vertex corresponding to a path in $PIT$ is added to the interval graph it is marked as \texttt{USED} in $B$.
            \item For every vertex $u$ added to $V(G_{H_w})$, add $u$ into a temporary array $\texttt{TEMP}[w]$ along with $V(P_i \cap H_w)$.
        \end{enumerate}
    \item For every $w \in S_l$, add edges to interval graph $G_{H_w}$ as follows. 
    \begin{enumerate}
        \itemsep0em
        \item For all pairs of vertex labels $u$ and $v$ in $\texttt{TEMP}[w]$ where $u$ corresponds to $P_i \cap H_w$ and $v$ corresponds to $P_j \cap H_w$, add edge $\{u,v\}$ to $E(G_{H_w})$ if $V(P_i \cap H_w) \cap V(P_j \cap H_w) \ne \phi$.
    \end{enumerate}
     
    \item Finally, we get $U_l=\{G_{H_w} \mid w \in S_l\}$.
\end{enumerate}

\noindent
$U_l$ thus obtained can  now be stored  using the data  structure of~\cite{HSSS}. The following functions are supported by IT.
\begin{enumerate}
    \itemsep0em 
    \item $\texttt{adjacentIG}(u,v,l)$: Returns true if $u,v \in V(U_l)$ are adjacent in constant time.  This adjacency check is delegated to  the interval graph representation of~\cite{HSSS}. \cite{HSSS} supports constant time adjacency query.
    \item $\texttt{neighbourhoodIG}(u,l)$: Returns the neighbours of vertex $u \in V(U_l)$ in $O(d_u)$ time where $d_u$ is the  degree of vertex $u$. The query is delegated to  the interval graph representation of~\cite{HSSS}. \cite{HSSS} returns neighbours in constant time per neighbour.
\end{enumerate}


\noindent
{\bf Array $R$.} This is an $n \times 2$ two dimensional array. For $P_i \in \mathcal{P}$, $R[i][1]=a_i$ and $R[i][2]=b_i$ where $a_i$ and $b_i$ are the lowest and highest levels to  which heavy paths $H_w, H_{w'} \in \mathcal{H}$ belong in  $\mathcal{T}$ such that $H_w \cap P_i \ne \phi$ and $H_{w'} \cap P_i \ne \phi$. Since $P_i$ is  a path, it  has  non-empty  intersection  with some heavy path at all  the levels in the range $[a_i,b_i]$. We say, $P_i$ spans the levels from $a_i$ to $b_i$ and denote this range by  an  interval $I_i=[a_i,b_i], 1 \le a_i \le b_i \le K$. Each row of $R$ consists of two values, each taking $O(\log \log n)$ bits since $K \le \log n$. $R$ takes a total space of $O(n \log \log n)$ bits. \\
The following functions are supported by $R$:
\begin{enumerate}
    \itemsep0em 
    \item $\texttt{getEndPoints}(i)$: Returns the end points of $I_i$ for path $P_i \in \mathcal{P}$  in constant time for path index $i \in [n]$.
    \item $\texttt{getMinLevel}(i,j)$: Returns the left end point of $I_i \cap I_j$ in constant time for path indices $i,j \in [n]$ if $I_i \cap I_j \ne \phi$ else returns 0. The function returns:
    \begin{enumerate}
        \itemsep0em 
        \item if $b_i < a_j$ or $b_j < a_i$ then  0
        \item else if $a_i \le b_j$ then $a_i$ 
        \item else if $a_j \le b_i$ then $a_j$
    \end{enumerate} 
\end{enumerate}

\noindent
{\bf Array $A$.} This is a one dimensional array of length $M$. $A[a],  1\le a \le M,$ stores the list of paths that have their lca at node $a \in  V(T)$. A path have only  one lca and it takes $\log n$ bits to store this information as $M \le n$. For $n$ paths it takes $O(n \log n)$ bits. The following function is supported.
\begin{itemize}
    \item $\texttt{getPathsLCA}(a)$: Returns paths in $\mathcal{P}$ with lca at node $a \in V(T)$ in constant time.
\end{itemize}

\noindent
{\bf Heavy path tree $\mathcal{T}$.} The heavy path tree of clique tree $T$ is stored using the method of Lemma~\ref{lem:2ntree} in $\mathcal{T}$. Since $T$ is an ordinal tree, $\mathcal{T}$ is also ordinal. $\mathcal{T}$ takes $2n + o(n)$ bits and supports all the methods of ordinal trees as supported by the data structure of Lemma~\ref{lem:2ntree}.

\noindent
{\bf Array $H$.} This  is a  one dimensional array of length $|\mathcal{H}|$; see Figure~\ref{fig:arrayH} for an example. Let $w \in V(\mathcal{T})$ have $n_w$ children. Contents of $H$  are as follows.
\begin{itemize}
    \itemsep0em
    \item $H[w]$ stores a one dimensional array $C$ of length $n_w$ with a location for each of the children of $w$.
    \item $H[w]=\texttt{NULL}$ if $w$ does not have any children.
\end{itemize}
Since $\mathcal{T}$ is an  ordinal tree, the children of a  node are  ordered. For the $i-$th child of $w$, denoted $c$, with $n_c$ children, $C[i]$ stores a one dimensional array $D$ of length $n_c+1$. Contents of $D$  are as follows.
\begin{itemize}
    \itemsep0em
    \item $D[j][1]=d, 1 \le j \le n_c,$ where $d$ is the $j-$th child of $c$ and $D[j][2]$ contains a list that stores the paths that contain edges $\{d,c\}$ and $\{c,w\}$ where $w,c,d$ belong to consecutive levels $l_1 < l_2 < l_3$, respectively, in $\mathcal{T}$.
    \item $D[n_c+1][1]=\texttt{NULL}$ and $D[n_c+1][2]$ stores the list of paths that contain only $\{c,w\}$ and no light edge incident on $c$ in the sub-tree rooted at $c$.
    \item If $c$  does not have a child then $D[1][1]=\texttt{NULL}$  and $D[1][2]$ contains the list of paths that contain light  edge $\{c,w\}$. 
\end{itemize}
$C$ and $D$ are of size $O(n \log n)$ bits as they store entries for edges of $\mathcal{T}$ which, as a consequence of Remark~\ref{rem:maxn}, is at most $n-1$. Thus, $H$, $C$, and $D$ take a total of $O(n \log n)$ bits. The following function is supported.
\begin{itemize}
    \item $\texttt{getDistinctPaths}(w_1,w_2,w_3)$: Returns,  in constant time,  the  list of paths  that contain light edge $\{w_1,w_2\}$ but not $\{w_2,w_3\}$ where $\{w_1,w_2\},\{w_2,w_3\} \in E(\mathcal{T})$ and $w_1, w_2, w_3$ lie on consecutive levels $l_1<l_2<l_3$, respectively, in $\mathcal{T}$.
\end{itemize}

\begin{figure}[ht]
\centering
\includegraphics[width=1\textwidth]{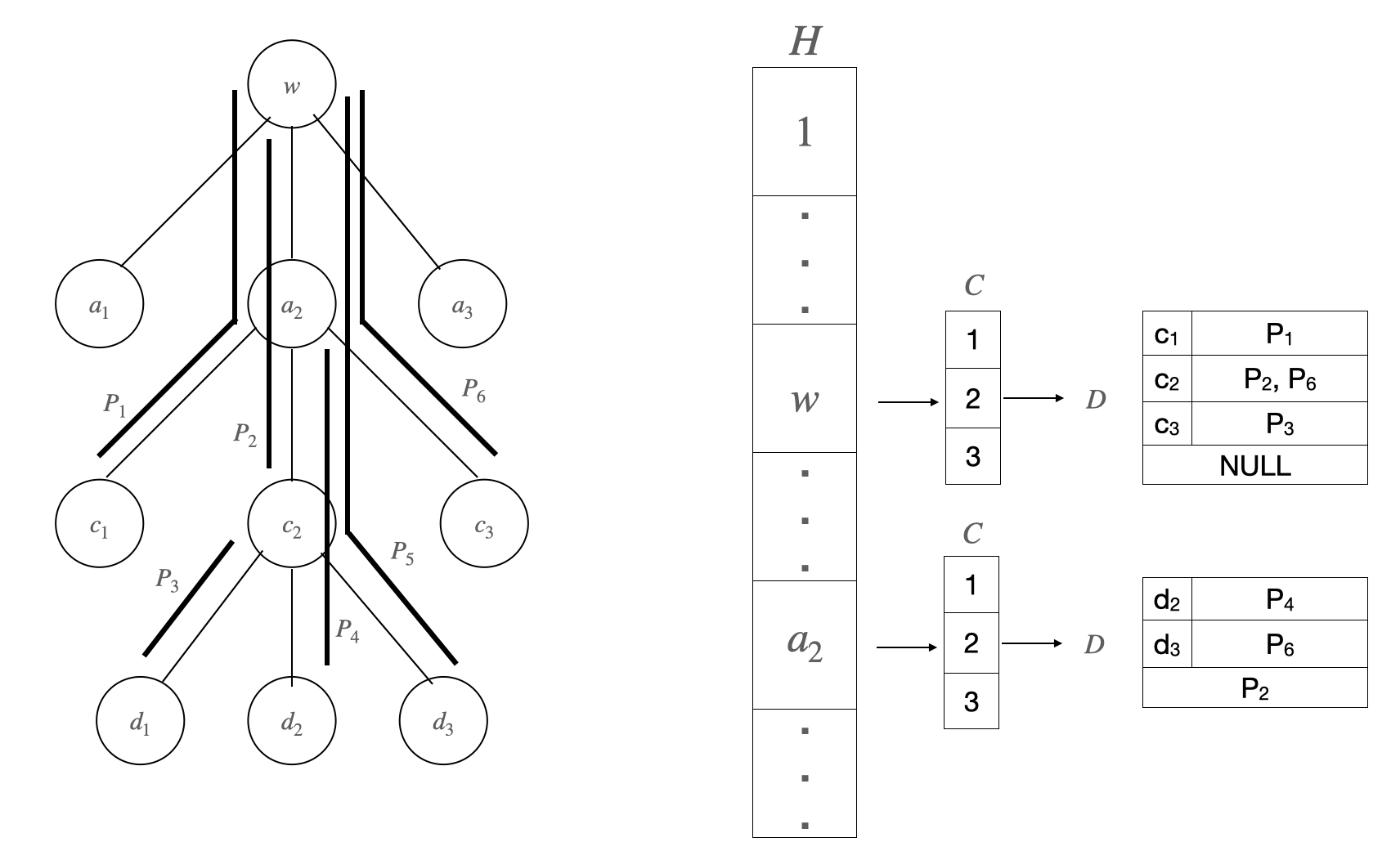}
\caption{A part of an example heavy path tree $\mathcal{T}$, is shown on the left side and array $H$ is shown on the right side. $H[w]$ contains $C$ with three entries corresponding to its children $\{a_1,a_2,a_3\}$. The array $D$ corresponding to  child $a_2$ at $C[2]$ is also shown. The first entry in the list stored at $D$ corresponds to $c_1$ and is associated with a list containing only one entry, $P_1$. This means $P_1$ contains light edge $\{c_1,a_2\}$ and $\{a_2,w\}$. The last entry in $D$ is $\texttt{NULL}$ which implies there is no path that starts at $a_2$ and contains light edge $\{a_2,w\}$. Notice that the $D$ corresponding to child $c_2$ of $a_2$, contains $P_2$ as the last entry.}
\label{fig:arrayH}
\end{figure}

\begin{lemma}
Let $\mathcal{T}$ be the ordinal heavy path tree and  $\{w_1,w_2\},\{w_2,w_3\} \in E(\mathcal{T})$ be two light edges such that $w_1, w_2, w_3$ lie on consecutive levels $l_1<l_2<l_3$, respectively, in $\mathcal{T}$. There exists a function that returns the list of paths that contain $\{w_1,w_2\}$ but not $\{w_2,w_3\}$ in constant time per path returned.    
\end{lemma}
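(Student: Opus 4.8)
The plan is to realize the claimed function as the $\texttt{getDistinctPaths}(w_1,w_2,w_3)$ operation supported by array $H$, and to treat correctness and running time separately. Matching the roles in the description of $H$, set $w=w_1$, $c=w_2$, $d=w_3$. Since $\{w_1,w_2\},\{w_2,w_3\}\in E(\mathcal{T})$ with $l_1<l_2<l_3$, the smaller-level endpoint is the parent, so $w_1$ is the parent of $w_2$ and $w_2$ the parent of $w_3$. The entry $H[w_1]$ holds an array $C$ with one slot per child of $w_1$, and the slot for $w_2$ is $C[i]$ with $i=\texttt{child\_rank}(w_2)$; that slot stores the array $D$ indexed by the children of $w_2$ (plus the extra slot $n_c+1$). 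As $\mathcal{T}$ is stored with the data structure of Lemma~\ref{lem:2ntree}, both $\texttt{child\_rank}(w_2)$ and $\texttt{child\_rank}(w_3)$ are computed in constant time; the latter yields the index $j^{\ast}$ of the slot with $D[j^{\ast}][1]=w_3$. Thus locating $D$, the slot $j^{\ast}$ to be excluded, and the stopping slot $D[n_c+1]$ all take constant time.

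For correctness I would argue that the lists in $D$ partition exactly the paths crossing the light edge $\{w_1,w_2\}$. Let $P$ be such a path. By Lemma~\ref{lem:pathpartition} the nodes of $P$ define a simple path in $\mathcal{T}$, and since this projected path contains $w_1$ and $w_2$ with $w_1$ an ancestor of $w_2$, its apex (the node closest to the root) has level at most $l_1<l_2$. Hence $P$ is descending as it passes through $w_2$: the edge to the parent $w_1$ is already used, so $P$ can leave $w_2$ toward at most one child. Every path crossing $\{w_1,w_2\}$ therefore falls into exactly one of three groups: it stops inside $H_{w_2}$ (recorded in $D[n_c+1][2]$), it continues to $w_3$ (recorded in $D[j^{\ast}][2]$), or it continues to some other child $d'\neq w_3$ (recorded in $D[j][2]$ with $D[j][1]=d'$). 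The paths containing $\{w_1,w_2\}$ but not $\{w_2,w_3\}$ are precisely the first and third groups, i.e. $D[n_c+1][2]\cup\bigcup_{j\neq j^{\ast}}D[j][2]$, which is what the function outputs after skipping slot $j^{\ast}$.

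The running-time claim is the delicate point and the step I expect to be the main obstacle. Emitting $D[n_c+1][2]$ and each nonempty list $D[j][2]$ costs constant time per path actually reported, but a naive scan of $D$ also pays for the \emph{empty} slots, adding an $O(n_c)$ term that cannot be charged to any returned path. I would remove this overhead by threading the nonempty child-lists of each $D$ into a secondary linked list at construction time and recording, for slot $j^{\ast}$, its predecessor and successor on that thread; the query then walks the thread, splices past $w_3$'s list, and never touches an empty slot. With this bookkeeping the only non-output work is the constant-time navigation of the first paragraph, so the whole list is produced in constant time per path returned. This empty-slot elimination is exactly what separates the honest ``constant per reported path'' bound from a weaker ``constant plus number of children'' bound; everything else reduces to the constant-time ordinal-tree primitives of Lemma~\ref{lem:2ntree}.
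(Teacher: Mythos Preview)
Your proposal is correct and follows the same route as the paper: implement the function as $\texttt{getDistinctPaths}$ on array $H$, locate the relevant $D$ via $\texttt{child\_rank}(w_2)$ in the ordinal tree $\mathcal{T}$, and output all lists in $D$ except the one indexed by $w_3$.

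The one point of divergence is exactly the empty-slot issue you flag. The paper's proof does not thread the lists; instead it asserts that ``each list contains at least one neighbour'' and charges the per-list concatenation cost to that neighbour. You chose not to rely on this non-emptiness claim and instead added a secondary linked list over the non-empty slots together with predecessor/successor pointers so that the slot for $w_3$ can be spliced out in $O(1)$. Your version is strictly more defensive: it yields the stated bound even if some $D[j][2]$ are empty, at the price of a little extra preprocessing. The paper's version is shorter but depends on the (not further argued) claim that every slot of $D$ carries at least one path. Aside from this, and a harmless off-by-one in your indexing (the paper uses $C[\texttt{child\_rank}(w_2)+1]$ since $\texttt{child\_rank}$ counts left siblings), the two arguments coincide.
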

\begin{proof}
The function $\texttt{getDistinctPaths}(w_1,w_2,w_3)$ is implemented as follows.
\begin{enumerate}
    \item $r \leftarrow \texttt{child\_rank}(w_2)$. $\texttt{child\_rank}$ is a function supported by ordinal tree $\mathcal{T}$ that returns the number of siblings to the left of $w_2$. It takes constant time  as per Lemma~\ref{lem:2ntree}.
    \item Obtain array $D$ from $C[r+1]$ stored in $H[w_1]$. Let $L'$ denote the list obtained by concatenating the lists stored at $D$ except the list corresponding to $w_3$.
    \item Return $L'$.
\end{enumerate}
$\texttt{child\_rank}$ takes constant time as per Lemma~\ref{lem:2ntree}. Concatenating each list into one takes constant time per list concatenated.  As each list contains at least one neighbour, the time taken is $O(1)$ per path returned.
\end{proof}

\begin{lemma}
\label{lem:sedspace}
There exists an $O(n \log^2 n)$-bit space-efficient data structure for path graphs.
\end{lemma}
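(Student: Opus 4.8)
The plan is to bound the total space by summing the contributions of the individual components constructed in Section~\ref{sec:sedconstruction}, and then to argue that the three tables $E$, $PIT$, and $IT$ dominate at $O(n\log^2 n)$ bits each, while every other structure is asymptotically smaller.

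First I would fix the two global bounds that drive all the estimates: by Remark~\ref{rem:maxn} the clique tree has $M \le n$ nodes and hence $|\mathcal{H}| \le n$ heavy paths, and by Lemma~\ref{lem:hpd_prop} the heavy path tree $\mathcal{T}$ has $K \le \lceil \log n \rceil$ levels. With these in hand the low-order terms are immediate: the heavy path tree $\mathcal{T}$ costs $2n + o(n)$ bits by Lemma~\ref{lem:2ntree}; the arrays $F$ and $A$ each store one $O(\log n)$-bit value per node or per path, for $O(n\log n)$ bits; $L$ and $R$ store level numbers, each an $O(\log\log n)$-bit value, for $O(n\log\log n)$ bits; and the nested arrays $H$, $C$, $D$ store a constant amount of $O(\log n)$-bit bookkeeping per edge of $\mathcal{T}$, of which there are at most $n-1$, contributing $O(n\log n)$ bits.

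Next I would handle the three dominant tables. The array $E$ has $K \cdot 2n = O(n\log n)$ entries of $\log n$ bits each, and $PIT$ has $n \cdot K = O(n\log n)$ entries of $O(\log n)$ bits each; both therefore contribute $O(n\log^2 n)$ bits. For the interval-graph table $IT$ I would invoke the structural results: each $U_l$ is an interval graph (Lemma~\ref{lem:hpig}, applied at the level granularity), and by Lemma~\ref{lem:2nvertices} it has at most $2n$ vertices, so the representation of Acan et al.~\cite{HSSS} stores it in $2n\log(2n)+O(n) = O(n\log n)$ bits; summing over the $K \le \log n$ levels gives $O(n\log^2 n)$ bits for $IT$.

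Finally I would add the pieces: every contribution is dominated by $E$, $PIT$, and $IT$, so the total is $O(n\log^2 n)$ bits, as claimed. The only step requiring real care is the $IT$ bound, where I must use both that the per-level graph is genuinely an interval graph (so that the succinct interval-graph encoding of~\cite{HSSS} applies) and, crucially, that Lemma~\ref{lem:2nvertices} caps its vertex count at $2n$ rather than allowing the count to blow up across the $K$ levels; everything else is a direct arithmetic accumulation of the per-component bounds already established in Section~\ref{sec:sedconstruction}.
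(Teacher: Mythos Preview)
Your proposal is correct and follows the same approach as the paper: you sum the space contributions of the individual components built in Section~\ref{sec:sedconstruction} and observe that $E$, $PIT$, and $IT$ dominate at $O(n\log^2 n)$ bits while all other structures are asymptotically smaller. If anything, you are more thorough than the paper's own proof, which lists only a subset of the components; your explicit use of Lemma~\ref{lem:2nvertices} to bound $|V(U_l)|\le 2n$ before applying the interval-graph encoding of~\cite{HSSS} is exactly the right ingredient for the $IT$ bound.
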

\begin{proof}
The space taken by the components of the space-efficient data structure for path graphs are as follows:
\begin{enumerate}
    \itemsep0em 
    \item Array $F$ that contains the heavy paths to which each node of $T$ belongs takes $O(n \log n)$ bits.
    \item Array $L$ stores the level to  which each heavy path belongs taking $O(n \log \log n)$  bits. $R$ stores the ranges of levels in ${\cal T}$ that a path spans taking $O(n \log \log n)$ bits.
    \item For every level $l$, the path index corresponding to each of the vertex labels in $U_l$ is stored in array $E$ using $O(n \log^2 n)$ bits. $PIT$ stores the levels to which paths in $\mathcal{P}$ belong. For each level $l$, the vertex labels in $U_l$ corresponding to a path in $\mathcal{P}$ is stored using $O(n \log^2 n)$ bits. For each level $l$, $IT$ stores the interval graph $U_l$  taking $O(n \log^2 n)$ bits using the representation of~\cite{HSSS}.
\end{enumerate}
Thus, the entire space-efficient data structure uses $O(n \log^2 n)$ bits. 
\end{proof}

\subsection{Adjacency and Neighbourhood Queries}
Next, we present the algorithms for the adjacency, neighborhood and degree queries and their time complexities. We have the following useful lemmata that we will use in the implementation of the queries.

\begin{lemma}
\label{lem:lo_olp}
Consider paths with indices $i,j \in [n]$ such that $[l_1,l_2]$ is the maximal range of levels with $PIT[i][l]=PIT[j][l]=1$ for all $l \in [l_1,l_2]$. If paths $P_i$ and $P_j$ do not intersect in $U_{l_1}$ then they do not intersect at any level  $l > l_1$.
\end{lemma}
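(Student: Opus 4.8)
The plan is to prove the contrapositive: if $P_i$ and $P_j$ intersect in $U_l$ for some level $l>l_1$, then they already intersect in $U_{l_1}$. By the definition of $U_l$ and Lemma~\ref{lem:hpig}, the statement ``$P_i$ and $P_j$ intersect in $U_l$'' means precisely that there is a node $w$ of $\mathcal{T}$ at level $l$ and a node $x\in V(T)$ with $x\in V(H_w)$ that lies on both $P_i$ and $P_j$; since by Lemma~\ref{lem:vparti} every node of $T$ belongs to a unique heavy path, this is equivalent to saying that the path $Q=P_i\cap P_j$ (which is itself a path in $T$) contains a node whose heavy path sits at level $l$ in $\mathcal{T}$. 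Thus it suffices to show that whenever $Q$ reaches a level $l>l_1$, it also contains a node whose heavy path is at level $l_1$; such a node, lying on both paths inside a common level-$l_1$ heavy path, immediately witnesses adjacency in $U_{l_1}$.

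First I would record how a path of $T$ sits inside $\mathcal{T}$. Writing $p_i=\texttt{lca}(l_i,r_i)$ for the topmost node of $P_i$ in $T$, every node of $P_i$ is a descendant of $p_i$, and descending an arm of $P_i$ crosses light edges only in the downward direction, which can only increase the level; hence $p_i$ lies in the heavy path of minimum level touched by $P_i$, namely level $a_i$. The analogous statement holds for $p_j$ and $a_j$. Moreover, since $P_i$ defines a path in $\mathcal{T}$ by Lemma~\ref{lem:pathpartition}, the set of levels it spans is the contiguous interval $[a_i,b_i]$ (and likewise $[a_j,b_j]$ for $P_j$), so the hypothesis that $[l_1,l_2]$ is the maximal common range gives $l_1=\max\{a_i,a_j\}$.

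The crux is to locate the topmost node $q$ of $Q=P_i\cap P_j$. I claim $q\in\{p_i,p_j\}$: if $q$ were distinct from both, then, being on $P_i$ but not equal to $p_i$, it would be a strict descendant of $p_i$ along an arm of $P_i$, so its parent in $T$ would also lie on $P_i$; the same argument applied to $P_j$ shows the parent lies on $P_j$ as well, whence it lies in $Q$ and is strictly higher than $q$, contradicting the maximality of $q$. Consequently the heavy path of $q$ is at level $a_i$ or $a_j$, hence at level $\le l_1$. Finally, traversing $Q$ downward from $q$ the level is non-decreasing and increases by exactly one at each light edge, so the set of levels met by $Q$ is the contiguous interval from the level of $q$ (which is $\le l_1$) up to its maximum; since this maximum is at least $l>l_1$, the interval contains $l_1$, and the corresponding node finishes the argument. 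The main obstacle is exactly the subclaim $q\in\{p_i,p_j\}$ together with the observation that an $\texttt{lca}$ always occupies the minimum-level heavy path of its path; once these two facts are established, the contiguity of levels along $Q$ completes the proof mechanically.
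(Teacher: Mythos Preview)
Your argument is correct. The contrapositive reformulation together with the subclaim $q\in\{p_i,p_j\}$ is a clean way to pin down the level of the top of $Q=P_i\cap P_j$; the parent-lifting argument you give for this subclaim is valid, and the contiguity of levels along a path in $T$ (levels are non-decreasing along each arm and change by exactly one at light edges) then yields a node of $Q$ at level $l_1$. In fact your setup gives a slightly stronger conclusion than you state: since $q\in P_j$ forces $q$ to be a descendant of $p_j$ (so level $\ge a_j$), and symmetrically, the level of $q$ is \emph{exactly} $l_1=\max\{a_i,a_j\}$, so the contiguity step, while correct, is not even needed.

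This is a genuinely different route from the paper's proof. The paper argues directly by a two-case split at level $l_1$: either $P_i$ and $P_j$ hit distinct heavy paths there, or they hit the same heavy path but on disjoint sub-intervals; in either case the portions of $P_i$ and $P_j$ at levels $>l_1$ lie in disjoint subtrees of $T$, hence cannot meet. Your approach instead localises the $\texttt{lca}$ of the intersection and reads off its level. The paper's case analysis is shorter and more intuitive but leaves implicit why disjoint sub-intervals of a heavy path force disjoint subtrees below (especially when one of the paths contributes two separate heavy sub-paths at level $l_1$); your argument makes the structural reason explicit via the identity $q\in\{p_i,p_j\}$ and is more robust in that respect.
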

\begin{proof}
If paths $P_i,P_j \in \mathcal{P}$ do not intersect in $U_{l_1}$ then there are two possibilities:
\begin{enumerate}
    \itemsep0em 
    \item They intersect two different heavy paths at level $l_1$ in the heavy path tree. In this case, they will not intersect in any  level greater than $l_1$ as they are contained in two different branches of the heavy path tree.
    \item They intersect the same heavy path  at level $l_1$ but different heavy paths at  levels greater than $l_1$. Thus, at any level $l > l_1$ they are in different branches of the heavy path tree and so will not  intersect. 
\end{enumerate}
Hence, the lemma.
\end{proof}

\begin{lemma}
\label{lem:nocommonle}
Let $P=(l,r)$ and $Q=(s,t)$ be two paths in $\mathcal{P}$ with sequence of heavy sub-paths $\Pi_P$ and $\Pi_Q$, respectively. Also, let $V(P) \cap V(Q) \ne \phi$ such that there does not exist a light  edge $e$ such that $e \in E(P) \cap E(Q)$. The following are true.
\begin{enumerate}
    \itemsep0em
    \item There exists exactly one $\pi \in \Pi_P, \pi' \in \Pi_Q$ and heavy path $H=(h_1,h_2)$ such that $V(\pi) \cap V(\pi') \cap V(H) \neq \phi$.
    \item Further, either $E(P) \subseteq E(T_{h_1})$ or $E(Q) \subseteq E(T_{h_1})$ where $T_{h_1}$ is the sub-tree rooted at $h_1$.
    \item The lowest numbered node of $\pi$ is either the ${\normalfont \texttt{lca}(l,r)}$ or it is $h_1$ such that light edge $\{h_1,{\normalfont \texttt{parent}(h_1)}\} \in E(P)$.
    \item Either, ${\normalfont \texttt{lca}}(l,r) \in V(\pi)$ or ${\normalfont \texttt{lca}}(s,t) \in V(\pi)$.
\end{enumerate} 
\end{lemma}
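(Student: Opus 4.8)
The plan is to exploit the hypothesis that $P$ and $Q$ share no light edge in order to pin the intersection $V(P)\cap V(Q)$ inside a single heavy path, and then read off all four assertions from the geometry of tree paths. First I would observe that $P\cap Q$ is a connected path in $T$ and that, by hypothesis, none of its edges can be light: any edge of $P\cap Q$ lies in $E(P)\cap E(Q)$, so a light edge there would be a shared light edge. Hence every edge of $P\cap Q$ is heavy, and a connected set of heavy edges lies inside one heavy path. This produces a unique heavy path $H=(h_1,h_2)$ with $V(P\cap Q)\subseteq V(H)$. Setting $\pi:=P\cap H$ and $\pi':=Q\cap H$ (each connected, hence a genuine maximal heavy sub-path, so $\pi\in\Pi_P$ and $\pi'\in\Pi_Q$) gives $\emptyset\neq V(P\cap Q)\subseteq V(\pi)\cap V(\pi')\subseteq V(H)$, which is assertion~1. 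Uniqueness follows because the heavy paths partition $V(T)$ by Lemma~\ref{lem:vparti}: any node witnessing such a triple intersection lies in $V(P)\cap V(Q)\subseteq V(H)$, forcing the same $H$ and hence the same $\pi,\pi'$. This is consistent with case~2 of Lemma~\ref{lem:charinter}.

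Next I would establish assertion~4 by an apex argument. Let $m$ be the topmost (root-closest, equivalently lowest pre-order-labelled) node of $P\cap Q$. I claim $m=\texttt{lca}(l,r)$ or $m=\texttt{lca}(s,t)$: if $m$ were neither apex, then $m$ is a non-apex node of $P$, so $\texttt{parent}(m)\in V(P)$, and likewise $\texttt{parent}(m)\in V(Q)$; thus $\texttt{parent}(m)\in V(P\cap Q)$ lies above $m$, contradicting the choice of $m$. Since $m\in V(P\cap Q)\subseteq V(\pi)\cap V(\pi')$, whichever apex $m$ equals lies in $V(\pi)$, which is precisely assertion~4 as written (both disjuncts land in $V(\pi)$ because $m$ is common to $\pi$ and $\pi'$).

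Assertion~2 then drops out of assertion~4. Because $h_1$ is the topmost node of $H$, every node of $H$ is a descendant-or-self of $h_1$ (Proposition~\ref{prop:consecu} gives the consecutive labelling), so $V(H)\subseteq V(T_{h_1})$. If $m=\texttt{lca}(l,r)$, then the apex of $P$ lies in $V(H)\subseteq V(T_{h_1})$, and since both arms of $P$ descend from its apex, all of $P$ lies in $T_{\texttt{lca}(l,r)}\subseteq T_{h_1}$, giving $E(P)\subseteq E(T_{h_1})$; symmetrically $m=\texttt{lca}(s,t)$ yields $E(Q)\subseteq E(T_{h_1})$.

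Finally, for assertion~3 I would analyse the top node $a$ of $\pi=[a,b]$ via its parent. If $\texttt{parent}(a)\in V(H)$, then $\texttt{parent}(a)\notin V(P)$ (else it would extend $\pi$ upward inside $H$), forcing $a=\texttt{lca}(l,r)$. If instead $\texttt{parent}(a)\notin V(H)$, then $a=h_1$ and $\{h_1,\texttt{parent}(h_1)\}$ is light by maximality of $H$; here either $a=\texttt{lca}(l,r)$, or $P$ continues above $a$, so $\texttt{parent}(a)\in V(P)$ and hence $\{h_1,\texttt{parent}(h_1)\}\in E(P)$. The hard part, demanding the most care, is exactly this final case analysis: correctly linking ``$a$ is the top of $P$'s trace in $H$'' to the two alternatives while handling the boundary situations ($a=h_1=\texttt{lca}(l,r)$, and $h_1$ being the root of $T$ with no parent edge). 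The remaining work reduces to the partition property (Lemma~\ref{lem:vparti}), the consecutive labelling of heavy paths (Proposition~\ref{prop:consecu}), and the elementary fact that intersections of tree paths are themselves paths.
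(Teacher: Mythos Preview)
Your proposal is correct and covers all four parts, but you take a genuinely different route from the paper. The paper proves the items in the order $1\to 2\to 3\to 4$: for part~2 it argues directly that at most one of $P,Q$ can contain the light edge $\{h_1,\texttt{parent}(h_1)\}$, then derives part~3 by casing on whether $E(P)\subseteq E(T_{h_1})$, and finally obtains part~4 from part~3. You instead prove them in the order $1\to 4\to 2$, with~3 handled independently: your apex argument (the topmost node $m$ of $P\cap Q$ must be one of the two \texttt{lca}'s, else $\texttt{parent}(m)$ would lie above it in the intersection) gives part~4 in one stroke, and part~2 then falls out because the path whose apex equals $m$ is contained in $T_m\subseteq T_{h_1}$. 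What your approach buys is a single clean structural observation (the apex argument) from which parts~4 and~2 both follow immediately, rather than the paper's chain of case analyses; what the paper's approach buys is that part~2 is argued without first needing part~4, which matches the order the statement is written in. Both are sound; your reordering is arguably tidier, and your handling of the boundary cases for part~3 (top of $H$, root of $T$) is more explicit than the paper's.
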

\begin{proof}
The proof is as follows:
\begin{enumerate}
    \itemsep0em
    \item  $V(P) \cap V(Q)$ is contained in some $H \in \mathcal{H}$, since by Lemma~\ref{lem:vparti}, the nodes of $T$ are  partitioned among the heavy  paths. There is exactly one such $H$, as $V(P) \cap V(Q)$ does not have pair of nodes $u,v$ such that $\{u,v\}$ is a light edge in $T$.
    \item We consider two cases here.
    \begin{enumerate}
        \itemsep0em
        \item $h_1$ is the root of $T$: In this case, trivially $E(P) \subseteq E(T_{h_1})$ and $E(Q) \subseteq E(T_{h_1})$ since $T_{h_1}=T$.
        \item $h_1$ is not the root of $T$: If both $P$ and $Q$ do not contain light edge $\{\texttt{parent}(h_1),h_1\}$, then $E(P) \subseteq E(T_{h_1})$ and $E(Q) \subseteq E(T_{h_1})$. Else, since $P$ and $Q$ do  not share a light edge, either $\{\texttt{parent}(h_1),h_1\} \in E(P)$ or $\{\texttt{parent}(h_1),h_1\} \in E(Q)$. Without loss of generality, let it be an element of $E(P)$. Then, $E(Q)  \subseteq E(T_{h_1})$.  Thus, if $P$ and $Q$ do not share  a  light edge, at least one of the paths must be contained inside $T_{h_1}$. 
    \end{enumerate}
    \item Based on the earlier proved statement, we have two cases:
    \begin{enumerate}
        \itemsep0em
        \item $E(P) \subseteq E(T_{h_1})$: In  this case, $\texttt{lca}(l,r) \in V(\pi)$ and is the lowest numbered node in $\pi$.
        \item $E(P) \nsubseteq E(T_{h_1})$: In  this case, $h_1 \in V(\pi)$  and is the lowest numbered node in $\pi$.
    \end{enumerate}
    \item There are two possibilities based on the lowest numbered node in $\pi$. 
    \begin{enumerate}
        \itemsep0em
        \item If the lowest numbered vertex of $\pi$ is the $\texttt{lca}(l,r)$ then the statement follows trivially.
        \item If the lowest numbered vertex of $\pi$ is $h_1$ such that light edge $\{h_1,\texttt{parent}(h_1)\} \in E(P)$ then $\texttt{lca}(s,t) \in V(\pi) \cap V(\pi')$; since $E(Q) \subseteq E(T_{h_1})$. Hence, the result.
    \end{enumerate}
    

\end{enumerate}
\end{proof}

\begin{lemma}
\label{lem:onepathpermaxclique}
For every $a \in  V(T)$ there  exists $P=(l,r)$ in $\mathcal{P}$ such  that $a={\normalfont \texttt{lca}}(l,r)$.
\end{lemma}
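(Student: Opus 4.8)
The plan is to exploit the defining feature of the clique tree: every node of $T$ is a maximal clique of $G$, and for each vertex $v \in V(G)$ the path $P_v$ consists of exactly those nodes of $T$ (maximal cliques) that contain $v$; equivalently, a node $a$ lies on $P_v$ if and only if $v \in a$. The first step is to reduce the claim to a purely structural fact about tree-paths: if $P_v=(l,r)$, then $\texttt{lca}(l,r)$ is precisely the node of $P_v$ closest to the root of $T$ (its \emph{apex}), since the lowest common ancestor of the two endpoints of a path in a tree is the unique highest node lying on that path. Hence it suffices to exhibit, for each given node $a \in V(T)$, some vertex $v$ whose path $P_v$ has apex exactly $a$.

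I would then split into two cases according to whether $a$ is the root of $T$. If $a$ is the root, I pick any vertex $v$ belonging to the maximal clique $a$ (non-empty, being a maximal clique); then $a \in P_v$, and as the root it is trivially the apex of $P_v$, so $\texttt{lca}(l,r)=a$. If $a$ is not the root, set $p=\texttt{parent}(a)$. Since $a$ and $p$ are distinct maximal cliques, maximality immediately forces $a \not\subseteq p$ (a maximal clique cannot be properly contained in another clique), so there exists a vertex $v \in a \setminus p$. For this $v$ we have $a \in P_v$ (as $v \in a$) but $p \notin P_v$ (as $v \notin p$).

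The crucial step, which I expect to require the most care, is concluding from $a \in P_v$ and $p \notin P_v$ that $a$ is the apex of $P_v$. This is a connectivity argument: $P_v$ is a connected subtree of $T$ containing $a$ but not its parent $p$, and every strict ancestor of $a$ can be reached from $a$ only through $p$; were $P_v$ to contain such an ancestor it would, by connectedness, also have to contain $p$, a contradiction. Therefore $P_v$ contains no strict ancestor of $a$, so $a$ is the highest node on $P_v$ and $\texttt{lca}(l,r)=a$. Once the identification ``apex $=$ $\texttt{lca}$ of the endpoints'' is fixed and the incomparability of adjacent cliques is noted (which needs only maximality, not any appeal to minimal separators), the remaining work is routine, and the two cases together establish the lemma for every $a \in V(T)$.
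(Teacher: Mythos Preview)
Your proof is correct and rests on the same key observation as the paper's: maximality of the clique at $a$ forces $a \not\subseteq \texttt{parent}(a)$, so some vertex $v \in a \setminus \texttt{parent}(a)$ exists, and its path $P_v$ then has $a$ as its highest node, i.e.\ $\texttt{lca}(l,r)=a$.

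The only difference is the case split. The paper first treats cliques containing a simplicial vertex (where the singleton path $(a,a)$ works directly) and then handles the remaining cliques via the parent argument; you instead split on whether $a$ is the root, handling the root trivially and every other node uniformly via the parent argument. Your split is cleaner---it avoids invoking simplicial vertices, which are not actually needed here---while the paper's split makes explicit the (irrelevant) special case of a degenerate path. Both routes arrive at the same maximality-plus-connectedness argument, so the difference is organisational rather than substantive.
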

\begin{proof}
Every $a \in V(T)$ corresponds to  a maximal clique of $G$. We categorise maximal cliques of $G$ in the following manner.
\begin{enumerate}
    \itemsep0em
    \item Maximal clique $C \in \mathcal{C}$ contains a simplicial vertex $v \in V(G)$: Let $P_v=(a,a)$  be the path corresponding to  $v$ where  $a \in V(T)$ is the  node corresponding to $C$. Then, $\texttt{lca}(a,a)=a$ and the statment follows.
    \item Maximal clique $C \in \mathcal{C}$ does not contain a simplicial vertex: Since $C$ is a maximal clique, $V(C) \nsubseteq V(C')$ for $C' \in \mathcal{C}$ and  $C \neq C'$. Let $a \in V(T)$  be the node corresponding to $C$. If all the vertices of $C$ correspond to paths that contain $\texttt{parent}(a)$ then $C \subseteq  C'$ where $C'$ is the maximal clique corresponding to $\texttt{parent}(a)$. Thus, at  least one of the following must be true:
    \begin{enumerate}
        \itemsep0em
        \item there is a path containing $a$ that starts at a descendant  of $a$ and ends at another descendant of $a$, or
        \item there is a path that starts at $a$ and ends at a descendant of $a$.
    \end{enumerate}
    Let that path be $P=(l,r)$. Then, $\texttt{lca}(l,r)=a$.
    
\end{enumerate}
\end{proof}

\noindent 
{\bf Adjacency query.} The adjacency query of Algorithm~\ref{alg:sedadjacency} takes the index $i,j$ of paths $P,Q \in \mathcal{P}$ and the space-efficient representation constructed in Section~\ref{sec:sedconstruction} as input and checks if $P_i,P_j \in \mathcal{P}$ have  a non-empty intersection. 

\begin{algorithm}[ht!]
\label{alg:sedadjacency}
\caption{For path graph $(T,\mathcal{P})$ and two paths $P_i, P_j \in \mathcal{P}$, the function checks if $V(P_i) \cap V(P_j) \neq \phi$.}
\DontPrintSemicolon
    \SetKwFunction{Fadjacency}{adjacency}

    \SetKwProg{Fn}{Function}{:}{}
    \Fn{\Fadjacency{$i,j$}}{
        $l={\normalfont \texttt{getMinLevel}}(i,j)$\;
        \If{$l \neq 0$}{
            $\{u_1,u_2\} \leftarrow {\normalfont \texttt{getVertices}}(i,l)$\;
            $\{v_1,v_2\} \leftarrow {\normalfont \texttt{getVertices}}(j,l)$\;
            \If{for any pair $\{a,b\} \in  \{\{u_1,v_1\},\{u_1,v_2\},\{u_2,v_1\},\{u_2,v_2\}\}$ ${\normalfont \texttt{adjacentIG}}(a,b,l)$ is true}{
                \KwRet true\;
            }
        }
        \KwRet false\;
    }    
\end{algorithm} 

\begin{lemma}
\label{lem:adjsed}
Given path indices $i,j \in [n]$ and the space-efficient representation as input, the function ${\normalfont\texttt{adjacency}}(i,j)$ checks if paths corresponding to $i$ and $j$ have  a non-empty intersection in constant time.    
\end{lemma}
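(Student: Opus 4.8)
The plan is to prove the lemma in two parts: the correctness of $\texttt{adjacency}(i,j)$ and its constant running time. For correctness I would show that the algorithm returns true if and only if $V(P_i) \cap V(P_j) \neq \phi$, and the crucial reduction is that it suffices to test for an intersection at the single level $l = \texttt{getMinLevel}(i,j)$, the smallest level spanned by both $P_i$ and $P_j$ (the left endpoint $\max(a_i,a_j)$ of $I_i \cap I_j$). First I would record that, since $U_l$ is the disjoint union of the interval graphs $G_{H_w}$ over the heavy paths $H_w$ at level $l$, every edge of $U_l$ lies within a single $G_{H_w}$ and therefore witnesses a genuine overlap of $P_i \cap H_w$ and $P_j \cap H_w$. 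This yields the soundness direction immediately: if some call $\texttt{adjacentIG}(a,b,l)$ succeeds, then $a$ and $b$ correspond to $P_i \cap H_w$ and $P_j \cap H_w$ for a common $H_w$, so the two paths share a node of $H_w$ and hence intersect in $T$.

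For the completeness direction I would argue as follows. Assuming $V(P_i) \cap V(P_j) \neq \phi$, Lemma~\ref{lem:allinters} supplies a heavy path $H$ at some level $l^*$ that meets $V(P_i) \cap V(P_j)$; since both paths intersect $H$, we have $l^* \in I_i \cap I_j$, so this range is non-empty, $\texttt{getMinLevel}$ returns some $l \neq 0$ with $l \le l^*$, and the sub-paths $P_i \cap H$ and $P_j \cap H$ share a node and are thus adjacent in $G_H \subseteq U_{l^*}$. The remaining gap is to push this adjacency down from $l^*$ to the minimum common level $l$, and this is where Lemma~\ref{lem:lo_olp} does the work: its contrapositive states that if the paths intersect at some level above $l_1$ then they already intersect in $U_{l_1}$, and since $l$ is exactly the smallest level with $PIT[i][\cdot]=PIT[j][\cdot]=1$, taking $l_1 = l$ forces adjacency in $U_l$ (and if $l^* = l$ this is direct). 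I expect this step --- converting an intersection that may occur anywhere along the two heavy-path-tree walks into one visible at their lowest common level --- to be the main obstacle, and it rests entirely on the branch structure of $\mathcal{T}$ encoded by Lemma~\ref{lem:lo_olp}.

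Having localized the intersection to $U_l$, I would complete correctness by verifying that lines~4--7 actually detect it: by Lemma~\ref{lem:2nvertices} each of $P_i$ and $P_j$ contributes at most two vertices to $U_l$, returned by $\texttt{getVertices}(i,l)$ and $\texttt{getVertices}(j,l)$, and any witnessing edge joins a $P_i$-vertex to a $P_j$-vertex, so it is one of the four pairs tested in line~6. Finally, for the time bound I would note that $\texttt{getMinLevel}$ (array $R$), the two $\texttt{getVertices}$ calls ($PIT$), and each of the four $\texttt{adjacentIG}$ calls (delegated to the constant-time adjacency query of~\cite{HSSS} stored in $IT$) all run in constant time, and only a constant number of them are performed; hence $\texttt{adjacency}(i,j)$ runs in $O(1)$ time.
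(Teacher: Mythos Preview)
Your proposal is correct and follows essentially the same approach as the paper: reduce, via Lemma~\ref{lem:lo_olp}, to checking adjacency only at the single level $l=\texttt{getMinLevel}(i,j)$, then test the at most four vertex pairs in $U_l$ and observe that every primitive used runs in $O(1)$ time. Your write-up is in fact more explicit than the paper's---you separate soundness from completeness and invoke Lemmas~\ref{lem:allinters} and~\ref{lem:2nvertices} to justify the intermediate steps---but the underlying argument is the same.
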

\begin{proof}
Due to Lemma~\ref{lem:lo_olp} it is only required to check if $P_i$ and $P_j$ intersect in level $\texttt{getMinLevel}(i,j)$. If $\texttt{getMinLevel}(i,j) \ne 0$ then in Line 6 of Algorithm~\ref{alg:sedadjacency} we check if any one of the four pairs of vertex labels paths $P_i$ and $P_j$ in  interval graph $U_l$ are adjacent. The  vertex labels for paths $P_i$ and $P_j$ in $U_l$ are obtained using the function $\texttt{getVertices}(i,l)$ and $\texttt{getVertices}(j,l)$, respectively in Lines 4 and 5.  The adjacency check in the interval graph is done using the function $\texttt{adjacenctIG}$ in Line 6. Since $\texttt{getMinLevel}$, $\texttt{getVertices}$  and  $\texttt{adjacenctIG}$ are constant time functions  adjacency check can be completed in constant time.    
\end{proof}

\noindent
{\bf Neighbourhood query.} The neighbourhood  query can be implemented as  shown in Algorithm~\ref{alg:sedneighbourhood2}. It takes the path index and the space-efficient representation constructed in Section~\ref{sec:sedconstruction} as input and lists all  the paths that have a non-empty intersection with the input path. 


\begin{algorithm}[ht!]
\label{alg:sedneighbourhood2}
\caption{For space-efficient representation of path graph $(T,\mathcal{P})$ and an input path $P_i \in \mathcal{P}$, the function returns its neighbours.}
\DontPrintSemicolon
    \SetKwFunction{Fneighbourhood}{neighbourhood}

    \SetKwProg{Fn}{Function}{:}{}
    \Fn{\Fneighbourhood{$i$}}{
        Set $N_i,E_1,E_2$ to $\texttt{NULL}$\;
        $[l,r] \leftarrow \texttt{getEndPoints}(i)$ \;
        $p \leftarrow \texttt{lca}(l,r)$\;
        \While{$l \neq p$}{
            Add $\texttt{getPathsLCA(l)}$ to $N_i$ \;
            $c \leftarrow \texttt{parent}(l)$ \;
            \If{${\normalfont \texttt{getHeavyPath}}(l)\ne {\normalfont \texttt{getHeavyPath}}(c)$}{
                Add $\{c,l\}$ to end of $E_1$\;
            }
            $l \leftarrow c$\;
        }
        $h \leftarrow \texttt{getHeavyPath}(l)$\;
        $L \leftarrow \texttt{getLevel}(h)$\;
        $\{v_1,v_2\} \leftarrow \texttt{getVertices}(i,L)$\;
        Add $\texttt{neighbourhoodIG}(v_1,L)$ to $N_i$ \;
        \While{$r \neq p$}{
            Add $\texttt{getPathsLCA(r)}$ to $N_i$ \;
            $c \leftarrow \texttt{parent}(r)$ \;
            \If{${\normalfont \texttt{getHeavyPath}}(r)\ne {\normalfont\texttt{getHeavyPath}}(c)$}{
                Add $\{c,r\}$ to end of  $E_2$\;
            }
            $r \leftarrow c$\;
        }
        Concatenate $E_2$ to the end of $E_1$ and assign it to $E$\;
        $e' \leftarrow \texttt{NULL}$\;
        \ForEach{$e=(w^{e}_1,w^{e}_2)$ in $E$}{
            \If{$e' = {\normalfont \texttt{NULL}}$}{
                Add $\texttt{getDistinctPaths}(w_1^{e},w_2^{e},{\normalfont \texttt{NULL}})$ to $N_i$\;
            }
            \Else{
                Add $\texttt{getDistinctPaths}(w_1^{e},w_2^{e},w_2^{e'})$ to $N_i$\;
            }
            $e' \leftarrow e$\;
        }
    }    
\end{algorithm} 

\begin{lemma}
\label{lem:sedneighbourhood}
Given the space-efficient data structure for $G$ and the index $i$  of path $P \in \mathcal{P}$ as input, ${\normalfont \texttt{neighbourhood}}(i)$ returns the neighbours of $P$ in $O(d)$ time where $d$ is the degree of $P$.   
\end{lemma}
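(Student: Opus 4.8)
The plan is to show that Algorithm~\ref{alg:sedneighbourhood2} enumerates each neighbour of $P=(l,r)$ exactly once, and that the total work is proportional to the number of neighbours. The skeleton of the correctness argument is the dichotomy of Lemma~\ref{lem:charinter}: a path $Q=(s,t)$ intersects $P$ either (i) by sharing a light edge of $T$ with $P$, or (ii) by meeting $P$ inside a single heavy path without sharing any light edge. I would assign each case to a distinct enumeration source in the algorithm: case (i) to the \texttt{getDistinctPaths} calls issued over the light edges of $P$ collected in $E_1$ and $E_2$, and case (ii) to the combination of the \texttt{getPathsLCA} calls made while walking $l\to p$ and $r\to p$, together with the single \texttt{neighbourhoodIG}$(v_1,L)$ call at the level $L$ of the heavy path $H_p$ containing $p=\texttt{lca}(l,r)$.

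For completeness in case (ii) I would invoke Lemma~\ref{lem:nocommonle}. It localizes $V(P)\cap V(Q)$ to a unique heavy path $H$ and, in part~4, forces either $\texttt{lca}(l,r)\in V(\pi)$ or $\texttt{lca}(s,t)\in V(\pi)$, where $\pi=P\cap H$. Since $V(\pi)\subseteq V(P)$, the second alternative places $Q$'s apex $\texttt{lca}(s,t)$ on a node of $P$, so $Q$ is returned by the \texttt{getPathsLCA} call at that node during one of the two upward walks; the first alternative forces $H=H_p$ (because $p$ is the shallowest node of $P$, so its heavy path is the unique level-$L$ node on the $\mathcal{T}$-path of $P$, by Proposition~\ref{prop:2node}), so $Q\cap H_p$ overlaps the interval $P\cap H_p=v_1$ and $Q$ is returned by \texttt{neighbourhoodIG}$(v_1,L)$. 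Case (i) completeness is immediate: every light edge of $P$ is appended to $E_1$ or $E_2$ during the walks, and \texttt{getDistinctPaths} on that edge returns every path crossing it.

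The hard part will be disjointness --- proving no neighbour is emitted twice, which is what allows the procedure to avoid explicit de-duplication and achieve $O(d)$ rather than $O(d\log n)$. Three overlaps must be ruled out. First, a path crossing a maximal run of consecutive light edges of $P$ must be counted once; here I would verify that processing $E=E_1\cdot E_2$ in order while excluding, via the third argument of \texttt{getDistinctPaths}, the previously processed (one level deeper) light edge attributes each such $Q$ to the single deepest light edge of $P$ it crosses, and I would separately check the $E_1$/$E_2$ junction at $H_p$, where the ``previous edge'' lies on the opposite side of $p$. Second, because $\texttt{lca}$ is unique, the \texttt{getPathsLCA} calls at distinct nodes never collide with one another. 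The genuinely delicate overlap is the third: a path $Q$ whose apex $\texttt{lca}(s,t)$ lies on $P\cap H_p$ strictly below $p$ is returned by \texttt{getPathsLCA} \emph{and} meets the interval $v_1$, hence is also a candidate for \texttt{neighbourhoodIG}$(v_1,L)$; reconciling these two sources along $H_p$ so that such a $Q$ is attributed to exactly one of them is where I expect the argument to require the most care, leaning on Lemma~\ref{lem:nocommonle}(3) to pin down which endpoint of $\pi$ is the apex.

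Finally, for the $O(d)$ time bound I would argue that every source emits only true neighbours --- \texttt{getPathsLCA}$(x)$ for $x\in V(P)$ returns paths through $x\in V(P)\cap V(Q)$, and the other two sources return paths that provably intersect $P$ --- so no output is wasted. The one non-output-sensitive cost is the length of the two upward walks, but Lemma~\ref{lem:onepathpermaxclique} guarantees every visited node is the apex of at least one path, which is a distinct neighbour counted exactly there; hence each walk step yields at least one neighbour and the walk lengths are $O(d)$. Combined with the constant cost per emitted path of \texttt{neighbourhoodIG} (from~\cite{HSSS}) and \texttt{getDistinctPaths}, and the fact (Lemma~\ref{lem:lo_olp}) that intersections need only be tested at the shallowest shared level, the whole procedure runs in $O(d)$ time.
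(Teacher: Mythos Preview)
Your overall skeleton matches the paper's: the dichotomy of Lemma~\ref{lem:charinter}, the use of Lemma~\ref{lem:nocommonle} to pin down case~(ii), and the appeal to Lemma~\ref{lem:onepathpermaxclique} to bound the walk lengths by $O(d)$ are all exactly how the paper proceeds. The gap is your disjointness target. You set out to prove that each neighbour is emitted \emph{exactly once}, and you correctly identify the delicate overlap --- a path $Q$ whose apex lies on $P\cap H_p$ strictly below $p$ is a candidate for both \texttt{getPathsLCA} and \texttt{neighbourhoodIG}$(v_1,L)$ --- but you then expect to ``reconcile'' these sources so that $Q$ is attributed to only one. That reconciliation does not exist in Algorithm~\ref{alg:sedneighbourhood2}: both sources emit $Q$, and the algorithm contains no mechanism to suppress one of them. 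So the exact-disjointness claim is false for this algorithm, and an argument aimed at it will not close.

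The fix, and what the paper actually does, is to relax the goal to ``each neighbour is emitted at least once and at most a constant number of times.'' Constant-factor over-counting is already enough for $O(d)$: your worry that without exact disjointness one only gets $O(d\log n)$ is unfounded, because the number of sources per neighbour is bounded by an absolute constant, not by the number of levels. Concretely, the paper observes that (a) among the \texttt{getDistinctPaths} calls, the third-argument exclusion keeps the count to one per light-edge-sharing neighbour; (b) a neighbour $Q$ with no shared light edge and $\texttt{lca}(s,t)\in V(\pi_1)$ may be counted twice (once by \texttt{getPathsLCA}, once by \texttt{neighbourhoodIG}); and (c) a neighbour that both shares a light edge and has its apex on $P$ may likewise be over-counted, but still only $O(1)$ times. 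With that weaker invariant, the rest of your argument --- every emitted item is a true neighbour, and the walk lengths are $O(d)$ by Lemma~\ref{lem:onepathpermaxclique} --- goes through unchanged. (Minor note: Lemma~\ref{lem:lo_olp} is not used here; it belongs to the adjacency query.)
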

\begin{proof}
We will prove that Algorithm~\ref{alg:sedneighbourhood2} enumerates neighbours of $P$ at least once and at most a constant number of times. It follows from Lemma~\ref{lem:charinter} that intersecting paths are of two types, namely, ones with no common light edge and ones with at least one light edge. We have the following cases.
\begin{enumerate}
    \item Neighbours that share no light edge with $P$: Let $\Pi$ be the set of heavy sub-paths of $P$. From Lemma~\ref{lem:nocommonle}, neighbours with no common edges with $P$ are characterised by $\texttt{lca}(s,t) \in V(\pi)$ and/or $\texttt{lca}(l,r) \in V(\pi)$ where $\pi \in  \Pi$. In Lines 5 to  10 and Lines 15 to 20 of Algorithm~\ref{alg:sedneighbourhood2}, the paths with lca in any node $u \in V(\pi)$ are added to $N_i$ using the function $\texttt{getPathsLCA}$. Further, in Lines 11 to 14, neighbours of $P$, for instance,  $Q$ such  that $V(P) \cap V(Q) \cap V(H) \neq \phi$ such that $H \in \mathcal{H}$ and $\texttt{lca}(l,r) \in V(H)$, are added to $N_i$. Thus, neighbours that share no light edge with $P$ will be counted at least once. A path $Q$  that has lca in $V(\pi)$ for a $\pi \in \Pi$ such that $\texttt{lca}(l,r) \in V(\pi)$ will be counted at most twice. 
    \item Neighbours that share at least one light edge with $P$: Let $p=\texttt{lca}(l,r)$. In Lines 23 to 28 of Algorithm~\ref{alg:sedneighbourhood2}, the light edges that are encountered as we traverse from $l$ to $p$ and  $r$  to $p$, respectively, are considered. $\texttt{getDistinctPaths}$ is used to add paths that contain these light edges to $N_i$. $\texttt{getDistinctPaths}$ do not repeat paths that are counted on light edges already visited as $P$ is traversed from $l$ to $p$.  Also, $\texttt{getDistinctPaths}$ do not repeat paths that are counted on light edges already visited as $P$ is traversed from $r$ to $p$. Thus, every neighbour sharing a light edge with $P$ is counted exactly once. 
\end{enumerate}
Some neighbours of $P$ can share a light edge with it and also satisfy, for some $\pi \in \Pi$, $\texttt{lca}(s,t) \in V(\pi)$ or $\texttt{lca}(l,r) \in V(\pi)$. In this case too, they will be over-counted at most a constant number of times. \\
Functions $\texttt{getEndPoints}$, $\texttt{lca}$, $\texttt{getPathsLCA}$, $\texttt{getHeavyPath}$, $\texttt{getLevel}$, $\texttt{neighbourhoodIG}$, and $\texttt{getDistinctPaths}$ are constant time functions. Loops at Line 5 and 15 repeat a maximum of $O(d)$ times since as per Lemma~\ref{lem:onepathpermaxclique}, each node in $V(P)$ is a maximal clique that contributes at least one distinct neighbour. By the same argument, the loop at Line 23 repeats $O(d)$ times as the number of edges in $P$ is $O(d)$. Hence, the time complexity of the neighbourhood query is $O(d)$.
\end{proof}

\noindent
{\bf Degree query.} The degree of each vertex can be stored using $n \log n$ bits and the degree query can be solved in constant time.

\noindent
\begin{proof}[Proof of Theorem~\ref{thm:nlog2n}]
Lemma~\ref{lem:sedspace} shows that there exists an $O(n \log^2 n)$ bit space-efficient data structure for path graphs. Given this representation as input Lemma~\ref{lem:adjsed} shows that adjacency between  vertices can be checked in constant time. Similarly, using this representation, Lemma~\ref{lem:sedneighbourhood} shows an $O(d)$ neighbourhood query. Also, degree query is satisfied in constant time by accessing it from an array. Thus, we conclude Theorem~\ref{thm:nlog2n}.
\end{proof}

\section{Conclusion} 
In this work, we designed efficient data structures for path graphs. In the future, we believe some of the following directions would be interesting to explore regarding path graphs.
\begin{enumerate}
    \itemsep0em 
    \item The best implementations of BFS and DFS are of significant interest as many other problems for path graphs use them as subroutines. In the work by Acan et al.~\cite{HSSS}, for interval graphs we can see that the representation permits very efficient BFS and DFS algorithms. Can we perform BFS/DFS efficiently on path graphs assuming our representation?
    \item Can we show time/space trade-off lower bounds for our data structures? More specifically, can we prove tight space lower bound of redundancy with respect to query time? 
    \item Are there other graph classes amenable to our techniques for designing succinct data structures?
\end{enumerate}


\Section{Reference}
\bibliographystyle{IEEEbib}
\bibliography{refs}

\end{document}